%%%%%%%%%%%%%%%%%%%%%%%%%%%%%%%%%%%%%%%%%%%%%%%%%%%%%%%%%%%%%% 
% 
% Heterophily dynamics and enhanced consensus
% 
%%%%%%%%%%%%%%%%%%%%%%%%%%%%%%%%%%%%%%%%%%%%%%%%%%%%%%%%%%%%%% 
\documentclass[11pt]{amsart}
\usepackage{amssymb,mathrsfs,graphicx}
\usepackage{hyperref,color}
\usepackage{amsmath}
\usepackage{amsfonts}
\usepackage{upref}
\usepackage{float}
%\restylefloat{figure}

%%%%%%%%%%%%%%%%%%%% Format  %%%%%%%%%%%%%%%%%%%%
\topmargin-0.1in
\textwidth6.2in
\textheight8.5in
\oddsidemargin0in
\evensidemargin0in

%%%%%%%%%%%%%%%%%%%% Environnement  %%%%%%%%%%%%%%%%%%%%
%\theorembodyfont{\normalfont\slshape} % on remet la config normal

\newtheorem{theorem}{Theorem}[section]

\newtheorem{corollary}[theorem]{Corollary}
\newtheorem{proposition}[theorem]{Proposition}

\newtheorem*{mydefinition}{Definition}
\newtheorem*{myremark}{Remark}

%% -----  macro  -----%%

\def\llbracket{[}
\def\rrbracket{]}
\def\bp{{\mathbf p}}
\def\bq{{\mathbf q}}
\def\bx{{\mathbf x}}
\def\by{{\mathbf y}}
\def\bu{{\mathbf u}}
\def\bv{{\mathbf v}}
\def\bw{{\mathbf w}}
\def\bz{{\mathbf z}}
\def\br{{\mathbf r}}

\newcommand{\dm}[1]{{\boldsymbol [}\hspace{0.01cm}#1\hspace{0.01cm}{\boldsymbol ]}}  %diameter
\newcommand{\Dm}[1]{\begin{bmatrix}\displaystyle #1\end{bmatrix}} %{{\Large [ }\hspace{0.01cm}#1\hspace{0.01cm}{\Large ] }}  %diameter
\newcommand{\conv}[1]{\overline{#1}}   					% ave{\bp}_i = \sum a_{ij} p_j
\newcommand{\ave}[1]{{\langle}\hspace{0.01cm}#1\hspace{0.01cm}{\rangle}} %average
\newcommand{\con}[1]{{#1}^\infty}  % the consensus limit

\def\hf{\frac{1}{2}}
\def\D{{R}}	      % support of the compactly support \phi
\def\Diam{{D_\infty}} % the active support  in non-symmetric case
\def\C{{\mathcal C}}  % cluster C
\def\deg{{\sigma}}         % degree of each agent: d_i=\sum_j a_{ij}
\def\P{{\mathcal P}}      % the crowd of agents \{p_i\}
	% cloud of points
   
		%min and Max value of stepsize in a chain

\def\Supp{Supp\{\phi(\cdot)\}}

\def\nei{q}           % # of active neighbors in nearest neighbor model
\newcommand{\ceta}[1]{\eta_{{}_{#1}}}  % concentration factor of diameter
\newcommand{\Ceta}{\Theta}           % and its integrand
\newcommand{\veep}[1]{\vee_{#1}} % fluctuations

%%%%%%%%%%%%%%%%%%%% Headings  %%%%%%%%%%%%%%%%%%%%
\usepackage{fancyhdr}
\pagestyle{fancy}

\fancyhf{} \fancyhead[LE,RO]{\bfseries\thepage} \fancyhead[LO]{\bfseries\rightmark}
\fancyhead[RE]{\bfseries\leftmark}

\addtolength{\headheight}{0.5pt} \fancypagestyle{plain}{
  \fancyhead{}
  
}

%%%%%%%%%%%%%%%%%%%% Numbers equation  %%%%%%%%%%%%%%%%%%%%

% \renewcommand{\theequation}{\thesection.\arabic{equation}}

\def\paragraph#1{{\bf #1\ }}

%%%%%%%%%%%%%%%%%%%% Macros  %%%%%%%%%%%%%%%%%%%%
\newcommand{\expo}{\mathrm{e}}

%% unicode
\usepackage[utf8]{inputenc}
%\input{$HOME/Computer/LaTeX/unicode_character} % $HOME

%%%%%%%%%%%%%%%%%%%%%%%%%%%%%%%%%%%%%%%%%%%%%%%%%%%%%%%%%%%%%%%%%%%%%%
%%%%%%%%%%%%%%%           End  preambule           %%%%%%%%%%%%%%%%%%%
%%%%%%%%%%%%%%%%%%%%%%%%%%%%%%%%%%%%%%%%%%%%%%%%%%%%%%%%%%%%%%%%%%%%%%

\author[Sebastien Motsch]{Sebastien Motsch}
\address[Sebastien Motsch]{\newline
	School of Mathematical \& Statistical Sciences\newline
	Arizona State University\newline	
	Tempe, AZ 85287 USA}
\email[]{Sebastien.Motsch@asu.edu}
\urladdr{http://www.seb-motsch.com}
%
%Center for Scientific Computation And Mathematical Modeling (CSCAMM)\newline
%  University of Maryland, 
%  College Park, MD 20742 USA}
%\email[]{smotsch@cscamm.umd.edu}
%\urladdr{http://www.seb-motsch.com}

\author[Eitan Tadmor]{Eitan Tadmor}
\address[Eitan Tadmor]{\newline
  Center for Scientific Computation And Mathematical Modeling (CSCAMM)\newline
   and \newline
  Department of Mathematics,  Institute for Physical Science and Technology\newline
  University of Maryland, 
  College Park, MD 20742 USA}
\email[]{tadmor@cscamm.umd.edu}
\urladdr{http://www.cscamm.umd.edu/tadmor}

%%%%%%%%%%%%%%%%%%%%%
\begin{document}
%%%%%%%%%%%%%%%%%

\date{\today}

\subjclass{92D25,74A25,76N10}
\keywords{Agent-based models, self-alignment, heterophilious dynamics, clusters, consensus, flocking, active sets, connectivity of graphs, mean-field limits, kinetic equations, hydrodynamics}

\thanks{\textbf{Acknowledgments.} S.M. would like to thank the support of the Center for Scientific Computation And Mathematical Modeling (CSCAMM) where this research was performed. 
We thank the anonymous referee who brought to our  attention several references which helped improving an earlier version of the paper.
The work is supported by NSF grants DMS10-08397, RNMS11-07444 (KI-Net) and ONR grant N00014-1210318.}

\title[Heterophilious dynamics enhances consensus]{Heterophilious dynamics enhances consensus}

\setcounter{page}{1}

\begin{abstract}
We review a general class of models for self-organized dynamics based on alignment. 
The  dynamics of  such systems is   governed solely by interactions among individuals or ``agents'', with the tendency to adjust to their `environmental averages'. This,  in turn, leads to the formation of clusters, e.g., colonies of ants, flocks of birds, parties of people, rendezvous in mobile networks, etc.  A natural question which arises in this context is to understand when and how clusters emerge through the  self-alignment of agents, and what type of ``rules of engagement'' influence the formation of such clusters. Of particular interest to us are cases in which  the self-organized behavior tends to concentrate into \emph{one} cluster, reflecting  a \emph{consensus} of opinions, \emph{flocking} of birds, fish or cells, \emph{rendezvous} of mobile agents, and in general, concentration of other traits intrinsic to the dynamics.\newline
Many standard models for self-organized dynamics in social, biological and physical science assume that the intensity of alignment increases as agents get closer,  reflecting a common tendency  to align with those who think or act alike. Moreover, ``Similarity breeds connection,'' reflects our intuition  that  increasing the intensity of  alignment  as the difference  of positions  decreases, is more likely to lead to a consensus. We argue here that the converse is true: when the dynamics is driven by local interactions, it  is more likely to approach a consensus when the interactions among  agents \emph{increase} as a function of their difference in position. \emph{Heterophily} --- the tendency to bond more with those who are different rather than with those who are similar, plays a decisive r\^{o}le in the process of clustering.  We point out that the number of clusters in heterophilious dynamics \emph{decreases} as the heterophily dependence among agents increases. 
In particular, sufficiently strong heterophilious interactions enhance consensus. 
\end{abstract}

\maketitle

\centerline{\date}

\setcounter{tocdepth}{1}

\vspace*{-1.4cm}
{\small \tableofcontents}
\vspace*{-0.1cm}

%\markboth{{\sc Sebastien Motsch and Eitan Tadmor}}{{\sc Heterophily dynamics and enhanced consensus}}

%%%%%%%%%%%%%%%%%%%%%%%
\section{Introduction}
\setcounter{equation}{0}
\setcounter{figure}{0}
%%%%%%%%%%%%%%%%%%%%%%%%%%

Nature and human societies offer many examples of self-organized behavior. Ants form colonies to coordinate the construction of  a new nest, birds form  flocks which fly in the same direction, mobile networks are sought to form a coordinated rendezvous and human crowd form  parties to reach a consensus when choosing a leader. The self-organized aspect of such systems is their dynamics, governed solely by interactions among its individuals or ``agents'', which tend to cluster into colonies, flocks, parties, etc.  A natural question which arises in this context is to understand when and how clusters emerge through the  self-interactions of agents, and what type of ``rules of engagement'' influence the formation of such clusters. Of particular interest to us are cases in which  the self-organized behavior tends to concentrate into \emph{one} cluster, reflecting  a \emph{consensus} of opinions, \emph{flocking} of birds, fish or cells, \emph{rendezvous} of mobile networks, and in general, \emph{concentration} around other positions intrinsic to the self-organized dynamics. Generically, we will refer to this process as concentration around an emerging consensus.

Many models have been introduced to appraise the emergence of consensus. Representative examples can be found in  \cite{blondel_krauses_2009,couzin_effective_2005, cucker_emergent_2007,duering_boltzmann_2009,hegselmann_opinion_2002,motsch_tadmor_new_2011,vicsek_novel_1995,ZEP_2011}, and we refer the reader to a  more comprehensive list of references surveyed in section \ref{sec:further} below. The starting point for our discussion is a general framework which embed several types of models describing self-organized dynamics. We consider the evolution of $N$ agents, each of which is identified by its ``position'' ${\bf p}_i(t)\in \mathbb{R}^d$. The position ${\bf p}_i(t)$ may account for  opinion, velocity, or other attributes of agent ``$i$'' at time $t$. Each agent adjusts its position according to the position of his neighbors:
\begin{equation}
  \label{eq:framework}
  \frac{d}{dt}{\bf p}_i = \alpha\,\sum_{j\neq i} a_{ij} ({\bf p}_j-{\bf p}_i), \qquad  a_{ij}\geq0.
\end{equation}
This provides a rather general description for  processes of \emph{alignment}. Here, $\alpha>0$ is a scaling parameter and the coefficients $a_{ij}$ quantify the strength of influence between  agents $i$ and $j$: the larger $a_{ij}$ is, the more weight is given to agent $j$ to align itself with agent $i$, based on the  difference of their positions ${\bf p}_i-{\bf p}_j$. 
The underlying fundamental assumption here is that agents do not react to the \emph{position} of others but  to their differences relative to other agents. In particular, the $a_{ij}$'s themselves are allowed to depend on the relative differences, ${\bf p}_i-{\bf p}_j$.  Indeed, we consider  nonlinear models (\ref{eq:framework}) where
\[
a_{ij}=a_{ij}(\P(t)), \qquad \P(t):=\{{\bf p}_{k}(t)\}_k.
\]
We emphasize the nonlinear aspect of the  alignment models (\ref{eq:framework}): the intricate aspect of
such models is  the nonlinear dependence of the influence matrix on the dynamics, $a_{ij}=a_{ij}(\P(t))$.  
 We ignore two other important processes involved in self-organized dynamics as advocated in the pioneering work of Reynolds,  \cite{reynolds_flocks_1987},  namely,  the short-range repulsion (or avoidance) and the long-range cohesion (or attraction), and we refer to  recent works  driven by the balance of these two processes in  e.g., \cite{BCL2009,DCBC_2006,EK_2001,Li_2008,MOA_2010,T-B_2004}. Our purpose here is to shed light on the role of mid-range alignment which covers the important zone ``trapped'' between the short-range attraction and long-range repulsion.

We distinguish between two main classes of self-alignment models. In the global case, the rules of engagement are such that every agent is influenced by every other agent, $a_{ij} > \eta >0$. The dynamics in this case is driven by \emph{global} interactions. We have a fairly good understanding of the large time dynamics of  such  models; an incomplete list of recent works in this direction includes \cite{blondel_convergence_2005,canizo_collective_2009,cucker_emergent_2007,DLBGL_2010,ha_simple_2009,ha_particle_2008,hegselmann_opinion_2002,jackson_2010,krause_discrete_2000,motsch_tadmor_new_2011}, and the references therein.  Global interactions which are sufficiently strong lead to \emph{unconditional consensus} in the sense that \emph{all} initial configurations of agents concentrate around an emerging limit state, the ``consensus'' $\con{\bp}$,
\[
{\bf p}_i(t) \stackrel{t \rightarrow \infty}{\longrightarrow} \con{\bp}.
\]
The first part of the paper, section \ref{sec:global}, contains an overview of the concentration dynamics in such global models, from the perspective of the  general framework of (\ref{eq:framework}).  

In more realistic models, however, interactions between agents are limited to their local neighbors, \cite{Aoki_1982,Ballerini_2008,couzin_2002,JE_2007,reynolds_flocks_1987}. The behavior of   \emph{local} models where some of the $a_{ij}$ may vanish,  requires a more intricate analysis. In the general scenario for such local models, discussed in section \ref{sec:clustering}, agents  tend to concentrate into one or more separate \emph{clusters}. The particular case in which agents concentrate into one cluster, that is the emergence of a consensus or a flock, depends on the propagation of \emph{uniform connectivity} of the underling (weighted) graph associated with the adjacency matrix, $\{a_{ij}\}$. This issue is explored in section \ref{sec:concon} where we show that connectivity implies consensus. Thus, the question of consensus for local models is turned into the question of persistence of connectivity over time. Note that even if the initial configuration is assumed connected, then there is still a possibility of losing connectivity as the $a_{ij}$'s may vary in time together with the positions $\P(t)$. The  open  question of tracing the propagation of connectivity in time for \emph{general} class of local models (\ref{eq:framework}) plays an important role in many applications, beyond the implication of emerging consensus. As an example we mention engineering applications to  sensor-based networks, from automatic traffic control and wireless communication to production systems and  mobile robot networks, e.g., \cite{He_2001,JE_2007,OS_2006,OSM_2004,Ring_2012,ZP_2007,ZEP_2011} and the references therein.

Many standard models for self-organized dynamics in social, biological and physical science assume that the dependence of $a_{ij}$ decreases as a function of  $|{\bf p}_i-{\bf p}_j|$, where $|\,\cdot\,|$ is a problem-dependent proper metric to measure a difference of positions, opinions, etc. 
The statement that ``Birds of feature flock together'' reflects a common tendency  to align with those who think or act alike, \cite{jackson_2010,merton_1954,McPherson_2001}. Moreover, ``Similarity breeds connection,'' reflects the intuitive scenarios in which   the  influence coefficients $a_{ij}$  increase as the difference  of positions $|{\bf p}_i-{\bf p}_j|$ decreases: the more the $a_{ij}$'s increase, the more likely it is to lead to a consensus. But in fact, we argue here that the converse is true: for a   self-organized dynamics driven by local interactions, it  is more likely to approach a consensus when the interaction among  agents \emph{increases} as a function of their difference $|{\bf p}_i-{\bf p}_j|$. \emph{Heterophily} --- the tendency to bond more with the different rather than with those who are similar, plays a decisive r\^{o}le in the clustering of (\ref{eq:framework}). The consensus in heterophilious dynamics  is explored in the second part of the paper, in terms of local interactions of the form, $a_{ij}=\phi(|\bp_i-\bp_j|)$, where $\phi(\cdot)$ is a compactly supported influence function which \emph{is increasing} over its support. In section \ref{sec:heterophily} we report our  extensive numerical simulations  which confirm the counter-intuitive phenomenon, where the number of clusters \emph{decreases} as the heterophilious dependence increases; in particular, if $\phi$ is increasing fast enough then the corresponding dynamics concentrate into one cluster, that is, heterophilious dynamics enhances consensus. 
We mention in passing the scenario of ``extreme heterophily" advocated in \cite{JE_2007,ZP_2007,ZEP_2011}, where distributed coordination is governed by local influence function $\phi(\cdot)$ which 
grows to infinity as it approaches the right edge of its support, in order to create an energy barrier which enforces connectivity and hence consensus. 
We are not unaware that this phenomenon 
of enhanced consensus in the presence of heterophilious interactions, may have intriguing  consequences in different areas other than social networks,  e.g., global bonding in atomic scales, avoiding materials' fractures in mesoscopic scales, or ``cloud'' formations  in macroscopic scales.  
 
 In the rest of the paper, we address a few important extensions of the self-alignment models outlined above. These extensions are still work in progress and we by no means try to be comprehensive.
In  section \ref{sec:nearest_neighbor} we turn our attention to nearest neighbor dynamics. Careful 3D observations made by   
 the StarFlag project, \cite{cavagna_et_al_2008i,cavagna_et_al_2008ii, cavagna_et_al_2010}, showed that interactions of birds are driven by \emph{topological} neighborhoods, involving a fixed number of nearby birds, instead of geometric neighborhoods involving a fixed radius of interaction. Here we prove that in the simplest case of two nearest neighbor dynamics, connectivity propagates in time and consensus follows for  influence functions which are non-decreasing on their compact support.
In section \ref{sec:discrete_dynamics} we turn our attention to \emph{fully discrete} models for self-alignment. The large time evolution in discrete time-steps, e.g., the opinion of dynamics in  \cite{blondel_convergence_2005,blondel_krauses_2009,krause_discrete_2000}, may depend on the time-step $\Delta t$. Here, we show that the  semi-discrete framework for global and local self-alignment outlined in sections \ref{sec:global}--\ref{sec:heterophily} can be extended, mutatis mutandis, to the fully-discrete case. In particular, we recover a decreasing Lyapunov functional, a fully-discrete analogue of the semi-discrete clustering analysis in section \ref{sec:local_non_symm_od}.
Finally, in section \ref{sec:hydro} we discuss the passage  from the agent-based description to mean-field limits as the number of agents, or ``particles" tends to be large enough. There is a growing literature on kinetic descriptions  of such models, \cite{carrillo_milling_2009, carrillo_asymptotic_2010a, carrillo_models_2010b,duering_boltzmann_2009,eftimie_2012,ha_particle_2008} and the references therein. Here we focus our attention on the hydrodynamic descriptions of self-organized opinion dynamics and flocking.  The closing section \ref{sec:further} is devoted to a more detailed discussion on the broader subject of self-organized dynamics.  Since a comprehensive review of this multidisciplinary  subject is beyond the scope of this paper, in particular, we  include a  selection of references, classified into several complementary categories of different disciplines, models, scales, approaches and patterns.

%%%%%%%%%%%%%%%%%%%%%%%%%%%%%%%%%%%%%%%%%%%%%%%%%%%%%
\subsection{Examples of opinion dynamics and flocking}
%%%%%%%%%%%%%%%%%%%%%%%%%%%%%%%%%%%%%%%%%%%%%%%%%%%%%

Models for self-organized dynamics (\ref{eq:framework}) have appeared in a large variety of different contexts, including load balancing in computer networks, evolution of languages, gossiping, algorithms for sensor networks, emergence of flocks, herds, schools and other biological ``clustering'', pedestrian dynamics, ecological models, peridynamic elasticity, multi-agent robots, models for opinion dynamics, economic networks and more; a detailed list of references is surveyed in section \ref{sec:further} below. 
%consult %\cite{BGPS_2006,BCE_1991,camazine_self-organization_2001,CFSZ_2008,CMB_2006,couzin_effective_2005,CSZ_2006,FZ_2008,grimm_individual-based_2005,He_2001,hemelrijk_self-organized_2008,jackson_2010,JLM_2003,lorentz_2007,MD2012,OSFM_2007,PT_2011,Sill2000,T-T_1998} and the references therein.

To demonstrate the general framework for self-alignment dynamics (\ref{eq:framework}), we shall work with two main concrete examples. The first models  \emph{opinions dynamics}. In these models, $N$ agents, each  with vector of opinions quantified by  ${\bf p}_i \leadsto {\bx}_i \in \mathbb{R}^d$, interact with each other according to the first-order  system,
\begin{subequations}\label{eqs:opinion}
\begin{equation}
  \label{eq:opinion_formationa}
  \frac{d}{dt}{\bx}_i = \alpha\sum_{j\neq i} a_{ij} ({\bx}_j-{\bx}_i) \quad , \quad  a_{ij}  = \frac{\phi (|\bx_j-\bx_i|)}{N}.
\end{equation}
Here, $0<\phi<1 $ is the scaled \emph{influence function} which acts on  the ``difference of opinions", $|\bx_i-\bx_j|$. The metric $|\cdot|$ needs to be properly interpreted, adapted to the specific context of the problem at hand.
Another model for interaction of opinions is 
\begin{equation}
  \label{eq:opinion_formationb}
  \frac{d}{dt}{\bx}_i = \alpha\sum_{j\neq i} a_{ij} ({\bx}_j-{\bx}_i) \quad , \quad  a_{ij}  = \frac{\phi_{ij}}{\sum_k\phi_{ik}}, \quad \phi_{ij}:=\phi (|\bx_j-\bx_i|).
\end{equation}
\end{subequations}
The classical Krause model for opinion dynamics \cite{krause_discrete_2000,blondel_krauses_2009} is a time-discretization of (\ref{eq:opinion_formationb}), which will be discussed in section  \ref{sec:discrete_dynamics} below.
Observe that the  adjacency matrix $\{a_{ij}\}$ in the first model (\ref{eq:opinion_formationa}) is  symmetric while in the second model, (\ref{eq:opinion_formationb}), it is not.

Another branch of models have been proposed to describe \emph{flocking}. These are second-order models where the 
observed property is the velocity of birds,  ${\bf p}_i \mapsto {\bf v}_i \in \mathbb{R}^d$, which are coupled to their location ${\bf x}_i \in \mathbb{R}^d$.
The flocking model of Cucker and Smale (C-S) has received a considerable attention in recent years, \cite{cucker_emergent_2007,cucker_mathematics_2007,ha_particle_2008,canizo_collective_2009,ha_simple_2009}, 
\begin{subequations}\label{eqs:flocking}
\begin{equation}
  \label{eq:CS_model}
  \frac{d}{dt}{\bf v}_i = \alpha\,\sum_{j\neq i} a_{ij} ({\bf v}_j-{\bf v}_i)   \quad , \quad a_{ij} = \frac{\phi (|{\bf x}_j-{\bf x}_i|)}{N} \quad   \text{where} \quad  \frac{d}{dt}{\bf x}_i = {\bf v}_i.
\end{equation}
In C-S model, alignment is carried out by isotropic averaging. In \cite{motsch_tadmor_new_2011} we advocated a more realistic alignment-based model for flocking, where  alignment is based on the relative influence, similar to  (\ref{eq:opinion_formationb}), 
\begin{equation}
  \label{eq:MT_model}
  \frac{d}{dt}{\bf v}_i = \alpha\,\sum_{j\neq i} a_{ij} ({\bf v}_j-{\bf v}_i) \quad , \quad a_{ij} = \frac{\phi_{ij}}{\sum_k \phi_{ik}} \quad \text{with} \quad \phi_{ij} := \phi(|{\bf x}_j-{\bf x}_i|).
\end{equation}
\end{subequations}
Again, C-S model is based on a symmetric adjacency matrix, $\{a_{ij}\}$, while symmetry is lost in (\ref{eq:MT_model}),  i.e. $a_{ij} \neq a_{ji}$.

\begin{figure}[ht]
  \centering
  \includegraphics[scale=.38]{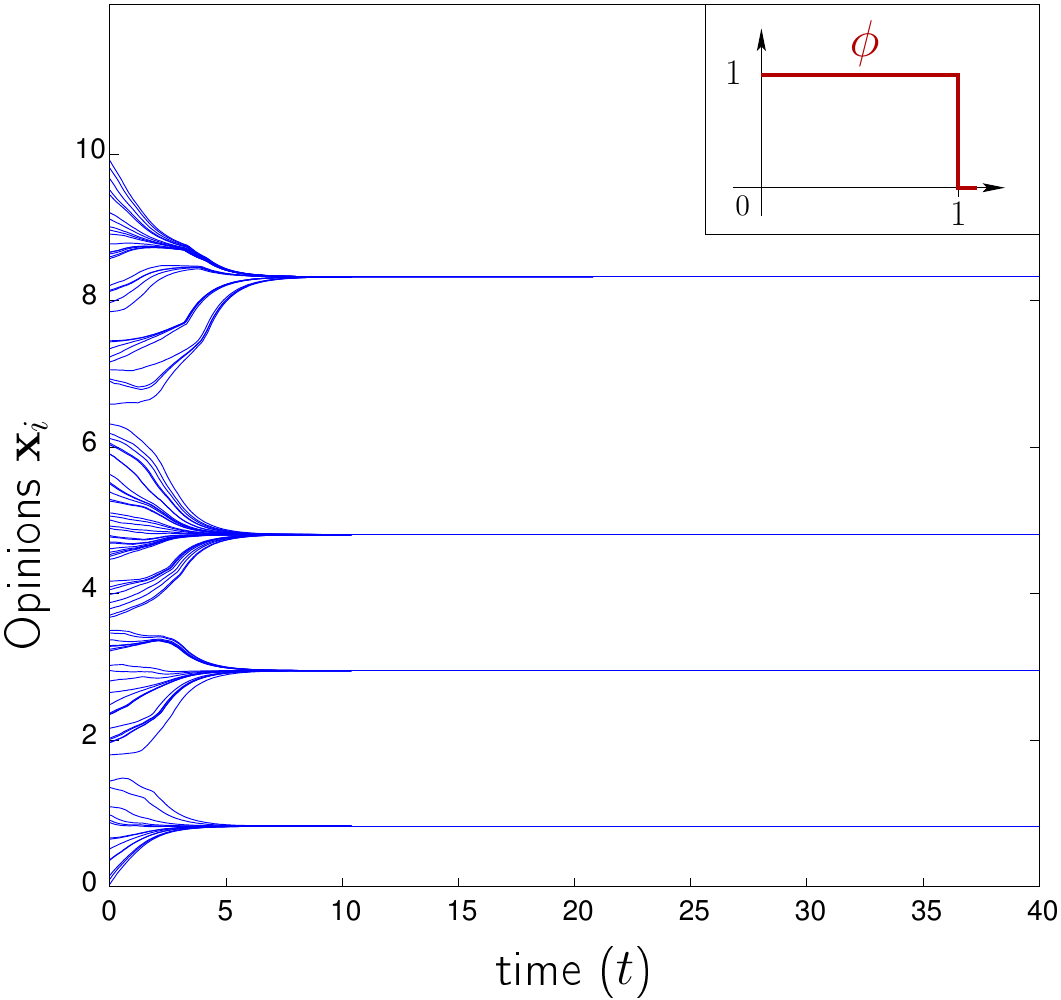}
  \includegraphics[scale=.38]{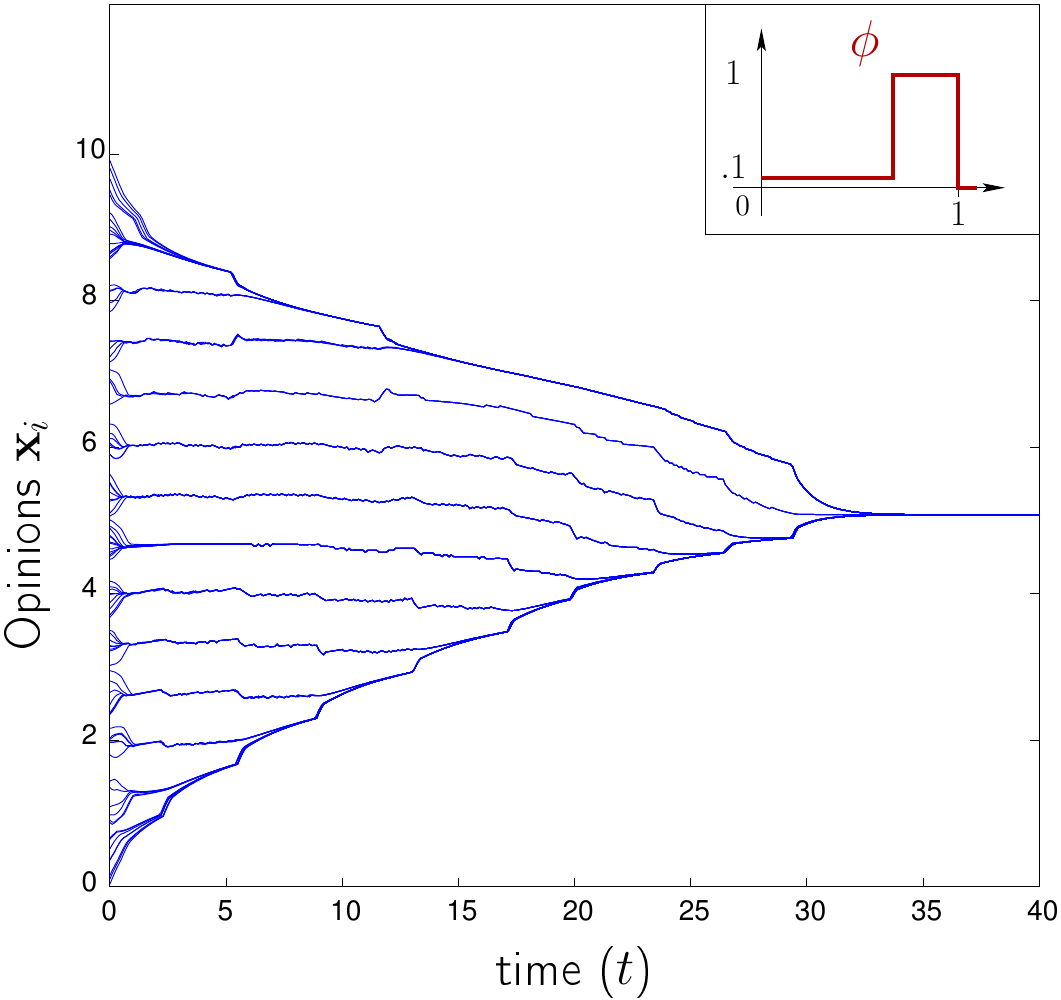}
  \caption{{\small Evolution in time of the consensus model for two different influence functions $\phi$ ({\bf Left} figure: $\phi(r)=\chi_{[0,1]}$, {\bf Right} figure: $\phi(r)= .1 \chi_{[0,1/\sqrt{2}]}+\chi_{[1/\sqrt{2},1]}$). By diminishing the influence of close neighbors ({\bf Right} figure), we enhance the emergence of a consensus. Simulations are started with the same initial condition ($100$ agents uniformly distributed on $[0,10]$).}}
  \label{fig:evolution_X_farVscloseNeighbor}
\end{figure}

The models for opinion and flocking dynamics (\ref{eqs:opinion}), and respectively, (\ref{eqs:flocking}), can be written in the unified form
\begin{equation}\label{eq:unified}
\frac{d}{dt} \bp_i = \alpha \sum_{j=1}^N a_{ij} (\bp_j-\bp_i), \qquad a_{ij}=\frac{1}{\deg_i}\phi(|\bx_i-\bx_j|).
\end{equation}
In the opinion dynamics, $\bp\mapsto \bx$; in the flocking dynamics, $\bp \mapsto \dot{\bx}$. The degree $\deg_i=N$
in the symmetric models, or $\deg_i=\sum_{j\neq i} \phi(|\bx_i-\bx_j|)$ in the non-symmetric models.
The local vs. global behavior of these models  hinges on the behavior of the  influence function, $\phi$. If the support of $\phi$ is large enough to cover the convex hull of $\P(0)=\{{\mathbf p}_k(0)\}_k$, then global interactions will yield unconditional consensus or flocking. On the other hand, if $\phi$ is locally supported, then the group dynamics in (\ref{eq:unified}) depends on the connectivity of the underlying graph, $\{a_{ij}\}$. In particular, if the overall connectivity is lost over time, then each connected component may lead to a separate cluster.   Heterophilious self-organized dynamics is characterized by a locally supported influence function, $\phi$, which  is \emph{increasing} as a function of the  mutual differences, $\phi_{ij}=\phi(|{\mathbf x}_i-{\mathbf x}_j|)$. The more heterophilious the dynamics is, in the sense that its influence function has a steeper increase over its compact support, the more it tends to concentrate in the sense of approaching  a smaller number of clusters.  In particular, heterophilious dynamics is more likely to lead to a consensus as  demonstrated for example, in figure \ref{fig:evolution_X_farVscloseNeighbor} (and is further documented in figures  \ref{fig:ratio011210} and  \ref{fig:S_depending_bDiva_2D} below). Observe that the \emph{only} difference between the two models depicted in figure \ref{fig:evolution_X_farVscloseNeighbor} is  that  the influence in the immediate neighborhood (of radius $r\leq 1/\sqrt{2}$)  was decreased, from $\phi=1\chi_{[0,1]}$ (on the left) into 
$\phi=0.1\chi_{[0,1/\sqrt{2}]}+ \chi_{[1/\sqrt{2},1]}$ (on the right): this was sufficient to  enhance the four-party clustering on the left to turn into a consensus shown on the right.

%%%%%%%%%%%%%%%%%%%%%%%%%%%%%%%%%%%%%%%%%%%%%%%%%%%%%%%%%%%%%%%%%%%%%%%%%%%
\section{Global interactions and unconditional emergence of consensus}\label{sec:global}
\setcounter{equation}{0}
\setcounter{figure}{0}
%%%%%%%%%%%%%%%%%%%%%%%%%%%%%%%%%%%%%%%%%%%%%%%%%%%%%%%%%%%%%%%%%%%%%%%%%%%%

In this section we  derive explicit conditions for global self-organized dynamics (\ref{eq:framework}) to concentrate around an emerging   consensus. 
Our starting point is a  \emph{convexity argument} which is  valid for any adjacency  matrix $A=\{a_{ij}\}$, whether symmetric or not. We begin by noting  without loss of generality, that $A$ may be assumed to be row-stochastic,
\begin{equation}\label{eq:stoc}
\sum_j a_{ij} =1, \qquad i=1,\ldots, N. 
\end{equation}
Indeed, by rescaling $\alpha$ if necessary we have $\sum_{j\neq i} a_{ij}\leq 1$, and  (\ref{eq:stoc}) holds when we set   $a_{ii}:=1-\sum_{j\neq i} a_{ij} \geq 0$.
We rewrite (\ref{eq:framework}) in the form
\begin{equation}
  \label{eq:framework2}
  \frac{d}{dt}{\bp}_i = \alpha\,(\conv{\bp}_i-{\bp}_i) \;, \qquad \conv{\bp}_i := \sum_{j=1}^N a_{ij} {\bp}_j.
\end{equation}
Thus, if we let  $\Omega(t)$ denote the  convex hull of the properties $\{{\bp}_k\}_k$, then according to (\ref{eq:framework2}), ${\bp}_i$ is relaxing to the average value $\conv{\bp}_i\in \Omega(t)$, while the boundary of $\Omega$ is a \emph{barrier} for the dynamics. 
It follows that the  positions in the general self-organized model (\ref{eq:framework}) remain bounded. 

\begin{proposition}\label{ppo:Omega_decreasing_ct}
  The convex hull of ${\bf p}(t)$ is decreasing in time in the sense that the convex hull, $\Omega(t) := \text{Conv}\left(\{{\bp}_i(t)\}_{i\in\llbracket1,N\rrbracket}\right)$, satisfies
\begin{equation}\label{eq:Omega}
   \Omega(t_2)\,\subset\,\Omega(t_1), \quad t_2>t_1\geq 0.
\end{equation}
Moreover, we have
\begin{equation}\label{eq:decay}
\max_i|\bp_i(t)| \leq  \max_i|\bp_i(0)|.
\end{equation}
  \end{proposition}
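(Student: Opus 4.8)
The plan is to make rigorous the ``barrier'' heuristic following \eqref{eq:framework2}, via the support function of the set of positions. Fix a unit vector $\mathbf{n}\in\mathbb{R}^d$ and set $h(t,\mathbf{n}):=\max_{1\le i\le N}\langle\bp_i(t),\mathbf{n}\rangle$. Since a convex compact set is the intersection of the half-spaces determined by its support function, $\Omega(t)=\bigcap_{|\mathbf{n}|=1}\{\bx:\langle\bx,\mathbf{n}\rangle\le h(t,\mathbf{n})\}$, the inclusion \eqref{eq:Omega} follows at once once we show that $t\mapsto h(t,\mathbf{n})$ is non-increasing for every $\mathbf{n}$.

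To prove this monotonicity, observe first that each scalar branch $f_i(t):=\langle\bp_i(t),\mathbf{n}\rangle$ inherits the regularity of $\bp_i$ (continuously differentiable, or at least locally Lipschitz, under the standing assumptions on $a_{ij}(\P(t))$). At any time $t$ at which the maximum defining $h$ is attained by index $i$ we compute, using \eqref{eq:framework2} together with the row-stochasticity \eqref{eq:stoc},
\[
\frac{d}{dt}f_i(t)=\alpha\big(\langle\conv{\bp}_i,\mathbf{n}\rangle-f_i(t)\big)=\alpha\Big(\sum_j a_{ij}f_j(t)-f_i(t)\Big)\le\alpha\Big(\sum_j a_{ij}\,h(t,\mathbf{n})-h(t,\mathbf{n})\Big)=0,
\]
because $a_{ij}\ge0$ and $f_j(t)\le h(t,\mathbf{n})$ for all $j$. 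A standard fact about the maximum of finitely many locally Lipschitz functions then yields that $h(\cdot,\mathbf{n})$ is locally Lipschitz and, for a.e.\ $t$, $\frac{d}{dt}h(t,\mathbf{n})=\frac{d}{dt}f_{i(t)}(t)\le0$ for an active index $i(t)$; hence $h(\cdot,\mathbf{n})$ is non-increasing, and intersecting over all $\mathbf{n}$ gives \eqref{eq:Omega}.

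The bound \eqref{eq:decay} is then an immediate corollary: the Euclidean ball $B:=\{\bx:|\bx|\le\max_i|\bp_i(0)|\}$ is convex and contains every $\bp_i(0)$, so $\Omega(0)\subset B$; by \eqref{eq:Omega}, $\bp_i(t)\in\Omega(t)\subset\Omega(0)\subset B$, which is \eqref{eq:decay}. Alternatively one may apply the same Dini-derivative argument directly to $t\mapsto\max_i|\bp_i(t)|^2$, using $\langle\conv{\bp}_i,\bp_i\rangle\le|\conv{\bp}_i|\,|\bp_i|\le\big(\sum_j a_{ij}|\bp_j|\big)|\bp_i|$ and evaluating at an index where $|\bp_i|$ is maximal.

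I expect the only genuine subtlety to be the non-smoothness of the maximum: one must justify differentiating $h(t,\mathbf{n})$ through the time-dependent active index and passing from ``every active branch is non-increasing'' to ``$h$ is non-increasing'' — routine once the $\bp_i$ are known to be Lipschitz in $t$, which holds because the right-hand side of \eqref{eq:framework} is bounded along the trajectory (itself a consequence of this very argument, run on the maximal interval of existence, with no real circularity). A secondary point is merely recording the regularity of $a_{ij}(\P(t))$ needed for the solution to exist and be absolutely continuous, which is part of the framework's standing hypotheses.
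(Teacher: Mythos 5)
Your proof is correct, and it organizes the argument differently from the paper. The paper's proof fixes an agent $i$ and works with the maximal dual vector $\bw(t)$ of $\bp_i(t)$ (so that $\langle \bp_i,\bw\rangle=|\bp_i|$ with $|\bw|_*=1$), obtaining $\frac{d}{dt}|\bp_i(t)|\le\alpha\,(|\conv{\bp}_i|-|\bp_i|)\le \alpha\,(\max_k|\bp_k|-|\bp_i|)$ directly, for an arbitrary vector norm; the inclusion \eqref{eq:Omega} itself is justified only by the informal ``barrier'' discussion following \eqref{eq:framework2}. You instead fix a direction $\mathbf{n}$ and run the maximum principle on the support function $h(t,\mathbf{n})=\max_i\langle\bp_i(t),\mathbf{n}\rangle$, which is precisely the rigorous version of that barrier heuristic: row-stochasticity and $a_{ij}\ge 0$ make every active branch non-increasing, so \eqref{eq:Omega} becomes a genuine statement about nested half-spaces, and \eqref{eq:decay} drops out as a corollary (and since the unit ball of \emph{any} norm is convex, your corollary covers general norms just as the paper's direct computation does). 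The non-smoothness issue is the same in both arguments --- the paper absorbs it into the assertion $\langle\bp_i,\dot\bw\rangle\le 0$ for the time-dependent maximal dual vector, while you handle it with the locally-Lipschitz-maximum/Dini-derivative lemma, which is the more standard and more easily justified bookkeeping. What the paper's route buys is a one-line single-agent decay estimate in the style reused later (e.g.\ in the proof of Theorem \ref{thm:contract}); what your route buys is an actual proof of the set inclusion \eqref{eq:Omega}, which the paper states but does not formally verify.
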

\begin{proof}
We verify (\ref{eq:decay}) for a general vector norm $|\cdot|$ which we characterize in terms of its dual $|\bw|_*=\sup_{\bw\neq 0} \langle \bw,\bz\rangle/|\bz|$ so that $|\bp|= \sup \langle \bp,\bw\rangle/|\bw|_*$. Let $\bw=\bw(t)$ denotes the maximal dual vector of $\bp_i(t)$, so that $\langle \bp_i,\bw\rangle= |\bp_i|$, then
\[
\langle \dot{\bp}_i,\bw\rangle = \alpha\left(\langle \conv{\bp}_i,\bw\rangle - \langle \bp_i,\bw\rangle\right) \leq \alpha (|\conv{\bp}_i|-|\bp_i|).
\] 
Since $\langle \bp_i, \dot{\bw}\rangle \leq 0$ we have
\[
\frac{d}{dt}|\bp_i(t)| = \langle \dot{\bp}_i,\bw\rangle + \langle \bp_i, \dot{\bw}\rangle \leq \alpha (|\conv{\bp}_i(t)|-|\bp_i(t)|),
\]
and finally, $|\conv{\bp}_i(t)| \leq \max_i |\bp_i(t)|$ yields (\ref{eq:decay}).
\end{proof}

\begin{figure}[ht]
  \centering
  \includegraphics[scale=1]{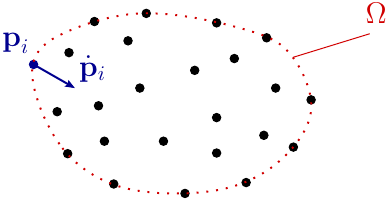}
  \caption{The convex hull $\Omega$ of the positions ${\bp}_i$.}
  \label{fig:convex_hull}
\end{figure}

\begin{myremark}%\label{rem:trans} 
Since the models of opinion dynamics and flocking dynamics \eqref{eq:unified} are  translation invariant in the sense of admitting the family of solutions $\{\bp_i-{\mathbf c}\}$, then for any fixed state ${\mathbf c}$, proposition \ref{ppo:Omega_decreasing_ct} implies 
\[
\max_i|\bp_i(t)-{\mathbf c}| \leq  \max_i|\bp_i(0)-{\mathbf c}|.
\]
\end{myremark}

Consensus and flocking are achieved when the decreasing $\Omega(t)$  shrinks to a limit point $\Omega(t) \stackrel{t\rightarrow \infty}{\longrightarrow} \{\con{\bp}\}$,
\[
\max_i|\bp_i(t)-\con{\bp}| \stackrel{t\rightarrow \infty}{\longrightarrow} 0.
\]
There are  various approaches, not unrelated, to derive conditions which ensure unconditional consensus or flocking. We shall mention two: an $L^\infty$ contraction argument and an $L^2$ energy method based on spectral analysis.

%%%%%%%%%%%%%%%%%%%%%%%%%%%%%%%%%%%%%%%%%%%%%%%%%%%%%%%%%%%%%%%%%%%%%%%%%%%%
\subsection{An $L^\infty$ approach: contraction of diameters}\label{sec:emer}
%%%%%%%%%%%%%%%%%%%%%%%%%%%%%%%%%%%%%%%%%%%%%%%%%%%%%%%%%%%%%%%%%%%%%%%%%%%

Proposition  \ref{ppo:Omega_decreasing_ct} tells us that  $\{{\bp}_i(t)\}_{i}$ remain uniformly bounded and the diameter, 
$\max_{ij} | {\bp}_i(t)-{\bp}_j(t)|$, is non-increasing in time.
In order to have concentration, however, we need to verify that the  diameter of ${\bp(t)}$ decays to zero. The next proposition quantifies this decay rate.

\begin{theorem}\label{thm:contract}
Consider  the self-organized model \eqref{eq:framework} with a raw stochastic adjacency matrix  $A$,  \eqref{eq:stoc}. 
Let 
\[
\dm{\bp}:=\max_{ij}|\bp_i-\bp_j|
\]
denote the diameter of the position vector $\bp$.
Then the diameter  satisfies the concentration estimate
\begin{equation}\label{eq:contract}
\frac{d}{dt}\dm{\bp(t)} \leq -\alpha\ceta{{A(\P(t))}} \dm{\bp(t)}, \qquad \ceta{A}:=\min_{ij}\sum_k \min\{a_{ik},a_{jk}\}.
\end{equation}
In particular, if there is a slow decay of the  concentration factor  so that 
$\displaystyle \int^\infty \!\!\!\!\!\ceta{{A(\P(s))}}\,ds = \infty$, then the agents concentrate in the sense that
\begin{subequations}\label{eqs:pconsen}
\begin{equation}\label{eq:qconsen}
\Ceta(t):= \int^t \!\!\!\ceta{{A(\P(s))}}\,ds \stackrel{t \rightarrow \infty}{\longrightarrow}  \infty \ \ \quad \leadsto \ \ \quad \lim_{t\rightarrow \infty}\max_{i,j}|\bp_i(t)-\bp_j(t)|=0.
\end{equation}
Moreover, if the decay of the concentration factor is slow enough in the sense that $\int^\infty \exp(-\alpha \Ceta(s))ds<\infty$, then there is an emerging consensus $\con{\bp} \in \Omega(0)$,
\begin{equation}\label{eq:pconsen}
\int^\infty \!\!e^{-\alpha \Ceta(t)}dt < \infty \ \ \quad \leadsto \ \ \quad |\bp_i(t)-\con{\bp}| \lesssim e^{-\alpha \Ceta(t)}\dm{\bp(0)} \ \ \quad   \text{for all} \ \ i=1,\ldots,N.
\end{equation}
\end{subequations}
\end{theorem}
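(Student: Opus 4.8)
The plan is to bound the time derivative of the diameter $\dm{\bp(t)}$ by picking, at (almost) every time $t$, the pair of indices $(i,j)$ that realizes the maximum $|\bp_i-\bp_j|$, and then estimating $\frac{d}{dt}|\bp_i-\bp_j|$ using the row-stochastic form \eqref{eq:framework2}. Writing $\bw$ for the maximal dual vector of $\bp_i-\bp_j$ (as in the proof of Proposition \ref{ppo:Omega_decreasing_ct}), we have $|\bp_i-\bp_j|=\langle \bp_i-\bp_j,\bw\rangle$, and since $\bw$ is the maximizer the contribution of $\dot{\bw}$ is nonpositive, so
\[
\frac{d}{dt}|\bp_i-\bp_j| \le \alpha\big(\langle \conv{\bp}_i-\conv{\bp}_j,\bw\rangle - \langle \bp_i-\bp_j,\bw\rangle\big)
= \alpha\Big(\sum_k (a_{ik}-a_{jk})\langle \bp_k,\bw\rangle - \dm{\bp}\Big).
\]
The key algebraic trick is to use row-stochasticity to insert a free reference point: $\sum_k \min\{a_{ik},a_{jk}\}$ of the ``mass'' can be paired off between the $i$-row and the $j$-row, and on that common part the terms $\langle \bp_k,\bw\rangle$ cancel. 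Concretely, split $a_{ik}-a_{jk} = (a_{ik}-m_k) - (a_{jk}-m_k)$ with $m_k:=\min\{a_{ik},a_{jk}\}$; then $\sum_k(a_{ik}-m_k) = \sum_k(a_{jk}-m_k) = 1 - \sum_k m_k =: 1-\mu$, and both families $\{a_{ik}-m_k\}$, $\{a_{jk}-m_k\}$ are nonnegative. Hence $\sum_k(a_{ik}-a_{jk})\langle\bp_k,\bw\rangle$ is a difference of two nonnegative combinations of total mass $1-\mu$ each, so it is bounded by $(1-\mu)\,\dm{\bp}$ (the maximum minus the minimum of $\langle\bp_k,\bw\rangle$ over $k$, which is at most the diameter). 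Combining, $\frac{d}{dt}\dm{\bp} \le \alpha\big((1-\mu)-1\big)\dm{\bp} = -\alpha\mu\,\dm{\bp}$, and since this holds for the particular maximizing pair, $\mu \ge \min_{ij}\sum_k\min\{a_{ik},a_{jk}\} = \ceta{A}$, which gives \eqref{eq:contract}.

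From the differential inequality \eqref{eq:contract}, Grönwall's lemma immediately yields $\dm{\bp(t)} \le e^{-\alpha\Psi(t)}\dm{\bp(0)}$ with $\Psi(t)=\int^t \ceta{A(\P(s))}\,ds$; thus $\Psi(t)\to\infty$ forces $\dm{\bp(t)}\to 0$, proving \eqref{eq:qconsen}. For \eqref{eq:pconsen}, I would show each $\bp_i(t)$ is Cauchy as $t\to\infty$: from \eqref{eq:framework2}, $|\dot{\bp}_i| = \alpha|\conv{\bp}_i-\bp_i| \le \alpha\,\dm{\bp(t)} \le \alpha e^{-\alpha\Psi(t)}\dm{\bp(0)}$, so $|\bp_i(t_2)-\bp_i(t_1)| \le \dm{\bp(0)}\int_{t_1}^{t_2}\alpha e^{-\alpha\Psi(s)}\,ds$. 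Under the hypothesis $\int^\infty e^{-\alpha\Psi(s)}\,ds<\infty$ the tail of this integral vanishes, so the limit $\con{\bp}:=\lim_{t\to\infty}\bp_i(t)$ exists (and is the same for all $i$ since $\dm{\bp(t)}\to 0$); it lies in $\Omega(0)$ by Proposition \ref{ppo:Omega_decreasing_ct}. Sending $t_2\to\infty$ gives $|\bp_i(t)-\con{\bp}| \le \dm{\bp(0)}\int_t^\infty \alpha e^{-\alpha\Psi(s)}\,ds \lesssim e^{-\alpha\Psi(t)}\dm{\bp(0)}$, where the last step uses that $\Psi$ is nondecreasing, so the integral is controlled by its integrand at the left endpoint up to a constant (this is where the ``slow enough decay'' hypothesis does its work).

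The main obstacle is making the ``pick the maximizing pair and differentiate'' step rigorous: the argmax pair $(i,j)$ and the dual vector $\bw$ can jump in time, so $t\mapsto\dm{\bp(t)}$ is only Lipschitz, not $C^1$. The clean way around this is to work with the Dini (upper) derivative and invoke the standard fact that for a maximum of finitely many differentiable functions the upper Dini derivative at $t$ equals the maximum of the derivatives over the currently-active pairs; the dual-vector $\dot{\bw}$ contribution is handled exactly as in Proposition \ref{ppo:Omega_decreasing_ct}, using that $\bw$ is chosen optimal so the envelope-type inequality $\langle \bp_i-\bp_j,\dot{\bw}\rangle\le 0$ holds. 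Once the differential inequality is established in the Dini sense, Grönwall applies verbatim. The combinatorial mass-splitting identity and the integral estimates are then routine.
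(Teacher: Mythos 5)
Your derivation of the differential inequality \eqref{eq:contract} is correct and is in substance the same argument as the paper's: the splitting $a_{ik}-a_{jk}=(a_{ik}-m_k)-(a_{jk}-m_k)$ with $m_k=\min\{a_{ik},a_{jk}\}$, combined with the dual-vector bound, is exactly the mechanism behind the paper's static contraction estimate \eqref{eq:dpcontract}, $\dm{A\bp}\le(1-\ceta{A})\dm{\bp}$; the paper then feeds that estimate into the ODE through the integrating factor $e^{\alpha t}$ rather than differentiating the maximizing pair directly, but the two routes are equivalent, and your explicit handling of the non-smoothness of $t\mapsto\dm{\bp(t)}$ via Dini derivatives is, if anything, more careful than the paper. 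The Gr\"onwall step giving \eqref{eq:qconsen} and the Cauchy-in-time argument for the existence of the limit $\con{\bp}$ under $\int^\infty e^{-\alpha\Psi(s)}ds<\infty$ also coincide with the paper's proof.

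There is, however, a genuine gap in your last step, the rate in \eqref{eq:pconsen}. The inequality $\int_t^\infty \alpha e^{-\alpha\Psi(s)}\,ds\lesssim e^{-\alpha\Psi(t)}$ does \emph{not} follow from $\Psi$ being nondecreasing together with $\int^\infty e^{-\alpha\Psi(s)}ds<\infty$: take $\Psi(t)=\frac{2}{\alpha}\log(1+t)$, so that $e^{-\alpha\Psi(t)}=(1+t)^{-2}$ is integrable, yet $\int_t^\infty (1+s)^{-2}ds=(1+t)^{-1}$, which is not $O\big((1+t)^{-2}\big)$. So your tail-integral estimate proves convergence to $\con{\bp}$ but not the claimed decay rate. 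The fix --- and what the paper means when it says the rate ``follows from \eqref{eq:xgd}'' --- bypasses the tail integral entirely: by proposition \ref{ppo:Omega_decreasing_ct} the convex hulls are nested, so for every $s\ge t$ one has $\bp_j(s)\in\Omega(s)\subset\Omega(t)$, and since $\Omega(t)$ is closed the limit point satisfies $\con{\bp}\in\Omega(t)$ for each fixed $t$; consequently $|\bp_i(t)-\con{\bp}|\le \mathrm{diam}\,\Omega(t)=\dm{\bp(t)}\le e^{-\alpha\Psi(t)}\dm{\bp(0)}$, i.e.\ the rate in \eqref{eq:pconsen} is inherited directly from the diameter decay, the integrability hypothesis being needed only to guarantee that the limit $\con{\bp}$ exists at all.
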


\begin{myremark}
We note that theorem \ref{thm:contract} applies to any vector norm $|\cdot|$.
\end{myremark}

\begin{proof}
%Set  $\conv{\bp}:=A\bp$. 
We begin with the following  estimate which quantifies the contractivity of the row stochastic $A$ in the induced vector \emph{semi}-norm $\dm{\cdot}$ (since this bound  is solely due to the convexity  of the row stochastic $A$,  we suppress the time-dependence of $\bp$ and $\conv{\bp}=A\bp$),

\begin{equation}\label{eq:dpcontract}
\dm{A\bp} \leq (1-\ceta{_A})\dm{\bp}, \qquad \dm{\bp}=\max_{ij}|\bp_i-\bp_j|, \quad 1-\ceta{_A}=\frac{1}{2}\sum_k|a_{ik}-a_{jk}|.
\end{equation}

The estimate (\ref{eq:dpcontract}) in its $\ell^1$-dual form for column stochastic matrices goes back to Dobrushin \cite{Dob_1956},  and his so-called coefficient of ergodicity, $\ceta{_A}$,  was later used to quantify the relative entropy in discrete Markov processes \cite{CDZ_1993,CIRRSZ_1993} and the contractivity in models of opinion dynamics \cite{krause_discrete_2000}.
For completeness, we proceed with the proof for general vector norms $|\cdot|$. Fix any $i$ and $j$ which are to be chosen later, and set $\eta_k := \min\{a_{ik},a_{jk}\}$ so that $a_{ik}-\eta_k$ and $a_{jk}-\eta_k$ are non-negative. Then, for arbitrary $\bw\in{\mathbb R}^d$ we have, 
\begin{eqnarray*}
\langle \conv{\bp}_i-\conv{\bp}_j, \bw\rangle  & = & \sum_k a_{ik}\langle {\bp}_k, \bw\rangle- \sum_k a_{jk}\langle {\bp}_k,\bw\rangle \\
&= &  \sum_k (a_{ik}-\eta_k)\langle {\bp}_k,\bw\rangle- \sum_k (a_{jk}-\eta_k)\langle {\bp}_k, \bw\rangle \\
& \leq & \sum_k (a_{ik}-\eta_k)  \max_k \langle {\bp}_k,\bw\rangle- \sum_k (a_{jk}-\eta_k)\min_k \langle {\bp}_k, \bw\rangle \\
& = &
(1-\ceta{_A})\left(\max_k \langle {\bp}_k,\bw\rangle - \min_k \langle {\bp}_k,\bw\rangle\right ) \\
& \leq & (1-\ceta{_A})\max_{k\ell}\langle \bp_k-\bp_\ell,\bw\rangle 
 \leq  (1-\ceta{_A})\max_{k,
\ell}|\bp_k-\bp_\ell| |\bw|_*.
\end{eqnarray*}  
In the last step, we characterize the norm $|\cdot|$ by its dual $|\bw|_*=\sup_{\bw\neq 0} \langle \bw,\bz\rangle/|\bz|$ so that 
$\langle \bz,\bw\rangle \leq |\bz| |\bw|_*$. Now, choose $i$ and $j$ as a maximal pair such that
$\dm{\conv{\bp}}=|\conv{\bp}_i-\conv{\bp}_j|$; we then  have
\[
\dm{A\bp}\equiv \dm{\conv{\bp}} =|\conv{\bp}_i-\conv{\bp}_j|=\sup_{\bw\neq 0}  \frac{\langle \conv{\bp}_i-\conv{\bp}_j, \bw\rangle}{|\bw|_*} \leq (1-\ceta{_A})\max_{k,\ell}|\bp_k-\bp_\ell|
\]
and (\ref{eq:dpcontract}) now follows.\newline 

\noindent
Next, we consider the discrete time-marching system associated with (\ref{eq:framework}),
\[
\frac{\bp(t+\Delta t)-\bp(t)}{\Delta t} = \alpha\left(A\bp(t)-\bp(t)\right).
\]
Using (\ref{eq:dpcontract}) we obtain 
\[
\dm{\bp(t+\Delta t)} = \dm{(1-\alpha\Delta t)\bp(t) +  \alpha\Delta t\, A\bp(t)}\leq 
(1-\alpha\Delta t)\dm{\bp(t)} + \alpha\Delta t(1-\ceta{_A})\dm{\bp(t)},
\]
or after rearrangement,
\[
\frac{\dm{\bp(t+\Delta t)} -\dm{\bp(t)}}{\Delta t}
\leq -\alpha\ceta{_A}\dm{\bp(t)},
\]
and the desired bound (\ref{eq:contract}) follows by letting $\Delta t \rightarrow 0$.
\ifx%%%%%%%%%%%%%%%%%%%%%%%%%%%%%%
Next, we rewrite (\ref{eq:framework}) in the form,
\[
\frac{d}{dt} e^{\alpha t} \bp(t) = \alpha e^{\alpha t} A\bp(t).
\]
Using (\ref{eq:dpcontract}) we obtain 
\[
\frac{d}{dt} e^{\alpha t}\dm{\bp(t)} \leq \Dm{\frac{d}{dt} e^{\alpha t}\bp(t)} = \alpha e^{\alpha t} \dm{A\bp(t)} \leq \alpha e^{\alpha t}(1-\ceta{_A})\dm{\bp(t)},
\]
 which proves the desired bound (\ref{eq:contract}). 
\fi%%%%%%%%%%%%%%%%%%%%%%%%%%%%%%%%
In particular, we have
\begin{equation}\label{eq:xgd}
\max_{ij}|\bp_i(t)-\bp_j(t)| \leq \exp\,\left(-\alpha \int_0^t\!\!\ceta{{A(\P(s))}}ds\right) \dm{\bp(0)} \stackrel{t \rightarrow \infty}{\longrightarrow} 0,
\end{equation}
which proves (\ref{eq:qconsen}). Moreover,  
\begin{eqnarray*}
|\bp_i(t_2)-\bp_i(t_1)| &= &\left|\int_{\tau=t_1}^{t_2} \dot{\bp}_i(\tau)\,d\tau\right| \leq \alpha \max_{ij}\int_{\tau=t_1}^{t_2} |\bp_i(\tau)-\bp_j(\tau)|\,ds \\
& \leq & \alpha   \int^{t_2}_{\tau=t_1}\exp\,\left(-\alpha \Ceta(\tau)\right) d\tau \,\dm{\bp(0)}, \qquad \Ceta(\tau)=\int_0^\tau \ceta{A(\P(s)}ds,
\end{eqnarray*}
which tends to zero, $ |\bp_i(t_2)-\bp_i(t_1)| \rightarrow 0$ for $t_2>t_1 \gg1$, thanks to our assumption (\ref{eq:pconsen}).
It follows that the limit  $\bp_i(t) \stackrel{t\rightarrow \infty}{\longrightarrow} \con{\bp}_i$ exists, and hence all agents concentrate around  the same limit position,  an emerging consensus $\con{\bp} \in \Omega(0)$. The concentration rate  estimate  (\ref{eq:pconsen}) follows from  (\ref{eq:xgd}).
\end{proof}

Theorem \ref{thm:contract} relates  the emergence of consensus or flocking of $\dot{\bp}=A\bp-\bp$ to the behavior of $\int^t \ceta{{A(\P(s))}}ds \uparrow \infty$, and to this end we seek lower-bounds on the ``concentration factor'' $\ceta{A}$, which are easily checkable in terms of the entries of $A$. This brings us to the following definition.

\begin{mydefinition}[Active sets \cite{motsch_tadmor_new_2011}] Fix $\theta>0$. The  active set, $\Lambda(\theta)$, is the set of agents which influence every other agent ``more'' than $\theta$,
  \begin{equation}
    \label{eq:active_set_p}
    \Lambda(\theta) := \{ j \ \big| \ a_{ij} \geq \theta \ \text{for any } \ i\}.
  \end{equation}
  Observe that since $a_{ij}$ changes in time, $a_{ij}=a_{ij}(\P(t))$,  the number of agents in the active set $\Lambda(\theta)$ is a time dependent quantity, denoted $\lambda(\theta) =\lambda(\theta,t):= |\Lambda(\theta,t)|$. 
\end{mydefinition}

\noindent
The straightforward lower bound, $\ceta{_A} \geq \max_\theta \theta\cdot\lambda(\theta)$  yields

\begin{corollary}\label{cor:act}
The diameter  of the self-organized model \eqref{eq:framework} with a stochastic adjacency matrix  $A$, \eqref{eq:stoc}
satisfies the concentration estimate 
\begin{equation}\label{eq:act}
\frac{d}{dt}\dm{\bp(t)} \leq -\alpha (\max_\theta \theta\cdot\lambda(\theta,t))\, \dm{\bp(t)}.
\end{equation}
In particular, the lower  bound $\ceta{_A}\geq N\min_{ij}a_{ij}$, corresponding to $\theta=\min_{ij}a_{ij}$ with $\lambda(\theta,t)=N$,  yields \cite{ha_particle_2008}
\begin{equation}\label{eq:Na}
|\bp(t)-\con{\bp}| \lesssim \exp\left(-\alpha N \int^t_0 m(s)ds\right)\dm{\bp(0)}, \qquad m(s):= \min_{ij}a_{ij}(s).
\end{equation}
\end{corollary}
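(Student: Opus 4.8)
The plan is to read this off from Theorem \ref{thm:contract}, the only new ingredient being the easily-checkable lower bound $\ceta{A}\geq \max_\theta \theta\cdot\lambda(\theta)$ for the concentration factor in terms of the size of the active sets.

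First I would fix $\theta>0$ and unwind the definition of the active set $\Lambda(\theta)$: by \eqref{eq:active_set_p}, every $k\in\Lambda(\theta)$ satisfies $a_{ik}\geq\theta$ simultaneously for \emph{all} indices $i$. Hence for any pair $i,j$ and any such $k$ one has $\min\{a_{ik},a_{jk}\}\geq\theta$, and restricting the sum in the definition of $\ceta{A}$ to $k\in\Lambda(\theta)$ gives
\[
\sum_k \min\{a_{ik},a_{jk}\}\;\geq\;\sum_{k\in\Lambda(\theta)}\min\{a_{ik},a_{jk}\}\;\geq\;\theta\,|\Lambda(\theta)|\;=\;\theta\,\lambda(\theta).
\]
Since the right-hand side is independent of $i,j$, taking the minimum over $i,j$ preserves it, and then optimizing over the free parameter $\theta$ yields $\ceta{A}\geq\max_\theta\theta\cdot\lambda(\theta)$ (the maximum being attained since $\theta\mapsto\lambda(\theta)$ is a nonincreasing step function on $(0,\infty)$, so $\theta\mapsto\theta\lambda(\theta)$ attains its supremum). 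Substituting this lower bound into the concentration estimate \eqref{eq:contract} of Theorem \ref{thm:contract} immediately gives \eqref{eq:act}.

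For the particular bound \eqref{eq:Na}, I would specialize to the choice $\theta=m(t):=\min_{ij}a_{ij}(\P(t))$. With this choice every index $k$ lies in $\Lambda(\theta,t)$, so $\lambda(\theta,t)=N$ and the previous step reads $\ceta{A(\P(t))}\geq N\,m(t)$. Feeding this into the exponential decay estimate \eqref{eq:xgd} — equivalently, into \eqref{eq:pconsen} with $\Psi(t)=\int_0^t\ceta{A(\P(s))}\,ds\geq N\int_0^t m(s)\,ds$ — and invoking the convergence-to-consensus part of Theorem \ref{thm:contract} produces $|\bp_i(t)-\con{\bp}|\lesssim\exp\!\big(-\alpha N\int_0^t m(s)\,ds\big)\dm{\bp(0)}$, which is \eqref{eq:Na}.

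There is essentially no hard step here — the content is bookkeeping with the active sets. The one point that deserves a moment's care is the claim that $k\in\Lambda(\theta)$ forces \emph{both} $a_{ik}\geq\theta$ and $a_{jk}\geq\theta$ (not just one of the two), which is precisely why the definition of the active set quantifies over \emph{all} influenced agents $i$; and, on the side, one should note that the formal ``$\max_\theta$'' is a genuine maximum, so that \eqref{eq:act} is meaningful as stated.
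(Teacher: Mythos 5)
Your proposal is correct and follows exactly the paper's route: the paper simply notes the ``straightforward lower bound'' $\ceta{_A}\geq\max_\theta\theta\cdot\lambda(\theta)$ (which you verify by restricting the sum in $\ceta{_A}$ to $k\in\Lambda(\theta)$, using that membership in the active set gives $a_{ik}\geq\theta$ and $a_{jk}\geq\theta$ simultaneously) and then substitutes it into \eqref{eq:contract}, with the choice $\theta=\min_{ij}a_{ij}$, $\lambda=N$ giving \eqref{eq:Na} via the decay estimate of Theorem \ref{thm:contract}. Your bookkeeping, including the remark that the supremum in $\max_\theta\theta\lambda(\theta)$ is attained, fills in precisely what the paper leaves implicit.
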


\begin{myremark}
The  bound \eqref{eq:act} is an improvement of the ``flocking'' estimate \cite[Lemma 3.1]{motsch_tadmor_new_2011}
\[
\frac{d}{dt}\dm{\bp(t)} \leq -\alpha (\max_\theta \theta\cdot\lambda(\theta,t))^2\, \dm{\bp(t)}.
\]
\end{myremark}

Corollary \ref{cor:act} is a useful tool to verify consensus and flocking behavior for general adjacency matrices $A=\{a_{ij}\}$, whether symmetric or not. We demonstrate its application with the following sufficient condition for the emergence of a consensus in the opinion models (\ref{eqs:opinion}). In either the symmetric or non-symmetric case,
\[
    a_{ij} = \left\{\begin{array}{c}\displaystyle \frac{\phi_{ij}}{N}\\ \\ \displaystyle \frac{\phi_{ij}}{\deg_i}\end{array}\right\}  \geq \frac{\phi(\dm{\bx(t)})}{N}, \qquad \deg_i=\sum_k\phi_{ik}\leq N. 
\]
By proposition \ref{ppo:Omega_decreasing_ct},  the diameter $\dm{\bx(t)}$  is non-increasing, yielding the lower bound
\[
N a_{ij}(\P(t)) = \frac{N}{\deg_i}\phi(|\bx_i(t)-\bx_j(t)|) \geq \min_{r\leq \dm{\bx(t)}}\phi(r) \geq \min_{r\leq \dm{\bx(0)}}\phi(r),
\]
which in turns implies the following exponentially fast convergence towards a consensus $\con{\bx}$.

\begin{proposition}[Unconditional consensus]\label{prop:CV_opinions}
Consider the models for opinion dynamics \eqref{eqs:opinion} with an influence function $\phi(r)\leq 1$, and assume that  
\begin{equation}\label{eq:mispos}
m:=\min_{r\leq \dm{\bx(0)}} \phi(r)>0.
\end{equation}
Then, there is an exponentially fast convergence towards an emerging consensus $\con{\bx}$, 
\begin{equation}
    \label{eq:x_i_t_cv}
  |\bx_i(t)-\con{\bx}| \lesssim e^{-\alpha m t} \dm{\bx(0)}.
\end{equation}
\end{proposition}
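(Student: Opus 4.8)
The plan is to obtain Proposition \ref{prop:CV_opinions} as an immediate application of the diameter contraction estimate of Theorem \ref{thm:contract}, in the sharpened form of Corollary \ref{cor:act}. The only genuine ingredient is a lower bound on the entries $a_{ij}$ that is \emph{uniform in time}; everything else is bookkeeping with results already in hand. Such a uniform bound cannot come from the instantaneous values of $\phi$ alone, since a priori the agents could drift apart into a region where $\phi$ vanishes --- so the first thing I would secure is an a priori control on how far the $\bx_i$ can spread.

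That control is exactly Proposition \ref{ppo:Omega_decreasing_ct}: the convex hull $\Omega(t)$ is non-increasing, hence the diameter $\dm{\bx(t)} = \max_{ij}|\bx_i(t)-\bx_j(t)|$ never exceeds $\dm{\bx(0)}$. Consequently $|\bx_i(t)-\bx_j(t)| \leq \dm{\bx(0)}$ for all $t\geq 0$ and all $i,j$, so $\phi_{ij}(t) = \phi(|\bx_i(t)-\bx_j(t)|) \geq \min_{r\leq\dm{\bx(0)}}\phi(r) = m$. Next I would normalize the matrix: in either model \eqref{eq:opinion_formationa} or \eqref{eq:opinion_formationb} one already has $\sum_{j\neq i}a_{ij}\leq 1$ (because $\phi\leq 1$, so $\deg_i=\sum_k\phi_{ik}\leq N$), and therefore setting $a_{ii}:=1-\sum_{j\neq i}a_{ij}\geq 0$ makes $A$ row-stochastic as in \eqref{eq:stoc}, without rescaling $\alpha$. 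Since $a_{ij}=\phi_{ij}/N$ or $a_{ij}=\phi_{ij}/\deg_i$ with $\deg_i\leq N$, in both cases $N a_{ij}(\P(t)) \geq \phi_{ij}(t) \geq m$, and this is precisely where the standing assumption \eqref{eq:mispos} enters.

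Finally I would feed this into the concentration machinery. By Corollary \ref{cor:act}, $\ceta{A(\P(t))} \geq N\min_{ij}a_{ij}(\P(t)) \geq m$, so the concentration factor of Theorem \ref{thm:contract} satisfies $\Psi(t)=\int_0^t \ceta{A(\P(s))}\,ds \geq mt$. Then $\int^\infty e^{-\alpha\Psi(t)}\,dt \leq \int^\infty e^{-\alpha m t}\,dt<\infty$, so the hypothesis of \eqref{eq:pconsen} holds: there is an emerging consensus $\con{\bx}\in\Omega(0)$ with $|\bx_i(t)-\con{\bx}| \lesssim e^{-\alpha\Psi(t)}\dm{\bx(0)} \leq e^{-\alpha m t}\dm{\bx(0)}$, which is \eqref{eq:x_i_t_cv}. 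I do not anticipate any real obstacle here: once the non-increasing diameter pins the agents inside the range where $\phi\geq m$, the exponential rate falls out of the already established estimates; the one place to be slightly careful is to treat the symmetric ($\deg_i=N$) and non-symmetric ($\deg_i=\sum_k\phi_{ik}\leq N$) normalizations on the same footing, which the bound $Na_{ij}\geq\phi_{ij}$ does.
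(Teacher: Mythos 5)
Your proposal is correct and follows essentially the same route as the paper: the non-increasing convex hull (Proposition \ref{ppo:Omega_decreasing_ct}) gives the uniform bound $N a_{ij}(\P(t)) \geq \phi_{ij}(t) \geq m$ in both the symmetric and non-symmetric normalizations, and feeding this into Corollary \ref{cor:act} (equivalently, the condition \eqref{eq:pconsen} of Theorem \ref{thm:contract} with $\Psi(t)\geq mt$) yields the emerging consensus with rate $e^{-\alpha m t}$. The paper compresses exactly this argument into the paragraph preceding the proposition, so there is nothing to add.
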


Similar arguments apply for the flocking models (\ref{eqs:flocking}):
 since $\dm{\bv(t)}$ is non-increasing
then $\dm{\bx(t)} \leq \dm{\bx(0)} + t\dm{\bv(0)}$ and hence
\[
N a_{ij}(\P(t)) = \frac{N}{\deg_i}\phi(|\bx_i(t)-\bx_j(t)|) \geq \min_{r\leq \dm{\bx(t)}}\phi(r) \geq \min_{r\leq \dm{\bx(0)} + t\dm{\bv(0)}}\phi(r);
\]
if $\phi(\cdot)$ is decreasing then we can set $m(t)=\phi(\dm{\bx(0)} + t\dm{\bv(0)})$ and unconditional flocking follows from for
corollary \ref{cor:act} for   sufficiently strong interaction so that $\int^\infty \phi(s)ds=\infty$.
In fact, a more precise statement of flocking is summarized in the following.

\begin{proposition}[Unconditional flocking]\label{prop:CV_flocking}
Consider the flocking dynamics \eqref{eqs:flocking} with a decreasing influence function $\phi(r)\leq \phi(0)\leq 1$, and assume that  
\begin{equation}\label{eq:non_int}
\int^\infty \!\!\phi(s)ds=\infty;
\end{equation}
Then, the diameter of positions  remains uniformly bounded, $\dm{\bx(t)} \leq \Diam <\infty$, and there is an exponentially fast concentration of velocities around a flocking state $\con{\bv}$, 
\begin{equation}
    \label{eq:v_i_t_cv}
|{\bf v}_i(t)-\con{\bv}| \lesssim e^{-\alpha m t}\dm{\bv(0)}, \qquad m=\phi(\Diam).
\end{equation}
\end{proposition}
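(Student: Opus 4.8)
Write $V(t):=\dm{\bv(t)}$ for the velocity diameter and $D(t):=\dm{\bx(t)}$ for the spatial diameter. Applying Proposition~\ref{ppo:Omega_decreasing_ct} to the velocity variable $\bp\mapsto\bv$ shows that $V(t)$ is non-increasing, while $\dot\bx_i=\bv_i$ gives the (one-sided) differential inequality $\tfrac{d}{dt}D(t)\le V(t)$. The first step is to turn Theorem~\ref{thm:contract} into a closed differential inequality for $V$. Since $\phi$ is decreasing and $|\bx_i-\bx_j|\le D(t)$, in either normalization of \eqref{eqs:flocking} we have $Na_{ij}(\P(t))=\tfrac{N}{\deg_i}\phi(|\bx_i-\bx_j|)\ge\phi(D(t))$, using $\deg_i\le N$ (valid both for $\deg_i=N$ and for $\deg_i=\sum_k\phi_{ik}\le N\phi(0)\le N$); hence $\min_{ij}a_{ij}\ge\phi(D(t))/N$, and Theorem~\ref{thm:contract} combined with the bound $\ceta{_A}\ge N\min_{ij}a_{ij}$ of Corollary~\ref{cor:act} yields
\[
\frac{d}{dt}V(t)\le-\alpha\,\phi(D(t))\,V(t).
\]

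The substance of the proposition is the \emph{a priori} bound $D(t)\le\Diam<\infty$, and this is where the main obstacle lies: the exponential decay of $V$ requires $D$ to be bounded, while boundedness of $D$ is obtained from the decay of $V$ --- a circular coupling, the classical Cucker--Smale bootstrap. I would decouple it with a Lyapunov functional. Put $\Phi(r):=\int_0^r\phi(s)\,ds$ and $\mathcal E(t):=V(t)+\alpha\Phi(D(t))$. Using $\tfrac{d}{dt}D\le V$ and $\phi\ge0$,
\[
\frac{d}{dt}\mathcal E(t)\le-\alpha\phi(D(t))V(t)+\alpha\phi(D(t))\frac{d}{dt}D(t)\le-\alpha\phi(D(t))V(t)+\alpha\phi(D(t))V(t)=0,
\]
so $\mathcal E$ is non-increasing and $\alpha\Phi(D(t))\le\mathcal E(0)=V(0)+\alpha\Phi(D(0))$ for all $t$. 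A decreasing $\phi$ with $\int^\infty\phi=\infty$ is strictly positive everywhere (otherwise it vanishes identically past some radius and is integrable), so $\Phi:[0,\infty)\to[0,\infty)$ is a continuous increasing bijection, and therefore $D(t)\le\Diam:=\Phi^{-1}\!\big(\Phi(D(0))+V(0)/\alpha\big)<\infty$, which is the asserted uniform bound on the spatial diameter.

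With $D(t)\le\Diam$ established, monotonicity of $\phi$ gives $\phi(D(t))\ge\phi(\Diam)=m>0$, so the differential inequality above reads $\tfrac{d}{dt}V(t)\le-\alpha m\,V(t)$ and Gr\"onwall gives $V(t)\le e^{-\alpha m t}V(0)$. Since every $\bv_i(t)$ lies in the (shrinking) convex hull of $\{\bv_j(t)\}_j$, this bounds $|\bv_i(t)-\con{\bv}(t)|\le V(t)\lesssim e^{-\alpha m t}\dm{\bv(0)}$ with $\con{\bv}(t):=\tfrac1N\sum_j\bv_j(t)$ (conserved, $\con{\bv}(t)\equiv\con{\bv}$, in the symmetric case), which is \eqref{eq:v_i_t_cv}. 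What remains is routine: the Dini-derivative bookkeeping for the Lipschitz functions $D(\cdot)$ and $V(\cdot)$ (each a finite maximum of one-sided differentiable functions), and checking that $\deg_i\le N$ covers both the symmetric Cucker--Smale model and the non-symmetric model \eqref{eq:MT_model} at once. (A cruder route --- bound $D(t)\le D(0)+tV(0)$ and feed $m(t)\ge\phi(D(0)+tV(0))/N$ into Corollary~\ref{cor:act}, using $\int^\infty\phi=\infty$ --- already forces $V(t)\to0$, but yields neither the uniform diameter bound $\Diam$ nor the clean exponential rate.)
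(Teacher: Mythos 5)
Your proposal is correct and follows essentially the same route as the paper: the coupled diameter inequalities $\frac{d}{dt}\dm{\bv}\leq-\alpha\phi(\dm{\bx})\dm{\bv}$, $\frac{d}{dt}\dm{\bx}\leq\dm{\bv}$, combined through the decreasing Ha--Liu functional $\mathcal{E}(t)=\dm{\bv(t)}+\alpha\int_0^{\dm{\bx(t)}}\phi(s)\,ds$, the divergence assumption \eqref{eq:non_int} to extract the uniform bound $\Diam$, and then Gr\"onwall with $m=\phi(\Diam)$. Your explicit formula $\Diam=\Phi^{-1}\bigl(\Phi(D(0))+V(0)/\alpha\bigr)$ and the remark that $\phi$ must be strictly positive are harmless refinements of the paper's implicit definition of $\Diam$ via \eqref{eq:non_intii}.
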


\begin{proof}
Unlike the first-order models for consensus, the diameter  in second-order flocking models, $\dm{\bx(t)}$, may increase in time. The bound  $\Diam$ stated in \eqref{eq:v_i_t_cv} places a uniform bound on the maximal active diameter. To derive such a  bound observe that in the second order flocking models,   the evolution of the diameter of velocities satisfies, 
\[
    \frac{d}{dt}\dm{\bv(t)}  \leq  -\alpha \phi(\dm{\bx(t)})\dm{\bv(t)},
\]
 and is coupled with the evolution of positions $\dm{\bx(t)}$:  since $\dot{\bx} = {\bv}$, we have 
\[
\frac{d}{dt}\dm{\bx(t)} \leq \dm{\bv(t)}.
\]
The last two inequalities imply that the following energy functional introduced by Ha and Liu \cite{ha_simple_2009}, 
\[
\mathcal{E}(t) := \dm{\bv(t)} + \alpha\int_0^{\dm{\bx(t)}} \phi(s)ds,
\]
 is decreasing in time, 
\begin{equation}\label{eq:hl}
\alpha  \int_{\dm{\bx(0)}}^{\dm{\bx(t)}} \phi(s)\,ds \leq \dm{\bv(0)}-\dm{\bv(t)}\leq \dm{\bv(0)}.
\end{equation}
This, together with our assumption \eqref{eq:non_int} yield  the existence of  a finite $\Diam> \dm{\bx(0)}$ such that
\begin{equation}\label{eq:non_intii}
\alpha  \int_{\dm{\bx(0)}}^{\dm{\bx(t)}} \phi(s)\,ds \leq \dm{\bv(0)} \leq \alpha \int_{\dm{\bx(0)}}^{\Diam} \phi(s)\,ds.
\end{equation}
Thus, the active diameter of positions does not exceed $\dm{\bx(t)}\leq \Diam$, and since $\phi$ is assumed decreasing, the minimal interaction is $Na_{ij}\geq \phi(\dm{\bx(t)}) \geq \phi(\Diam)$ which yields 
\[
\frac{d}{dt}\dm{\bv(t)}\leq -\alpha \phi(\Diam)\dm{\bv(t)}. 
\]
This concludes the proof of \eqref{eq:v_i_t_cv}.
\end{proof}

\begin{myremark}[Global interactions]
Proposition \ref{prop:CV_opinions} derives an unconditional consensus under the assumption of \emph{global interaction}, namely, according to  \eqref{eq:mispos}  every agent interacts with every other agent as 
\[
a_{ij}\geq \frac{1}{N}\phi(|\bx_i-\bx_j|)\geq \frac{m}{N}>0.
\] 
Similarly, the unconditional flocking stated in proposition \ref{prop:CV_flocking} requires  global interactions, in the sense of having an influence function \eqref{eq:non_int} which is supported over the entire flock. Indeed, if the influence function $\phi$ is compactly supported, $supp\{\phi\}=[0,\D]$, then  assumption \eqref{eq:non_int}  tells us that
\[
 \dm{\bv(0)} \leq \alpha \int_{\dm{\bx(0)}}^{\D} \phi(s)\,ds;
\]
but according to \eqref{eq:hl}, $\displaystyle \alpha  \int_{\dm{\bx(0)}}^{\dm{\bx(t)}} \phi(s)\,ds \leq \dm{\bv(0)}$ and hence the support of $\phi$ remains larger than the diameter of positions, $\D\geq \dm{\bx(t)}$.
\end{myremark}

Proposition \ref{prop:CV_flocking} recovers the unconditional flocking results for the  C-S model, $\phi(r)\propto (1+r)^{-2\beta}, \beta>1/2$, obtained earlier using  spectral analysis, $\ell_1$-, $\ell_2$- and $\ell_\infty$-based estimates  \cite{cucker_emergent_2007,ha_particle_2008,canizo_collective_2009,ha_simple_2009,carrillo_asymptotic_2010a}. 
The derivations are different, yet they all required the symmetry of the C-S influence matrix, $a_{ij}=\phi_{ij}/N$. 
Here, we unify and generalize the results, covering both the symmetric and non-symmetric scenarios. In particular,  we  improve here the unconditional flocking result in the non-symmetric model obtained in  \cite[theorem 4.1]{motsch_tadmor_new_2011}. Although the tools are different --- notably, lack of conservation of momentum $\frac{1}{N} \sum_i {\bv}_i(t)$ in the non-symmetric case, we nevertheless end up with same condition (\ref{eq:non_int}) for unconditional flocking.  

%%%%%%%%%%%%%%%%%%%%%%%%%%%%%%%%%%%%%%%%%%%%%%%%%%%%%%%%%%%%%%%%%%
\subsection{Spectral analysis of symmetric models}\label{sec:symemer}
%%%%%%%%%%%%%%%%%%%%%%%%%%%%%%%%%%%%%%%%%%%%%%%%%%%%%%%%%%%%%%%%%%

A more precise description of the concentration phenomena is available for models governed by   \emph{symmetric} influence matrices,  $a_{ij}=a_{ji}$, such as (\ref{eq:opinion_formationa}) and (\ref{eq:CS_model}). 
Set $\bq_i=\bp_i-\ave{\bp}$ where $\ave{\bp}:=1/N\sum_i \bp_i$  is  the average (total momentum), which thanks to symmetry is conserved in time,
$\dot{\ave{\bp}}(t) \propto \sum_{ij} a_{ij}(\bp_i-\bp_j)=0$, and hence the symmetric system (\ref{eq:framework}) reads
\[
\frac{d}{dt} \bq_i(t)= \alpha \sum_{j=1}^N a_{ij}(\bq_j-\bq_i), \qquad \bq_i:=\bp_i-\ave{\bp}.
\]
Let $L_A:=I-A$ denote the \emph{Laplacian matrix} associated with $A$, with ordered eigenvalues $0=\lambda_1(L_A)\leq \lambda_2(L_A) \leq \ldots \lambda_N(L_A)$. The following estimate is at the heart of matter (here $|\cdot|$ denotes the usual Euclidean norm on ${\mathbb R}^d$),
%\begin{eqnarray}\label{eq:Vbound}
\begin{equation}\label{eq:Vbound}
\hspace*{1.3cm} \, \frac{1}{2}\frac{d}{dt} \sum_i|\bq_i(t)|^2 = \alpha \sum_{i,j} a_{ij}\langle \bq_j-\bq_i, \bq_i\rangle
= -\frac{\alpha}{2}\sum_{ij}a_{ij}|\bq_j-\bq_i|^2  
 \leq \! -\alpha \lambda_2(L_A) \sum_{i}|\bq_i(t)|^2. %\nonumber
\end{equation}
%\end{eqnarray}
The second equality is a straightforward consequence of $A$ being symmetric; the following inequality follows from the Courant-Fischer characterization of the second eigenvalue of $L_A$ in terms of vectors ${\bq}$ orthogonal to the first eigenvector  ${\bf 1}=(1,1,\ldots,1)^\top$,
\begin{equation}\label{eq:fidratio}
\lambda_2(L_A)= \min_{\sum \bq_k=0}\frac{\langle L_A\bq,\bq\rangle}{\langle\bq,\bq\rangle}\leq  
\frac{(1/2)\sum_{ij} a_{ij}|\bq_i-\bq_j|^2}{\sum_{i}|\bq_i|^2}.
\end{equation}
We end up with the following sufficient condition for the emergence of unconditional concentration.

\begin{theorem}[Unconditional concentration in the symmetric case]\label{thm:symm}
Consider  the self-organized model \eqref{eq:framework},\eqref{eq:stoc} with a symmetric adjacency matrix  $A$. 
Then the following concentration estimate holds
\begin{equation}\label{eq:symm_contract}
\veep{\bp(t)} \leq \exp\left(-{\alpha}\int^t \lambda_2(L_{A(\P(s))})ds\right) \veep{\bp(0)}, \qquad \vee^2_{\bp(t)}:= \frac{1}{N}\sum_{i}|\bp_i(t)-\ave{\bp}(0)|^2.
\end{equation}
In particular, if the interactions remain  ``sufficiently strong'' so that $\int^\infty \lambda_2(L_{A(\P(s))})ds  =\infty$, then there is convergence towards consensus $\bp_i(t) \rightarrow \con{\bp} = \ave{\bp}(0)$.
\end{theorem}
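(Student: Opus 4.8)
The plan is to integrate the energy inequality \eqref{eq:Vbound} by Gronwall's lemma, after identifying the quantity $\veep{\bp(t)}$ with the centered fluctuation used there. First I would record the conservation of the mean: since $A$ is symmetric, $\dot{\ave{\bp}}(t)=\alpha\sum_{ij}a_{ij}(\bp_j-\bp_i)=0$, because the double sum is antisymmetric under the swap $i\leftrightarrow j$ while $a_{ij}$ is symmetric. Hence $\ave{\bp}(t)\equiv\ave{\bp}(0)$, so the fluctuation $\bq_i(t):=\bp_i(t)-\ave{\bp}(0)=\bp_i(t)-\ave{\bp}(t)$ is exactly the centered variable of \eqref{eq:Vbound}, it satisfies $\sum_i\bq_i(t)=0$, and $\vee^2_{\bp(t)}=\frac{1}{N}\sum_i|\bq_i(t)|^2$.

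Next, write $V(t):=\sum_i|\bq_i(t)|^2=N\vee^2_{\bp(t)}$. Estimate \eqref{eq:Vbound} --- whose two ingredients are the algebraic identity $\sum_{ij}a_{ij}\langle\bq_j-\bq_i,\bq_i\rangle=-\frac12\sum_{ij}a_{ij}|\bq_j-\bq_i|^2$ (valid because $A=A^\top$) and the Courant--Fischer bound \eqref{eq:fidratio} (applicable since $\sum_i\bq_i=0$) --- reads $\frac{d}{dt}V(t)\le -2\alpha\lambda_2(L_{A(\P(t))})\,V(t)$. Gronwall's inequality then gives $V(t)\le\exp\!\big(-2\alpha\int_0^t\lambda_2(L_{A(\P(s))})\,ds\big)V(0)$; dividing by $N$ and taking square roots yields the concentration estimate \eqref{eq:symm_contract}.

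For the consensus statement, assume $\int^\infty\lambda_2(L_{A(\P(s))})\,ds=\infty$. Then the exponential prefactor in \eqref{eq:symm_contract} tends to $0$, so $\veep{\bp(t)}\to0$ as $t\to\infty$. Since each nonnegative term obeys $|\bp_i(t)-\ave{\bp}(0)|^2\le\sum_k|\bp_k(t)-\ave{\bp}(0)|^2=N\vee^2_{\bp(t)}$, it follows that $|\bp_i(t)-\ave{\bp}(0)|\to0$ for every $i$, i.e. $\bp_i(t)\to\con{\bp}=\ave{\bp}(0)$.

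No genuine difficulty arises: the analytic heart of the argument is already contained in \eqref{eq:Vbound}--\eqref{eq:fidratio}, and what remains is the routine Gronwall integration plus the bookkeeping of the conserved mean. The only point deserving a word of care is that $s\mapsto\lambda_2(L_{A(\P(s))})$ be integrable, which holds because $s\mapsto A(\P(s))$ is continuous along the (smooth) trajectory and eigenvalues depend continuously on the matrix.
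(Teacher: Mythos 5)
Your proposal is correct and follows essentially the paper's own route: the theorem is exactly the Gronwall integration of the energy estimate \eqref{eq:Vbound} combined with the Courant--Fischer bound \eqref{eq:fidratio}, together with the symmetry-based conservation of the mean $\ave{\bp}(t)=\ave{\bp}(0)$ that the paper records just before stating the result. You also track the factor of $2$ correctly, so that after taking square roots the rate $\exp\left(-\alpha\int^t\lambda_2(L_{A(\P(s))})\,ds\right)$ in \eqref{eq:symm_contract} comes out as stated.
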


\medskip
To apply theorem \ref{thm:symm},   we need to trace effective lower bounds on $\lambda_2(L_A)$; here are two examples which recover our previous results in section \ref{sec:emer}.

\smallskip\noindent
{\bf Example \#1} (revisiting theorem \ref{thm:contract}). If ${\mathbf r}$ is the Fiedler eigenvector associated with $\lambda_{N-1}(A)$ with ${\mathbf r}\perp {\boldsymbol 1}$, then (\ref{eq:dpcontract}) implies 
\[
 \lambda_{N-1}(A) = \frac{\dm{A\mathbf r}}{\dm{\mathbf r}} \leq \sup_{\bp\perp {\mathbf 1}} \frac{\dm{A\bp}}{\dm{\bp}} \leq 1-\ceta{_A}.
\]
We end up with the following lower bound for the Fiedler number  
\[
\lambda_2(L_A)=1-\lambda_{N-1}(A)\geq 1-(1-\ceta{_A}) \geq \ceta{_A}.
\]
Thus, theorem \ref{thm:contract}  is recovered here as a special case of the sharp bound (\ref{eq:symm_contract}) in theorem \ref{thm:symm}. The former has the advantage that it applies to non-symmetric models, but as  remarked earlier, is limited to models with global interactions; the latter can address the consensus of local, connected  models, consult section \ref{sec:nearest_neighbor} below. 

We remark in passing that  while theorem \ref{thm:contract} employs the $\ell^\infty$-based diameter,
$\dm{\bp}=\dm{\bp}_\infty=\max_{ij}|\bp_i-\bp_j|$, then theorem \ref{thm:symm} is in fact the corresponding $\ell^2$-based diameter, $\dm{\bp}^2_2:=\sum_{ij}|\bp_i-\bp_j|^2/(2N)=\veep{\bp}$.

\smallskip\noindent
{\bf Example \#2} (revisiting propositions \ref{prop:CV_opinions} and \ref{prop:CV_flocking}). A straightforward lower bound   $\lambda_2(L_A)\geq N\min a_{ij}$  recovers corollary \ref{cor:act},

\begin{equation}\label{eq:symop}
\veep{\bp(t)} \leq \exp\left(-{\alpha}\int^t m(s)ds\right) \veep{\bp(0)}, \qquad m(t):=\min_{ij}\phi(|\bx_i(t)-\bx_j(t)|),
\end{equation}

The characterization of concentration in theorem \ref{thm:symm} is sharp in the sense that the estimate (\ref{eq:Vbound}) is.
Indeed, it is well known that positivity of the Fiedler number, $\lambda_2(L_A)>0$, characterizes the \emph{algebraic connectivity} of the  graph associated with the adjacency matrix $A$, \cite{fiedler1973,mohar1991,chung_1997}. Theorem \ref{thm:symm} places a minimal requirement on the amount of \emph{connectivity as a necessary condition} for consensus\footnote{We ignore  possible cases in which  the self-organized dynamics may regain connectivity under ``cluster dynamics'', namely, agents separated into disconnected clusters  and  merging into each other at a later stage.}. 
There are  many characterizations for the algebraic connectivity of \emph{static} graphs \cite{chung_1997,Demmel_1996,fiedler1973,fiedler1989,GR_2001,Mer_1994,mohar1991,Scha_2007}. 
In the present context of self-organized dynamics (\ref{eq:framework}), however, the dynamics of $\dot{\bp}=\alpha(A\bp-\bp)$ dictates  the connectivity of $A=A(\P(t))$, which in turn, determines the clustering behavior of the dynamics, due to the  nonlinear dependence, $A=A(\P(t))$.  Thus, the intricate aspect of
the self-organized dynamics (\ref{eq:framework}) is   tracing its algebraic connectivity over time through the self-propelled  mechanism in which the nonlinear dynamics and  algebraic connectivity are tied together.  This issue will be explored in the next sections, dealing with clustering driven by \emph{local} interactions.
  
%%%%%%%%%%%%%%%%%%%%%%%%%%%%%%%%%%%%%%%%%%%%%%%%%%%%%%%%%%%%%%%%%
\section{Local interactions and  clustering}\label{sec:clustering}
\setcounter{equation}{0}
\setcounter{figure}{0}
%%%%%%%%%%%%%%%%%%%%%%%%%%%%%%%%%%%%%%%%%%%%%%%%%%%%%%%%%%%%%%%%%

In this section we consider the self-organized dynamics (\ref{eq:framework}) of a ``crowd'' of $N$ agents, 
$\P=\{\bp_i\}_{i=1}^N$ which does not interact globally: entries in their adjacency matrix may vanish, $a_{ij}\geq 0$.
The dynamics is dictated by local interactions and its large time behavior leads to the formation of one or more \emph{clusters}.

%%%%%%%%%%%%%%%%%%%%%%%%%%%%%%%%%%%%%%%%%%%%%%%%%%%%%%%%%%%%%%%%%%%%%%%%
\subsection{The formation of clusters}\label{sec:dissipation_clustering}
%%%%%%%%%%%%%%%%%%%%%%%%%%%%%%%%%%%%%%%%%%%%%%%%%%%%%%%%%%%%%%%%%%%%%%
A cluster $\C$ is a connected subset  of agents, $\{\bp_i\}_{i\in \C}$, which is separated from all other agents outside $\C$, namely

\medskip
$\#1. \quad  a_{ij}\neq 0 \quad \text{for all} \quad i,j\in \C$; \quad and \quad 
$\#2. \quad a_{ij}=0 \quad  \text{whenever} \quad  i\in\C \ \text{and} \ j\notin\C$.

\medskip\noindent
The important feature of such clusters is  their  self-contained dynamics in the sense that
\[
\frac{d}{dt}\bp_i = \alpha\sum_{j\in\C} a_{ij}(\bp_j-\bp_i),  \quad \sum_{j\in\C} a_{ij}=1, \qquad i\in \C.
\]
The dynamics of such self-contained clusters is covered by the concentration statements of global dynamics in section \ref{sec:global}. In particular, if cluster $\C(t)$ remains connected and isolated for sufficiently long time, then its agents will tend to concentrate around a  local consensus, 
\[
\bp_i(t) \stackrel{t \rightarrow \infty}{\longrightarrow}\con{\bp}_{\C}, \ \ \text{for all} \ \ i\in\C.
\] 
The intricate  aspect, however, is the last \emph{if} statement: the evolution of agents in a cluster $\C$ may become influenced by non-$\C$ agents, and in particular, different clusters may merge over time. 

\medskip
\noindent
In the following, we fix our attention on the particular models for opinion and flocking dynamics, expressed in the  unified framework  (\ref{eq:unified}),
\begin{subequations}\label{eqs:frame}
\begin{equation}\label{eq:frame}
\frac{d}{dt} \bp_i = \alpha\sum_{j=1}^N a_{ij} (\bp_j-\bp_i), \qquad a_{ij}=a_{ij}(\bx)=\frac{1}{\deg_i}\phi(|\bx_i-\bx_j|);
\end{equation}
Recall that  $\bp\mapsto \bx$ in opinion dynamics,  $\bp \mapsto \dot{\bx}$ in flocking dynamics, and $\deg_i$ is the degree,
\begin{equation}
\qquad
\left\{\begin{array}{ll} 
\deg_i=N, & \text{symmetric model},\\ \\
\deg_i=\sum_{j\neq i} \phi(|\bx_i-\bx_j|), & \text{non symmetric model}.\end{array}\right.
\end{equation}
\end{subequations}
We assume that the influence function $\phi$ is  compactly supported 
\begin{equation}
  \label{eq:phi_compactly_supported}
  \Supp = [0,\D].
\end{equation}

A cluster $\C=\C(t)\subset \{1,2,\ldots, N\}$ is dictated by the finite diameter of the influence function $\phi$ such that the following two properties hold:   

\medskip
$\#1. \quad  \displaystyle \max_{i,j\in\C(t)}|\bx_{i}(t)-\bx_{j}(t)| \leq  \D$; \quad and \quad $\#2. \quad  \displaystyle \min_{i\in \C(t), j\notin \C(t)}|\bx_{i}(t)-\bx_{j}(t)|> \D.$

When the  dynamics is global, $\D \gg \dm{\bx(0)}$, then the whole crowd of agents can be considered as one connected cluster.
Here we consider the \emph{local} dynamics when $\D$ is small enough relative to the active diameter of the global dynamics:  $\D<\dm{\bx(0)}$ in the opinion dynamics (\ref{eqs:opinion}), or $\D<\Diam$ in the flocking dynamics (\ref{eqs:flocking}). The statements of global concentration towards a consensus state  asserted in propositions \ref{prop:CV_opinions} and \ref{prop:CV_flocking} do not  apply. Instead, the local dynamics of agents leads them to concentrate in one or \emph{several} clusters --- consult for example, figures \ref{fig:evolution_X_farVscloseNeighbor} and \ref{fig:simu_2D}, \ref{fig:ratio011210} below. Our primary interest is in the large time behavior of such clusters. The generic scenario is a crowd of agents which is partitioned into a collection of clusters, ${\C}_k, \ k=1,\ldots K$, such that 
\[
\left\{\begin{array}{lll} \text{either} &  |\bp_i(t)-\bp_j(t)| \stackrel{t \rightarrow \infty}{\longrightarrow} 0, & \text{if} \ \  \ i,j\in \C_k \leftrightarrow  |\bx_i(t)-\bx_j(t)|\leq  \D  \ \\ \\
\text{or} & |\bx_i(t)-\bx_j(t)| > \D, & \text{if} \ \ \ i\in \C_k, j\in \C_\ell, \ \ k\neq \ell.
\end{array}\right.
\]
In this context, we raise the following two fundamental questions.

\medskip\noindent
{\bf Question \#1}.  Identify the class of initial configurations, $\P(0)$, which evolve into finitely many clusters, ${\C}_k, \ k=1,\ldots K$. In particular, characterize  the number of such clusters $K$ for $t\gg1$. 

\smallskip\noindent
{\bf Question \#2}. Assume that the initial configuration $\P(0))$ is connected. Characterize  the initial configuration $\P(0)$ which evolve into one cluster, $K(t)=1$ for $t\gg 1$, namely, the question of emerging of consensus in the local dynamics.

A complete answer to these questions should provide an extremely interesting insight into  local processes of self-organized dynamics, with many applications. In the next two sections we provide partial answers to these questions. 
We begin with the first result which shows that if the solution of (\ref{eqs:frame}) has bounded time-variation then it must be partitioned into a collection of clusters.

\begin{proposition}[Formation of clusters]\label{ppo:limit_set_ct}
  Let $\P(t)=\{\bp_k(t)\}_k$ be the solution of the opinion or flocking   models \eqref{eqs:frame} with compactly supported influence function $\Supp=[0,\D)$, and assume it has a bounded time-variation
\begin{equation}\label{eq:fast_en}
\int^\infty |\dot{\bp}_i(s)|ds < \infty. 
\end{equation}
Then $\P(t)$ approaches a stationary state, $\con{\bp}$,
which  is partitioned into $K$ clusters, $\{{\mathcal C}_k\}_{k=1}^K$,  such that  $\{1,2,\ldots,N\}=\cup_{k=1}^K \C_k$ and
  \begin{equation}\label{eq:pq_cluster} 
   \left\{
 \begin{array}{lcll}
   \text{either} &  \!\!\!\!\!\!\!\bp_i(t) \longrightarrow  \con{\bp}_{\C_k} & \!\!\!\!\!\!\text{as} \ t\rightarrow \infty, & \qquad \text{for all} \ \ i\in \C_k, \\ \\
\text{or} & \!\!\!\!|{\bx}_i(t) - {\bx}_j(t)| > \D & \text{for} \ t\gg 1, & \qquad \text{if} \ \ i\in \C_k, j\in \C_\ell, \ \ k\neq \ell.
\end{array}\right.
  \end{equation}
\end{proposition}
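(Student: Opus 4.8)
The plan is to show first that the bounded-variation hypothesis \eqref{eq:fast_en} forces each $\bp_i(t)$ to converge to a limit $\con{\bp}_i$ as $t\to\infty$, and then to analyze the combinatorial structure of the limit configuration $\con{\bp}=\{\con{\bp}_i\}$ via the induced graph. For the first part, note that \eqref{eq:fast_en} makes each trajectory $\bp_i(\cdot)$ Cauchy: $|\bp_i(t_2)-\bp_i(t_1)|\le \int_{t_1}^{t_2}|\dot\bp_i(s)|\,ds\to 0$ as $t_1,t_2\to\infty$, so the limit $\con{\bp}_i$ exists; in the flocking case one applies this to $\bp\mapsto\dot{\bx}$ to get limiting velocities, and one must also check that the positions $\bx_i(t)$ stabilize or at least that the relative positions $\bx_i(t)-\bx_j(t)$ converge, which follows from the same argument applied to the $\bx$-variables once one knows (via Proposition~\ref{prop:CV_flocking}, or directly) that $\dm{\bv(t)}\to 0$ so that $\dot{\bx}_i-\dot{\bx}_j\to 0$ and hence the gaps $|\bx_i(t)-\bx_j(t)|$ settle to limits $|\con{\bx}_i-\con{\bx}_j|$.

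Next I would introduce the limiting adjacency structure. Define a relation on $\{1,\dots,N\}$ by declaring $i\sim j$ if $|\con{\bx}_i-\con{\bx}_j|<\D$ (strict inequality, matching $\Supp=[0,\D)$), and let the clusters $\C_1,\dots,\C_K$ be the connected components of the graph generated by $\sim$. This immediately gives the dichotomy in \eqref{eq:pq_cluster}: either $i,j$ lie in the same component, or they lie in different components. The remaining work is to show that within a single component the limits actually \emph{coincide}, i.e. $\con{\bp}_i=\con{\bp}_j$ for $i,j\in\C_k$, and that across components one has the stronger separation $|\con{\bx}_i-\con{\bx}_j|>\D$ rather than merely $\ge\D$.

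For the within-component coincidence, I would argue by contradiction using the steady-state equation. Passing to the limit in \eqref{eq:frame}, the stationarity $\dot\bp_i\to 0$ (a consequence of \eqref{eq:fast_en} together with continuity of the right-hand side) gives $\sum_{j}\con{a}_{ij}(\con{\bp}_j-\con{\bp}_i)=0$ for every $i$, where $\con{a}_{ij}=\frac{1}{\con{\deg}_i}\phi(|\con{\bx}_i-\con{\bx}_j|)$; here for any pair $i,j$ in the same component at \emph{adjacent} graph distance we have $\phi(|\con{\bx}_i-\con{\bx}_j|)>0$ since $|\con{\bx}_i-\con{\bx}_j|<\D$ lies strictly inside the support and $\phi>0$ there. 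One then applies a maximum-principle/convexity argument to the stationary system restricted to $\C_k$ (it is row-stochastic on $\C_k$ since non-$\C_k$ weights vanish): picking a maximal dual vector $\bw$ and the index $i\in\C_k$ maximizing $\langle\con{\bp}_i,\bw\rangle$, the equation forces $\langle\con{\bp}_j,\bw\rangle=\langle\con{\bp}_i,\bw\rangle$ for all neighbors $j$ of $i$, and propagating through the connected component yields $\con{\bp}_i=\con{\bp}_j$ throughout $\C_k$. The strict cross-component separation is the more delicate point and I expect it to be the main obstacle: one cannot have $|\con{\bx}_i-\con{\bx}_j|=\D$ persisting, because $\phi(\D)=0$ only at the endpoint and the collapse of each cluster to a point means that for large $t$ agents in $\C_k$ are within $\varepsilon$ of $\con{\bx}_{\C_k}$; if two distinct cluster-points were exactly at distance $\D$, a perturbation argument (or the fact that the dynamics is real-analytic / the limit set has the gap structure built into \eqref{eq:phi_compactly_supported}) is needed to rule out the borderline case — in practice one either strengthens the support assumption to a half-open interval as written, so that distance exactly $\D$ means zero interaction and the components are genuinely determined by strict inequality, or one invokes that the finitely many limiting inter-cluster distances cannot accumulate at $\D$ from below for a generic configuration. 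I would handle it by working throughout with the open support $[0,\D)$ and defining clusters by the strict inequality, so that $|\con{\bx}_i-\con{\bx}_j|\ge\D$ across clusters together with ruling out equality (by the coincidence of limits within a would-be merged component) yields the strict bound.
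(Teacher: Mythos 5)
Your proposal reaches the conclusion by a genuinely different route than the paper, and for the core (opinion) case it is sound. The paper never passes to the limiting stationary system: it multiplies \eqref{eq:frame} through by $\deg_i$, takes the scalar product with $\bp_i$ and sums, so that the symmetry of $\phi_{ij}=\phi(|\bx_i-\bx_j|)$ turns the right-hand side into $-\frac{\alpha}{2}\sum_{ij}\phi_{ij}|\bp_j-\bp_i|^2$ (see \eqref{eq:abound}); since the left-hand side tends to zero (bounded $\bp_i$, $\deg_i\leq N$, $\dot{\bp}_i\to 0$), every limiting term $\phi(|\con{\bx}_i-\con{\bx}_j|)\,|\con{\bp}_i-\con{\bp}_j|^2$ must vanish, which gives the dichotomy \eqref{eq:pq_cluster} pairwise with no need to introduce the limit graph or its components in advance. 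Your route --- pass to the limit in the equation, note that cross-component weights vanish because $\Supp=[0,\D)$, and run a discrete maximum principle in every direction $\bw$ on each connected component --- is equally valid and has the small advantage of using only nonnegativity and row-stochasticity of the limiting weights, not the symmetry of $\phi_{ij}$; the price is that you must justify convergence of the $a_{ij}(t)$ themselves (hence worry about $\con{\deg}_i>0$, or better, multiply through by $\deg_i$ as the paper does), a technicality the energy identity sidesteps.

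Two caveats. First, your proposed fix for the flocking case does not work as stated: proposition \ref{prop:CV_flocking} requires \eqref{eq:non_int}, which fails for compactly supported $\phi$, and in the local regime $\dm{\bv(t)}$ need \emph{not} tend to zero --- distinct clusters generically flock with distinct limit velocities (then $|\bx_i-\bx_j|\to\infty$, which is harmless), while two clusters sharing a limit velocity make the convergence of $\bx_i-\bx_j$ genuinely delicate; the paper itself simply writes $\bx_i^\infty$ and glosses over this point, so you are in good company, but the appeal to proposition \ref{prop:CV_flocking} should be dropped. Second, your handling of the borderline distance exactly equal to $\D$ is muddled: if $|\con{\bx}_i-\con{\bx}_j|=\D$ then $\phi$ vanishes there and no ``merged component'' argument can force coincidence of limits, so the honest cross-cluster conclusion from either argument is $\geq\D$ rather than the strict inequality; note that the paper's own proof silently skips this case as well, passing directly from ``$>\D$'' to ``$<\D$''.
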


\begin{myremark}
Observe that if the solution decays fast enough --- in particular, if $\bp(t)$ decays exponentially fast,
$|\bp_i(t)-\bp_i^\infty| \lesssim e^{-C (t-t_0)}, \quad t\geq t_0>0$ $($as in the unconditional consensus and flocking of global interactions discussed in section \ref{sec:global}$)$, 
then it has a bounded time-variation.
\end{myremark}

\begin{proof} 
  Assumption  (\ref{eq:fast_en}) implies
\[
|\bp_i(t_2)-\bp_i(t_1)|\leq \int_{t_1}^{t_2} |\dot{\bp}_i(s)|ds \ll 1 \ \ \text{for} \ \ t_2>t_1\gg1,
\]
hence each agent approaches  its own stationary state, $\bp_i(t) \stackrel{t\rightarrow \infty}{\longrightarrow} \con{\bp}_i$. 
We claim that 
$\dot{\bp}_i(t)\stackrel{t\rightarrow \infty}{\longrightarrow} 0$.
 To this end, we distinguish  between the two cases of first-order opinion dynamics and second-order flocking dynamics. In opinion dynamics, $\bp\mapsto \bx$: since the expression of the right of (\ref{eqs:frame}),
\begin{equation}\label{eq:bpdot}
\dot{\bp}_i(t) = \frac{\alpha}{\deg_i(t)}\sum_j \phi(|\bx_i(t)-\bx_j(t)|)(\bp_i(t)-\bp_j(t)), \quad \deg_i(t)= \sum_j \phi(|\bx_i(t)-\bx_j(t)|), 
\end{equation}
has a limit (involving $\con{\bp}_i=\con{\bx}_i$), it follows that $\lim_{t\rightarrow\infty} \dot{\bp}_i(t)$ exists and by (\ref{eq:fast_en}) it must be zero, $\dot{\bp}_i(t) \rightarrow 0$.
In the case of flocking dynamics, $\bp \mapsto \dot{\bx}$, and there are two types of pairs of agents $(i,j)$: either they have the same limiting ``velocity'', $\con{\bp}_i-\con{\bp}_j=0$, and since $\phi$ is bounded,  
\[
\phi(|\bx_i(t)-\bx_j(t)|)(\bp_i(t)-\bp_j(t))\stackrel{t\rightarrow \infty}{\longrightarrow} 0;
\]
or --- if $\con{\bp}_i-\con{\bp}_j \neq 0$ then, 
\begin{equation}\label{eq:ndtype}
|\con{\bx}_i-\con{\bx}_j| \gtrsim  |\con{\bp}_i-\con{\bp}_j|t > \D, \qquad t\gg 1,
\end{equation}
and hence
\[
\phi(|\bx_i(t)-\bx_j(t)|)(\bp_i(t)-\bp_j(t)) =0, \qquad t\gg 1.
\]
In either case, the expression on the right of (\ref{eq:bpdot}) vanishes as $t\rightarrow \infty$.
\newline
Now, take the scalar product of (\ref{eq:bpdot}) against ${\bp}_i$ and sum,
\begin{equation}\label{eq:abound}
    \sum_i \deg_i \langle\dot{\bp}_i,{\bp}_i\rangle = \alpha \sum_{ij} \phi_{ij} \langle{\bp}_j-{\bp}_i,{\bp}_i\rangle \equiv
    -\frac{\alpha}{2} \sum_{ij} \phi_{ij} |{\bp}_j-{\bp}_i|^2.
\end{equation}
  Since ${\bp}_i\in \Omega(0)$, $\deg_i \leq N$ are uniformly bounded and $\dot{\bp}_i(t) \rightarrow 0$ on the left, it follows that the expression on the right tends to zero.  In opinion dynamics ($\bp\mapsto \bx$) we can pass to the limit on the expression on the right which yields
  \begin{equation}
    \label{eq:convergence_sum_ct}
     \phi(|\bx_i^\infty-\bx_j^\infty|) |{\bp}_i^\infty-{\bp}_j^\infty|^2= 0, \quad   \   \text{for all} \ \ i,j \leq N.
  \end{equation}
Thus, if $|\con{\bx}_i-\con{\bx}_j|> \D$, then agents $i$ and $j$ are in separate clusters. Otherwise, when they are in the same cluster, say $i,j\in \C_k$ so that $|\con{\bx}_i-\con{\bx}_j| <\D$, then $\phi(|\bx_i^\infty-\bx_j^\infty|)>0$ and by (\ref{eq:convergence_sum_ct}) they must share the same stationary state, $\con{\bp}_i=\con{\bp}_j=:\con{\bp}_{\C_k}$, that is,  (\ref{eq:pq_cluster}) holds. In the case of flocking dynamic, $\bp \mapsto \dot{\bx}$, we either have one type of pairs, $|\bp_i(t)-\bp_j(t)| \stackrel{t\rightarrow \infty}{\longrightarrow}0$ or a second type of pairs, (\ref{eq:ndtype}),
namely,  (\ref{eq:pq_cluster}) holds.
\end{proof}

\medskip\noindent
We now turn our attention to the number of clusters, $K$.

%%%%%%%%%%%%%%%%%%%%%%%%%%%%%%%%%%%%%%%%%%%%%%%%%%%%%%%%%%%%%
\subsection{How many clusters?}\label{sec:countK}
%%%%%%%%%%%%%%%%%%%%%%%%%%%%%%%%%%%%%%%%%%%%%%%%%%%%%%%%%%%%%
Note that if  $\bp^\infty=(\bp_1^\infty,\ldots,\bp_N^\infty)^\top$ be a stationary state of (\ref{eqs:frame}) then $\bp^\infty$ is an eigenvector associated with the nonlinear eigenvalue problem, 
\[
A(\bx^\infty)\bp^\infty=\bp^\infty, 
\]
corresponding to the eigenvalue $\lambda_N(A(\bx^\infty))=1$.   
 Actually, the number of stationary clusters can be directly computed from the multiplicity of leading spectral eigenvalues of $\lambda_N(A(\con{\bx}))$.
\begin{proposition}  \label{ppo:cluster_eigenvalueA}
  Assume that the crowd of $N$ agents $\{\bp_i(t)\}_{i=1}^N$ is partitioned into $K$ clusters, $\{1,2,\ldots,N\}=\cup_{k=1}^{K(t)}\C_k$. Then, the number of clusters, $K=K(t)$, equals the geometric multiplicity of $\lambda_N(A(\bx(t))=1$, 
  \begin{equation}
    \label{eq:S_eigenvalue1}
    K(t) = \left\{\#\lambda_N(A(\bx(t)) \ | \ \lambda_N(A(\bx(t))=1  \right\}.
  \end{equation}
\end{proposition}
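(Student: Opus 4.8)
The plan is to reorganize the agents so the adjacency matrix is block-structured and then count eigenvalues block by block. First I would relabel the indices $\{1,\ldots,N\}$ so that agents belonging to the same cluster $\C_k$ occupy consecutive positions. Because of cluster property $\#2$, $a_{ij}=0$ whenever $i\in\C_k$ and $j\notin\C_k$; hence, after this relabeling, $A(\bx(t))$ is block-diagonal, $A = \mathrm{diag}(A_1,\ldots,A_K)$, where $A_k$ is the $|\C_k|\times|\C_k|$ submatrix of interactions \emph{within} cluster $\C_k$. Each $A_k$ is row-stochastic by the self-contained dynamics relation $\sum_{j\in\C_k} a_{ij}=1$, and it has strictly positive entries on the diagonal and wherever $i,j\in\C_k$ (cluster property $\#1$ gives $a_{ij}\neq0$ for $i,j\in\C_k$).

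Next I would apply Perron–Frobenius theory to each block $A_k$. Since $A_k$ is a nonnegative row-stochastic matrix all of whose relevant off-diagonal entries within the cluster are positive, $A_k$ is irreducible (indeed essentially positive: $a_{ij}>0$ for all $i,j\in\C_k$). Therefore its spectral radius $1$ is a simple eigenvalue of $A_k$, with one-dimensional eigenspace spanned by the constant vector $\mathbf 1_{\C_k}$. Consequently $\lambda=1$ is an eigenvalue of the block-diagonal matrix $A$ with geometric multiplicity exactly $K$: one independent eigenvector $\mathbf 1_{\C_k}$ (extended by zero outside $\C_k$) for each $k$. Since $A$ is row-stochastic its spectral radius is $1$, so $\lambda_N(A)=1$ and the count \eqref{eq:S_eigenvalue1} follows. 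The same argument simultaneously identifies these eigenvectors with the stationary states of \eqref{eqs:frame}: $A(\bx(t))\bp^\infty=\bp^\infty$ forces $\bp^\infty$ to be constant on each cluster, recovering \eqref{eq:pq_cluster}.

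The main subtlety — and the step I would be most careful about — is the irreducibility/positivity claim for each block $A_k$. The definition of a cluster in the excerpt requires $a_{ij}\neq0$ for \emph{all} $i,j\in\C_k$, i.e. each block is actually a full positive matrix, which makes Perron–Frobenius immediate; but if one only had connectivity of the weighted graph on $\C_k$ (as the surrounding text sometimes suggests), one would instead need to invoke irreducibility of $A_k$ via the path-connectedness of its associated directed graph, together with the positivity of the diagonal entries $a_{ii}$ to rule out periodicity, before concluding that $1$ is simple. I would state explicitly which hypothesis is being used. A secondary point worth a remark is the distinction between geometric and algebraic multiplicity: the proposition asserts geometric multiplicity, and that is exactly what the eigenvector count $\{\mathbf 1_{\C_k}\}_{k=1}^K$ delivers, so no Jordan-structure analysis is needed.
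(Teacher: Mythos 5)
Your proof is correct, and it reaches the conclusion by a slightly different route than the paper. Both arguments identify the cluster indicator vectors as a basis of candidates; the difference lies in how one shows there are no other $1$-eigenvectors. The paper works directly with the unpermuted matrix: it verifies $A\br^k=\br^k$ for each indicator $\br^k$ and then shows, by a maximal-entry (discrete maximum principle) argument using $\sum_{j\in\C_k}a_{pj}=1$ with $a_{pj}>0$, that any vector fixed by $A$ must be constant on each cluster --- in effect an elementary, self-contained proof of the simplicity of the Perron root of each positive stochastic block. You instead relabel to get the block-diagonal form $A=\mathrm{diag}(A_1,\ldots,A_K)$ and outsource the uniqueness step to Perron--Frobenius: each $A_k$ is an entrywise positive row-stochastic matrix (cluster property $\#1$), so $\lambda=1$ is simple for each block, giving geometric multiplicity exactly $K$, consistent with \eqref{eq:S_eigenvalue1}. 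Your version is shorter and makes the count structurally transparent, at the price of invoking a standard theorem; the paper's version avoids any appeal to irreducibility or Perron--Frobenius and needs no relabeling. One small remark: your worry about ruling out periodicity is unnecessary even in the weaker connectivity scenario, since for an irreducible nonnegative matrix the spectral radius is a simple eigenvalue regardless of periodicity --- aperiodicity only excludes \emph{other} peripheral eigenvalues, none of which equals $1$; and under the paper's cluster definition the blocks are fully positive anyway, so the issue does not arise.
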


\begin{proof} We include the rather standard argument for completeness. 
  Suppose that the dynamics of (\ref{eqs:frame}) at time $t$ consists of  $K=K(t)$ clusters, $\displaystyle \cup_{k=1}^{K(t)}{\mathcal C}_k$. Define the vector ${\br}^k=(r^k_1,\ldots,r^k_N)^\top$ such that:
  \begin{displaymath}
    r^k_j = \left\{
      \begin{array}{ll}
        1 &  \text{if } j\in \mathcal{C}_k \\
        0 &  \text{otherwise}
      \end{array}
      \right. .
  \end{displaymath}
  We obtain
  \begin{displaymath}
    \left(A {\br}^k\right)_i = \sum_j a_{ij} r^k_j = \sum_{j\in \mathcal{C}_k} a_{ij}.
  \end{displaymath}
  Using the fact that $A$ is a stochastic matrix and that $a_{ij}=0$ if  ${\bx}_i$ and ${\bx}_j$ are not in the same cluster, we deduce
  \begin{displaymath}
    \sum_{j\in \mathcal{C}_k} a_{ij} = \left\{
      \begin{array}{ll}
        1 &  \text{if } i\in \mathcal{C}_k \\
        0 &  \text{otherwise}
      \end{array}
      \right\}=\br^k_i,
  \end{displaymath}
  and therefore $A{\br}^k = {\br}^k$. Thus, associated with each cluster $\mathcal{C}_k$, there is an eigenvector ${\br}^k$ corresponding to $\lambda_N(A)=1$. To conclude the proof, we have to show  that there are no other vectors ${\br}$ satisfying $A{\br}={\br}$.
Indeed, assume that  $A{\br}={\br}$,  
  \begin{displaymath}
    \sum_j a_{ij} r_j = r_i \qquad \text{ for any } i.,
  \end{displaymath}
  Fix a cluster $\mathcal{C}_k$. Then for any  $p\in {\mathcal C}_k$  we have
  \begin{displaymath}
    \sum_{j\in\mathcal{C}_k} a_{pj} r_j = r_p \qquad \text{ for any } p\in\mathcal{C}_k.
  \end{displaymath}
  Denote by $r_q$  the maximal  entry  of $r_j$'s on the left, corresponding to  some $q\in {\mathcal C}_k$: since $\sum_{j\in\mathcal{C}_k} a_{pj}=1$ with $a_{pj}>0$, we deduce that for any $p\in {\mathcal C}_k$ we have $r_p =\sum a_{pj}r_j \leq \sum a_{pj}r_q= r_q$. Thus,  the entries of ${\br}$ are constant on the cluster $\mathcal{C}_k$, so that $\br \propto{\br}^k$.
\end{proof}

%%%%%%%%%%%%%%%%%%%%%%%%%%%%%%%%%%%%%%%%
\subsection{Numerical simulations with local dynamics}
%%%%%%%%%%%%%%%%%%%%%%%%%%%%%%%%%%%%%%%%%%%

We illustrate  the emergence of clusters with one- and two-dimensional simulations of the opinions dynamics model (\ref{eq:opinion_formationb}),
\begin{equation}  \label{eq:matrix_A}
  \frac{d}{dt}\bx_i = \sum_j \frac{\phi_{ij}}{\sum_k \phi_{ik}}(\bx_j-\bx_i), \qquad \bx_i(t)\in {\mathbb R}^d.
\end{equation}
The influence  function, $\phi$, was taken as the characteristic function of the interval $[0,1]$: $\phi(r) = \chi_{[0,1]}$, and we use the Runge-Kutta method of order $4$ with a time step of $\Delta t=.05$, for the  time discretization of the system of ODEs (\ref{eq:matrix_A}).

\begin{figure}[ht]
  \centering
  \includegraphics[scale=.30]{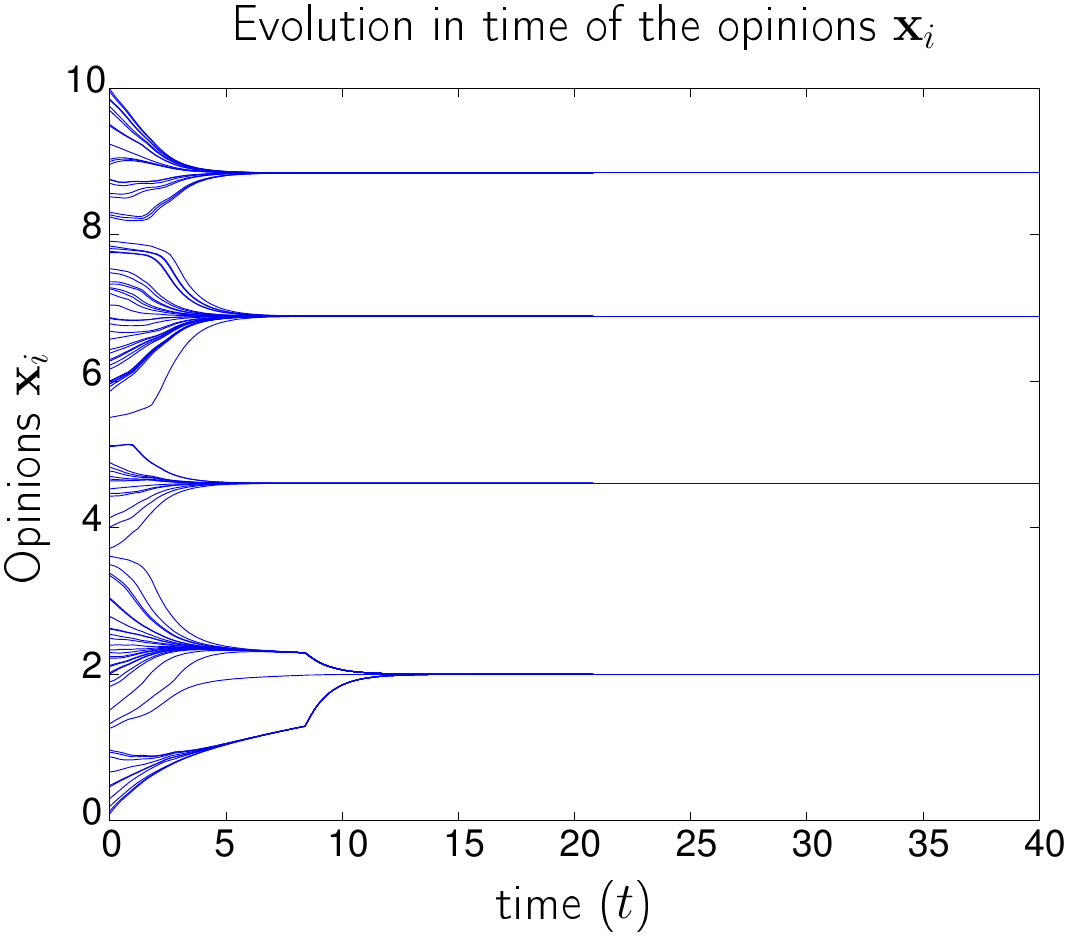} \quad
  \includegraphics[scale=.30]{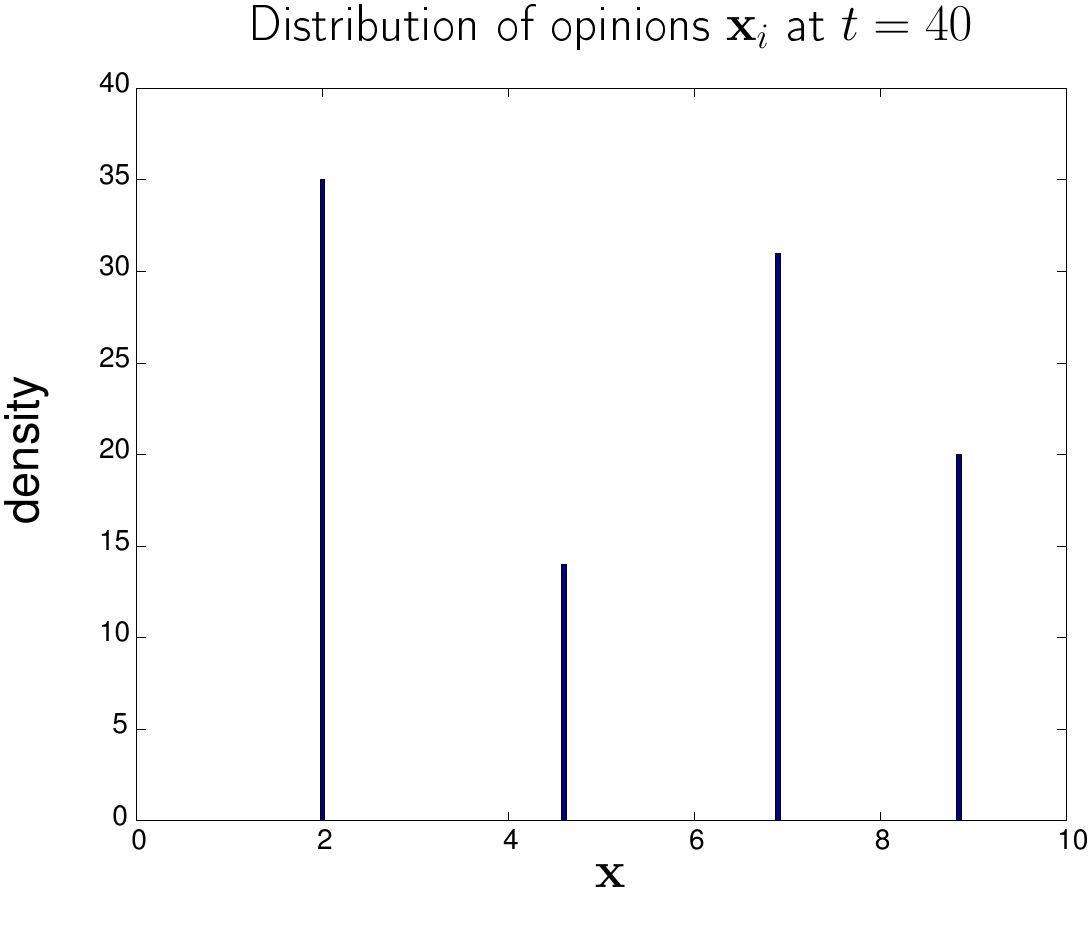}
  \caption{{\small The opinion model (\ref{eq:opinion_formationb}) with $M=100$ agents and $\phi=\chi_{[0,1]}$ ({\bf Left} figure) and the histogram of the distribution of ${\bf x}_i$ at $t=40$ unit time ({\bf Right} figure). We observe the formation of $4$ clusters separated by a distance greater than $1$.}}
  \label{fig:evolutionXratio1_bis}
\end{figure}

As a first example, we run a simulation of the one-dimensional opinion model, $d=1$, subject to initial configuration of $N=100$ agents uniformly distributed on the interval $[0,10]$. In the figure \ref{fig:evolutionXratio1_bis} (Left), we plot the evolution of the opinions ${\bx}_i(t)$ in time. We observe the formation of 4 clusters after $15$ unit time. The histogram of the distribution of agents at the final time $t=40$ (figure \ref{fig:evolutionXratio1_bis} right) shows that the distance between the clusters is greater than $1$ as predicted by proposition \ref{ppo:limit_set_ct}. We also observe that the number of opinions contained in each cluster differs (respectively 35, 14, 31 and 20 agents). Indeed, the larger cluster at $x\approx2$ with $35$ opinions is a merge between $3$ {\it branches} (figure \ref{fig:evolutionXratio1_bis}) with one branch in the middle connecting the two external branches. When the two external branches finally connect at $t\approx8.5$ (their distance is less than $1$), we observe an abrupt change in the dynamics following by a merge of the $3$ branches into a single cluster.

To analyze the cluster formation, we also look at the evolution of the eigenvalues of the matrix of interaction $A(\bx(t))$ in (\ref{eq:matrix_A}), $a_{ij}=\phi_{ij}/\sum_k \phi_{ik}$. In the figure \ref{fig:eigenvalues_ratio1_bis}, we represent the evolution of the $8$ first eigenvalues of the matrix $A$.  From $t=0$ to $t\approx3.5$, we observe that the 4 first eigenvalues converge to $1$ which counts for the fact that only $4$ clusters remains at this time. Then the matrix $A(\bx(t))$ remains constant in time from $t\approx3.5$ to $t\approx8.5$. At $t\approx8.5$, two {\it branches} (see figure \ref{fig:evolutionXratio1_bis}) re-connect, and the two eigenvalues $\lambda_5$ and $\lambda_6$ equal zero.
This confirms proposition  \ref{ppo:cluster_eigenvalueA} where the additional multiplicity of the spectral eigenvalue $\lambda_N(A(\bx(t))=1$, indicates the formation of a new cluster.

\begin{figure}[ht]
  \centering
  \includegraphics[scale=.40]{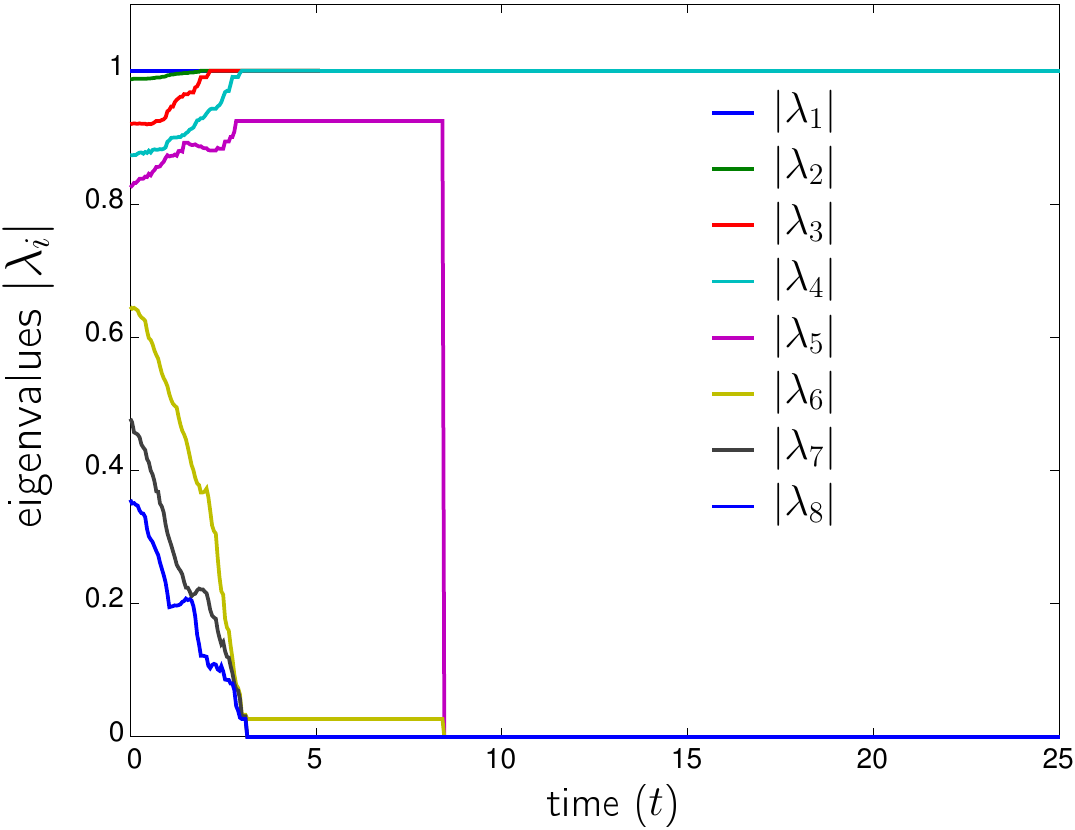}
  \caption{{\small Absolute values of the eigenvalues of the matrix $A$ (\ref{eq:matrix_A}) during the simulation given in figure \ref{fig:evolutionXratio1_bis}. The number of eigenvalues equal to $1$ corresponds to the number of clusters.}}
  \label{fig:eigenvalues_ratio1_bis}
\end{figure}

Next  we turn  to illustrate the dynamics of the two-dimensional, $d=2$, opinion model (\ref{eq:opinion_formationb}). With this aim, we run the model  starting with an initial condition of $N=1000$ agents distributed uniformly on the square  $[0,10]\times[0,10]$.
We present, in figure \ref{fig:simu_2D}, several snapshots of the simulations at different time ($t=0,\,2,\,4,\,6,\,12 \text{ and } 30$ unit time). As in the 1D case, we first observe a fast transition to a cluster formation (from $t=0$ to $t=6$). However at time $t=12$, the dynamics does not have yet converged to a stationary state, we observe at the upper-left that three branches are at distance less than $1$. This scenario is similar to the one observed in figure \ref{fig:evolutionXratio1_bis} with the apparition of $3$ {\it branches}. At $t=30$, the three clusters at the upper left have finally merged and the system has reached a stationary state: each cluster it at distance greater than $1$ from each other.

\begin{figure}[p]
  \centering
  \includegraphics[scale=.45]{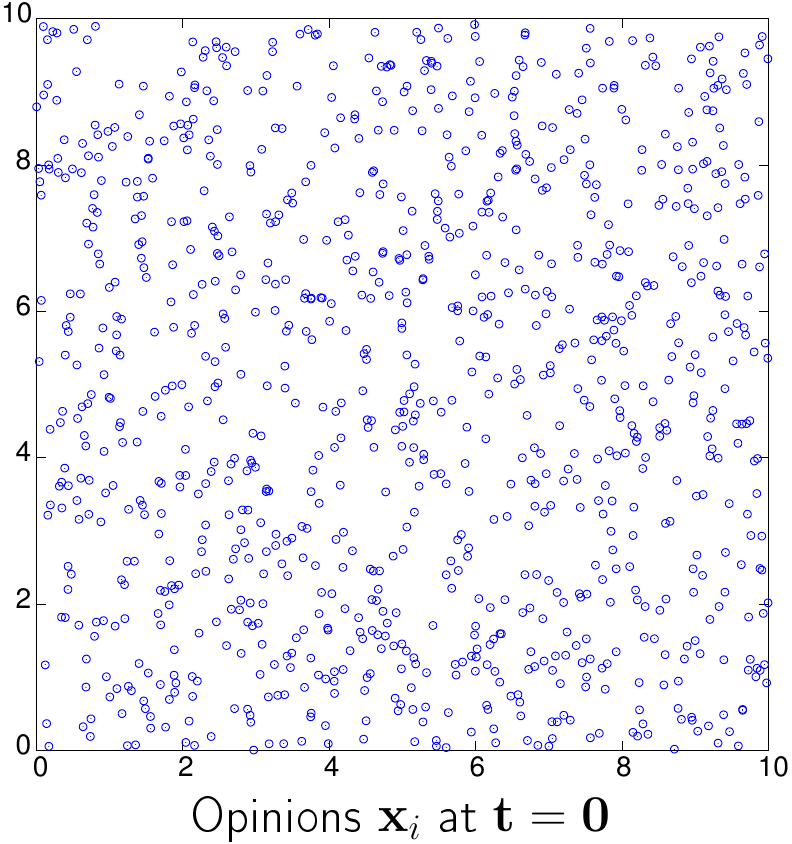} \quad
  \includegraphics[scale=.45]{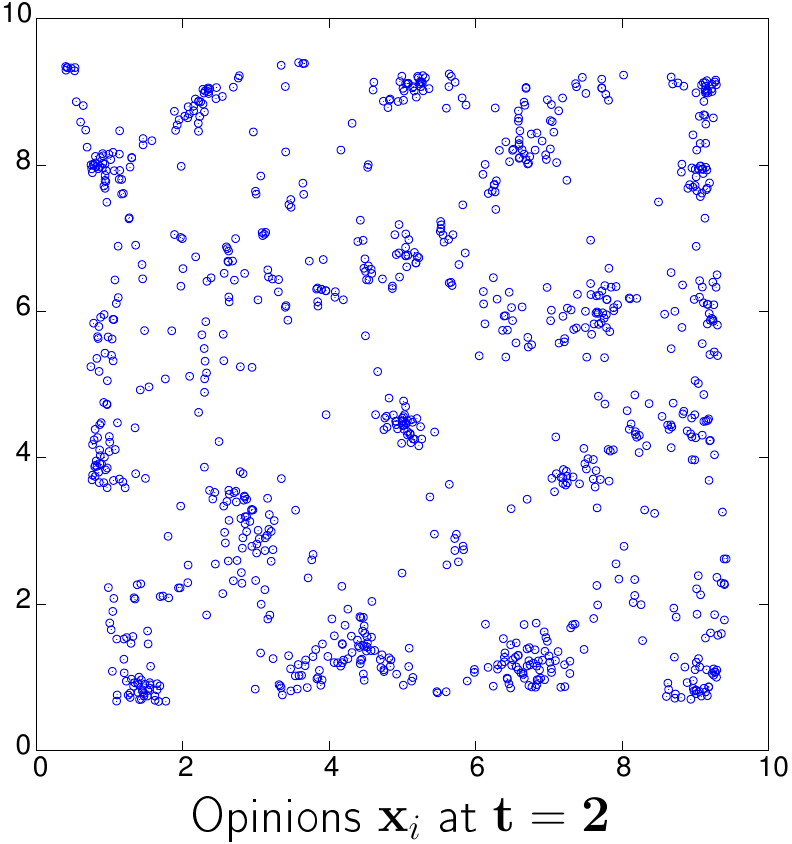}

  \bigskip

  \includegraphics[scale=.45]{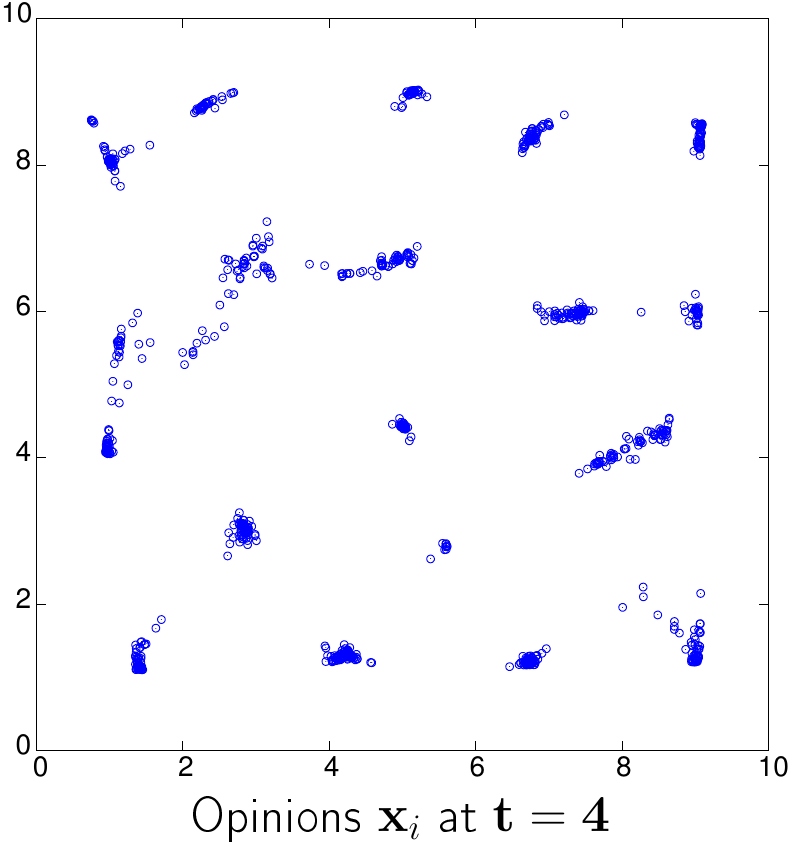} \quad
  \includegraphics[scale=.45]{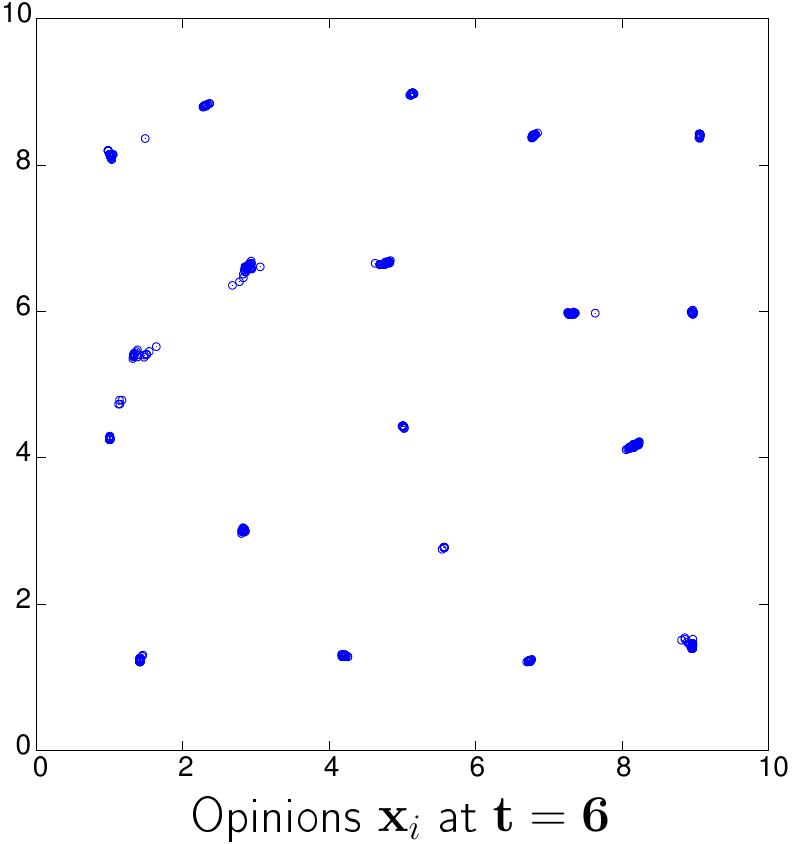}

  \bigskip
  
  \includegraphics[scale=.45]{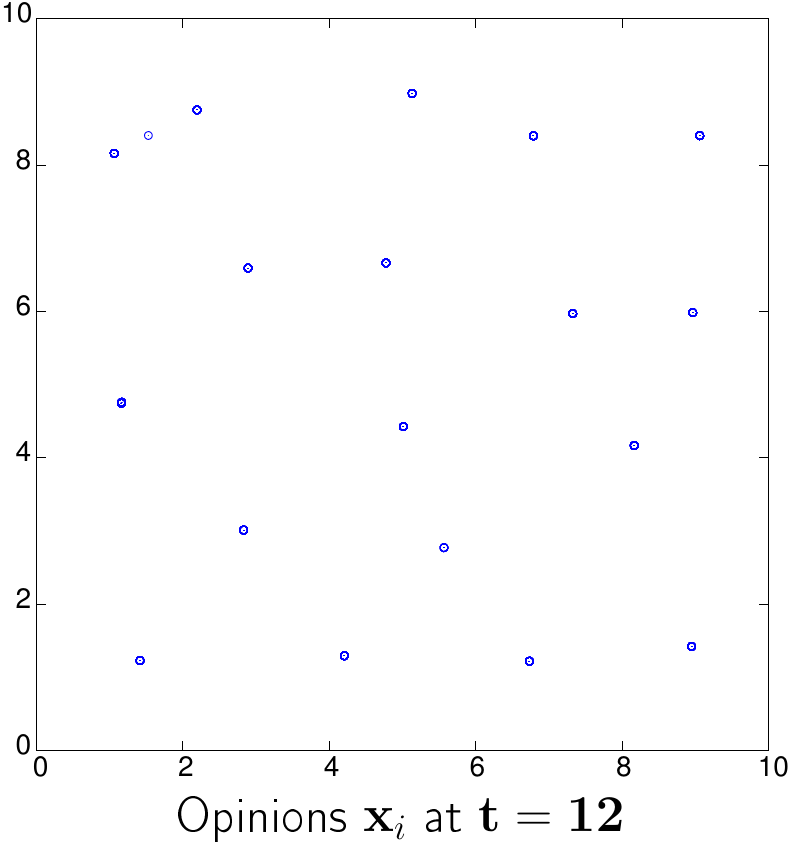} \quad
  \includegraphics[scale=.45]{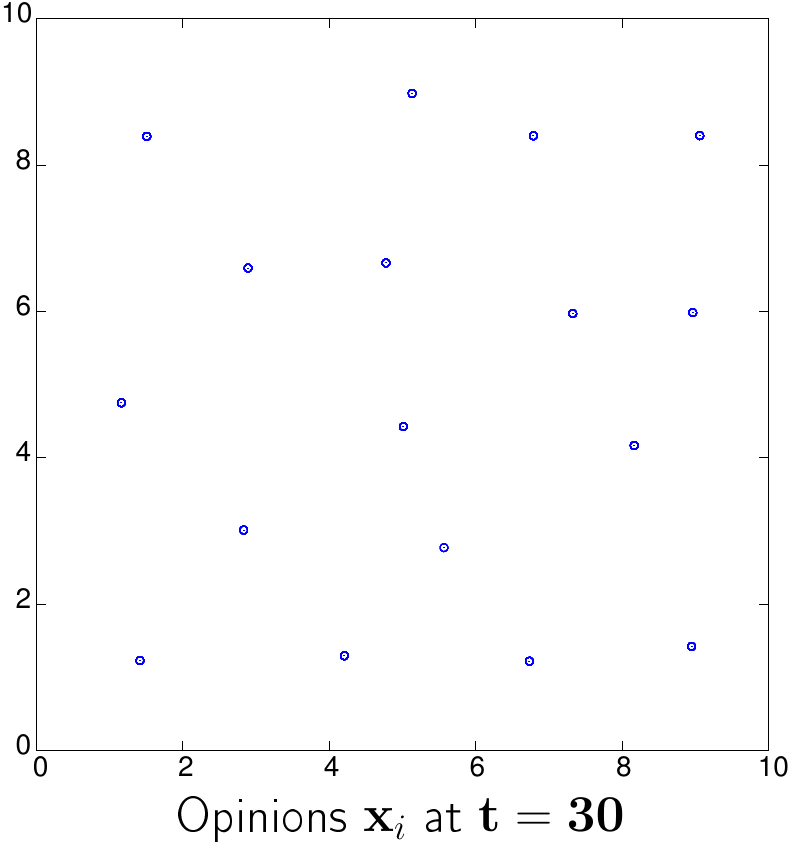}
  \caption{{\small Simulation of the opinion model (\ref{eq:opinion_formationb}) in 2D with $M=1000$ agents and $\phi=\chi_{[0,1]}$. The dynamics converges to a cluster formation (17 clusters) with each cluster separated by a distance greater than $1$.}}
  \label{fig:simu_2D}
\end{figure}

%%%%%%%%%%%%%%%%%%%%%%%%%%%%%%%%%%%%%%%%%%%%%%%%%%%%%%%%%%%%%
\section{$K=1$: uniform connectivity implies consensus}\label{sec:concon}
\setcounter{equation}{0}
\setcounter{figure}{0}
%%%%%%%%%%%%%%%%%%%%%%%%%%%%%%%%%%%%%%%%%%%%%%%%%%%%%%%%%%%%%
The emergence of a consensus in the opinion or flocking models  (\ref{eq:unified}) implies that the underlying graph associated with the dynamics must remain connected, namely $|\bx_i(t)-\bx_i(t)| \ll  \D$ at least for $t\gg 1$. In this section we discuss the converse statement, namely, that uniform connectivity implies consensus.
 The implication of consensus in the symmetric case is based on a straightforward  application of \emph{algebraic connectivity} and is outlined in section \ref{sec:loc_symm_models}. The corresponding question of consensus in  
non-symmetric connected models is carried out in section \ref{sec:local_non_symm_od} using an \emph{energy method}. 
We emphasize that consensus in both cases depend on the time-dependent behavior of  intensity of connectivity, beyond the mere graph connectivity. 
Recall that the graph associated with (\ref{eq:framework}), ${\mathcal G}_A:=(\P,A(\P))$, is connected if every two agents $\bp_i(t)$ and $\bp_j(t)$ are connected through a path $\Gamma_{ij}:=\{k_1=i<k_2<\ldots <k_r=j\}$ of length $r_{ij}\leq N$.
We measure the \emph{uniform connectivity} by its ``weakest link''.
\begin{mydefinition}[Uniform connectivity]
The self-organized dynamics \eqref{eq:framework} is connected if there exists $\mu(t)>0$ such that for all paths $\Gamma_{ij}$, 
\begin{equation}\label{eq:ucon}
\min_{k_\ell \in \Gamma_{ij}} a_{k_\ell,k_{\ell+1}}(\P(t)) \geq \mu(t)>0, \qquad \text{for all} \ i,j.
\end{equation} 
In particular, if $\mu(t)\geq \mu>0$ then we stay that $\P(t)$ is uniformly connected.
\end{mydefinition}
Alternatively, uniform connectivity of (\ref{eq:framework}) requires the existence of $\mu=\mu_A>0$ independent of time, such that
\[
\left(A^N(\P(t))\right)_{ij} \geq \mu^N>0.
\]

%%%%%%%%%%%%%%%%%%%%%%%%%%%%%%%%%%%%%%%%%%%%%%%
\subsection{Consensus in local dynamics -- symmetric models}\label{sec:loc_symm_models}
%%%%%%%%%%%%%%%%%%%%%%%%%%%%%%%%%%%%%%%%%%%%%%%%%%%%%%%%%%%%%%%%
We consider the \emph{symmetric} dynamics  (\ref{eq:framework}) with associated graph ${\mathcal G}_A:=(\P,A(\P))$.
Fix the positions of any two agents $\bp_i(t)$ and $\bp_j(t)$ and their (shortest) connecting path $\Gamma_{ij}$ of length $r_{ij}$. Thus, $r_{ij}$ measures the degree of separation between agents $(i,j)$, and if we let the maximal degree of separation denote the diameter of the graph, $diam({\mathcal G}_A):=\max_{ij}r_{ij}$, then 
\[
|\bp_i-\bp_j|^2 \leq diam({\mathcal G}_A)\sum_{k_\ell \in \Gamma_{ij}} |\bp_{k_{\ell+1}}-\bp_{k_\ell}|^2, \qquad  diam({\mathcal G}_A) \leq N.
\]
By uniform connectivity $\mu \leq a_{k_{\ell+1},k_\ell}$ along each path and hence
\begin{equation}\label{eq:uni}
\frac{\mu}{diam({\mathcal G}_A)}|\bp_i-\bp_j|^2 \leq  \sum_{k_\ell \in \Gamma_{ij}} a_{k_{\ell+1},k_\ell}|\bp_{k_{\ell+1}}-\bp_{k_\ell}|^2 
\leq \sum_{ij} a_{ij}|\bp_i-\bp_j|^2, 
\end{equation}
and summation over all pairs yields
\[
\frac{\mu}{diam({\mathcal G}_A)}\sum_{ij}|\bp_i-\bp_j|^2 \leq  N^2\sum_{ij} a_{ij}|\bp_i-\bp_j|^2.
\]
Now we recall our notation $\bq_i:=\bp_i-\ave{\bp}$: invoking (\ref{eq:fidratio}) we find,
\begin{equation}\label{eq:fidlow}
\lambda_2(L_A)  = \min_{\sum \bq_k=0} \frac{\langle L_A\bq,\bq\rangle}{\langle \bq,\bq\rangle}
=\min_{\bp} \frac{(1/2)\sum_{ij} a_{ij}|\bp_i-\bp_j|^2}{(1/2N)\sum_{ij}|\bp_i-\bp_j|^2}  \geq  \frac{\mu}{N diam({\mathcal G}_A)}.
\end{equation}
Thus,  the  scaled connectivity factor $\mu/(N diam({\mathcal G}_A)) \geq \mu/N^2$ serves as a lower bound for the Fiedler number associated with the symmetric dynamics of (\ref{eq:framework}) (counting the number of ``maximal" edges, yields the  slightly sharper lower bound $\lambda_2 \geq 4\mu/N^2$, \cite{mohar1991}).\newline
Using theorem \ref{thm:symm} we conclude the following.

\begin{theorem}[Connectivity implies consensus: the symmetric case]\label{thm:con_symmetric}
Let $\P(t)=\{\bp_k(t)\}_k$ be the solution of a symmetric  self-organized dynamics
\[
\frac{d}{dt}\bp_i(t)=\alpha\sum_{j\neq i} a_{ij}(\P(t))(\bp_j(t)-\bp_i(t)), \qquad a_{ij}=a_{ji}.
\] 
If $\P(t)$ remains  connected in time with  ``sufficiently strong'' connectivity $\mu_{A(\P(s))}>0$, then it approaches the  consensus $\ave{\bp}(0)$, namely,
\[
\veep{\bp(t)} \lesssim \exp\left(-\frac{\alpha}{N^2} \int_0^t\mu_{A(\P(s))}ds\right)\veep{\bp(0)}, \ \ \quad   \veep{\bp(t)}^2:=\frac{1}{N}\sum |\bp_i(t)-\ave{\bp}(0)|^2.
\]
In particular, if $\P(t)$ remains uniformly connected in time, \eqref{eq:ucon}, then it approaches an emerging consensus, $\bp_i(t) \rightarrow \con{\bp}=\ave{\bp}(0)$ with a convergence rate,
\begin{equation}\label{eq:erate}
 \veep{\bp(t)} \lesssim e^{\displaystyle -\alpha \frac{\mu}{N^2} t} \veep{\bp(0)}. 
\end{equation} 
\end{theorem}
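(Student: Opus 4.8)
The plan is to read off the estimate by feeding the connectivity lower bound on the Fiedler number — already derived in \eqref{eq:fidlow} — into the sharp spectral estimate of Theorem \ref{thm:symm}, applied at each instant of time. First I would record the normalization: as explained around \eqref{eq:stoc}, after rescaling $\alpha$ if necessary we may take $A(\P(t))$ row-stochastic by setting $a_{ii}(t):=1-\sum_{j\neq i}a_{ij}(t)\geq 0$. This alters only the diagonal, hence preserves the symmetry $a_{ij}=a_{ji}$, keeps the Laplacian $L_{A(\P(t))}=I-A(\P(t))$ symmetric, and does not change the dynamics $\dot{\bp}_i=\alpha(A\bp-\bp)_i$. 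Theorem \ref{thm:symm} therefore applies at every time and gives
\[
\veep{\bp(t)}\ \leq\ \exp\left(-\alpha\int_0^t\lambda_2\big(L_{A(\P(s))}\big)\,ds\right)\veep{\bp(0)}.
\]

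Next, for each fixed $s$ I would bound $\lambda_2(L_{A(\P(s))})$ from below exactly as in the computation preceding the theorem: from the Rayleigh quotient characterization \eqref{eq:fidratio}, the shortest-path inequality along a connecting path $\Gamma_{ij}$, and the uniform connectivity hypothesis \eqref{eq:ucon} with constant $\mu_{A(\P(s))}$ on each link, one obtains
\[
\lambda_2\big(L_{A(\P(s))}\big)\ \geq\ \frac{\mu_{A(\P(s))}}{N\,diam(\mathcal{G}_{A(\P(s))})}\ \geq\ \frac{\mu_{A(\P(s))}}{N^2},
\]
the last inequality using the crude but uniform bound $diam(\mathcal{G}_A)\leq N$. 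Substituting into the previous display gives the asserted decay of $\veep{\bp(t)}$; in the uniformly connected case one replaces $\mu_{A(\P(s))}$ by the uniform lower bound $\mu$ and integrates to obtain the exponential rate \eqref{eq:erate}.

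Finally, to pin down the limit state: by symmetry of $A(\P(t))$ the average is conserved, $\ave{\bp}(t)=\ave{\bp}(0)$, as recorded at the start of Section \ref{sec:symemer}; since $\veep{\bp(t)}^2=\frac1N\sum_i|\bp_i(t)-\ave{\bp}(0)|^2\to 0$, every agent satisfies $\bp_i(t)\to\ave{\bp}(0)$, i.e. $\con{\bp}=\ave{\bp}(0)$. I expect no genuine obstacle here: all the analytic content is already contained in \eqref{eq:fidlow} and Theorem \ref{thm:symm}, and the only points deserving a word of care are that the row-stochastic renormalization preserves symmetry (so Theorem \ref{thm:symm} is legitimately applicable at every $t$) and that the uniform bound $diam(\mathcal{G}_A)\leq N$ is what converts the time-dependent Fiedler estimate into the clean $\mu/N^2$ rate.
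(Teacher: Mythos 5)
Your proposal is correct and follows essentially the same route as the paper: the lower bound $\lambda_2(L_{A})\geq \mu_A/(N\,diam(\mathcal{G}_A))\geq \mu_A/N^2$ from \eqref{eq:fidlow} is precisely the computation the paper carries out in section \ref{sec:loc_symm_models} before stating the theorem, and the conclusion is then read off from Theorem \ref{thm:symm} together with conservation of $\ave{\bp}$, exactly as you do. Your added remarks on the row-stochastic renormalization preserving symmetry and on $diam(\mathcal{G}_A)\leq N$ are consistent with the paper's argument and require no changes.
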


It is important to notice that theorem \ref{thm:con_symmetric} requires the intensity of connectivity to be  sufficiently strong: connectivity alone,
with a rapidly decaying $\mu(t)$,  is not  sufficient for consensus as illustrated by the following.

\noindent
{\bf Counterexample}.  Consider the symmetric dynamics (\ref{eq:opinion_formationa}) with $5$ agents, $x_1,\dots,x_5$, subject to  initial configuration
\begin{equation}
  \label{eq:cex_symmetry}
  \bx_1(0)=-\bx_5(0), \quad \bx_2(0)=-\bx_4(0), \quad \bx_3(0)=0, 
\end{equation}
with   $(\bx_4(0),\bx_5(0))$ to be specified below inside the box ${\mathcal D}:=\{\frac{1}{2}< \bx_4 < 1 <\bx_5<\frac{3}{2}\}$. 
  We fix the influence function $\phi(r)=(1+r)^2(1-r)^2 \chi_{[0,1]}$, compactly supported on $[0,1]$; note that $\phi'(0)=\phi'(1)=0$. By symmetry, the initial ordering in (\ref{eq:cex_symmetry}) is preserved in time.  In particular,  $\bx_3(t)\equiv 0$, and  $(\bx_4(t),\bx_5(t))\mapsto (x(t),y(t))$ preserve the original  ordering, $\frac{1}{2}< x(t) < 1 <y(t)$, the symmetric opinion dynamics (\ref{eq:opinion_formationa}) (with $\alpha=5$ for simplicity), is reduced to 
\begin{equation}
  \label{eq:system_cexample}
  \begin{array}{lcl}
    \dot{x} &=& -\phi(|x|) x + \phi(|y-x|) (y-x) \\
    \dot{y} &=& \phi(|x-y|) (x-y)
  \end{array}
\end{equation}
 An equilibrium for the system is given by $x=y=1$. The eigenvalues of the linearized system at $(1,1)$ are $\lambda_1=0$ and $\lambda_2=-2$, therefore the equilibrium is unstable. 
We would like to prove that there exists an initial condition $(x(0),y(0))$ close to $(1,1)$ which converges toward this unstable equilibrium. We use for that a variant of the antifunnel theorem \cite{Hubbard_West_1997}.

\begin{figure}[ht]
  \centering
  \includegraphics[scale=.3]{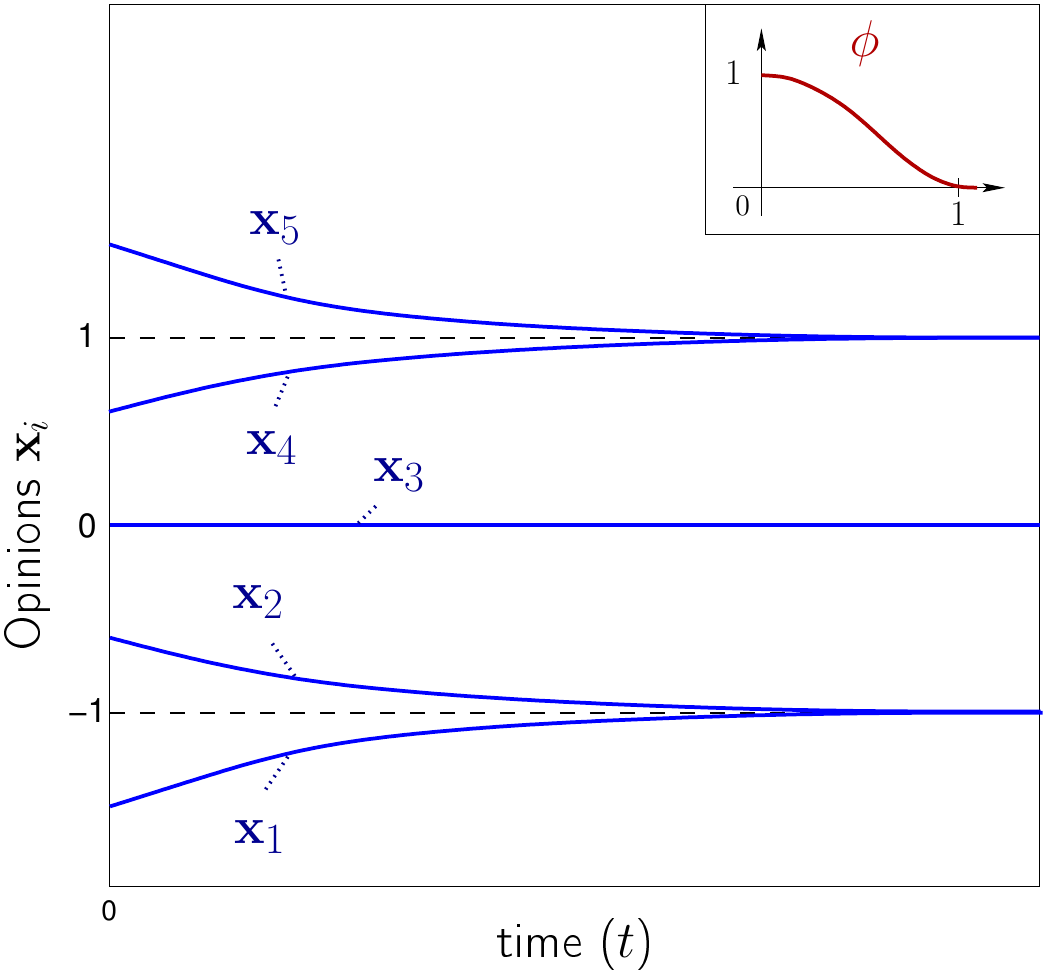}\qquad
  \includegraphics[scale=1]{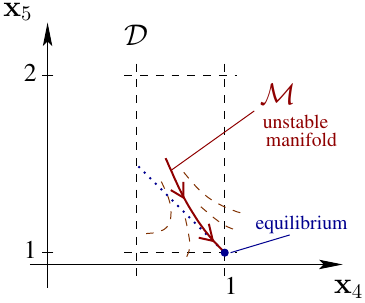}
  \caption{{\bf Left}: a solution of the symmetric model that stays connected but does not converge to a consensus. {\bf Right}: in phase space, the counter example is a solution that stays in the {\it antifunnel} formed by the curve $\alpha$ and $\beta$.}
  \label{fig:phase_portrait_cex2}
\end{figure}

We study the phase portrait of the dynamical system \eqref{eq:system_cexample} close to the unstable equilibrium $(1,1)$. Take $\varepsilon$ such that $0<\varepsilon<\frac12$ and consider the $3$ curves (see figure \ref{fig:phase_portrait_cex2}):
\begin{eqnarray*}
  \alpha(s) = (2-s,s),\quad \beta(s)=(1,s) && \text{ for } s\in(1,1+\varepsilon] \\
  \gamma(s) = (2-s,1+\varepsilon) \hspace{2.1cm} && \text{ for } s\in[1,1+\varepsilon]
\end{eqnarray*}
We denote by $\mathcal{D}_\varepsilon$ the domain enclosed by the $3$ curves:
\begin{displaymath}
  \mathcal{D}_\varepsilon=\{2-y\leq x\leq1\;,\;1<y\leq1+\varepsilon\}.
\end{displaymath}
Notice that on the domain $\mathcal{D}_\varepsilon$, we have $\dot{y}<0$. Thus, given a solution of \eqref{eq:system_cexample} starting on $\gamma$, there are $3$ possibilities: the solution exits the domain passing through the curves $\alpha$, or it exits passing through $\beta$ or it converges to the equilibrium $(1,1)$.

To prove the existence of solutions in the third category, we notice that the curves $\alpha$ and $\beta$ form an {\it antifunnel} for the dynamical system. Starting on the curve $\beta$, since $\dot{x}>0$, the solution exits the domain $\mathcal{D}_\varepsilon$ (see figure \ref{fig:phase_portrait_cex2}). Similarly, on the curve $\alpha$, since $\dot{x}<\dot{y}$, the solution exits the domain $\mathcal{D}_\varepsilon$ as well. %Conversely, by reversing the time, any solution passing through $\alpha$ or $\beta$ had previously crossed the curve $\gamma$.

We denote by $\gamma_\alpha$ the set of initial conditions contained in $\gamma$ such that the solution exits through $\alpha$. The set $\gamma_a$ is non-empty since $(1-\varepsilon,1+\varepsilon)\in\gamma_a$. Moreover, using the same arguments as in \cite{Hubbard_West_1997}, we find out that $\gamma_\alpha$ is open. Similarly, we denote  by $\gamma_\beta\subset\gamma$ the set of initial conditions such that the solution exits through $\beta$ and we deduce that $\gamma_\beta$ is open and non-empty. Since $\gamma_\alpha\cap\gamma_\beta=\emptyset$, by connectivity of the set $\gamma$, there exists $(x_*,y_*)$ which does not belong to $\gamma_\alpha\cup\gamma_\beta$. Thus, the solution $(x(t),y(t))$ starting from $(x_*,y_*)$ stays in between $\alpha$ and $\beta$:
\begin{displaymath}
  2-y(t)\leq x(t)\leq1 \quad , \quad 1\leq y(t) \leq 1+\varepsilon, \qquad \text{ for all } t\geq0.
\end{displaymath}
Since $y(t)$ is decreasing and lower bounded, $y(t)$ converges: $y(t)\stackrel{t \rightarrow \infty}{\longrightarrow} 
y_\infty$. Moreover, the solution $(x(t),y(t))$ is globally Lipschitz, thus $\dot{y}$ converges to zero. Then, combining \eqref{eq:system_cexample} with $\phi(|y(t)-x(t)|)\!\geq\!m\!>\!0$, we deduce that $x(t)\stackrel{t \rightarrow \infty}{\longrightarrow} y_\infty$. Since there is only one equilibrium in the domain $\overline{\mathcal{D}}_\varepsilon$, we necessarily have $y_\infty=1$, and therefore $(x(t),y(t)) \stackrel{t \rightarrow \infty}{\longrightarrow} (1,1)$.

%%%%%%%%%%%%%%%%%%%%%%%%%%%%%%%%%%%%%%%%%%%%%%%%%%%%%%%%%%%%%%%%%%%%%%%%%%%%%%%%%%%%%%%%%
\subsection{Consensus in local non-symmetric opinion dynamics}\label{sec:local_non_symm_od}
%%%%%%%%%%%%%%%%%%%%%%%%%%%%%%%%%%%%%%%%%%%%%%%%%%%%%%%%%%%%%%%%%%%%%%%%%%%%%%%
Next we turn to consider the question of consensus for the non-symmetric opinion model (\ref{eq:opinion_formationb}).
\begin{theorem}[Connectivity implies consensus: non-symmetric opinion dynamics]\label{thm:con_non-symmetric}
Let $\P(t)=\{\bp_k(t)\}_k$ be the solution of the non-symmetric   opinion  dynamics \eqref{eq:opinion_formationb} with compactly supported influence function, $\Supp=[0,\D)$,
\[
\deg_i\frac{d}{dt}\bx_i(t)= \alpha \sum_j \phi_{ij}(\bx_i(t)-\bx_j(t)), \qquad \deg_i=\sum_k \phi_{ik}.
\]
If $\P(t)$ remains uniformly connected in time in the sense that each pair of agents $(i,j)$ is connected through a path $\Gamma_{ij}$ such that\footnote{Observe that here we measure connectivity in terms of the influence function $\phi_{ij}$ rather than the adjacency matrix, $a_{ij}$ as  \eqref{eq:ucon}; the two are equivalent up to obvious scaling of the degree $\sigma_i$.} 
\[
\min_{k_\ell \in \Gamma_{ij}} \phi(|\bx_{k_\ell}-\bx_{k_{\ell+1}}|) \geq \mu>0, \qquad \text{for all} \ i,j,
\]
then it has bounded time-variation and consequently, 
 $\P(t)$ approaches an emerging consensus, $\bx_i(t) \rightarrow \con{\bx}$ with a convergence rate,
\begin{equation}\label{eq:erate_non-symmetric}
|\bx_i(t)-\con{\bx}| \lesssim e^{-\alpha m (t-t_0)} \dm{\bx(0)}, \qquad m=\min_{r\leq \D/2}\phi(r)>0.
\end{equation} 
\end{theorem}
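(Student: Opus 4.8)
The plan is to exhibit a decreasing Lyapunov functional for the non-symmetric opinion dynamics, use it through a subsequential-limit argument to show that every limit configuration is an equilibrium split into clusters that are pairwise at distance at least $\D$, and then invoke uniform connectivity to collapse this partition to a single point; once $\dm{\bx(t)}\to0$ is known the dynamics is globally interacting and the sharp rate \eqref{eq:erate_non-symmetric} follows from Corollary~\ref{cor:act}. The key structural observation is that although $a_{ij}=\phi_{ij}/\sigma_i$ is not symmetric, the weights $\phi_{ij}=\phi(|\bx_i-\bx_j|)$ are, so with $\Phi(r):=\int_0^r s\,\phi(s)\,ds\geq0$ one sets
\[
\mathcal F(t):=\sum_{i<j}\Phi\big(|\bx_i(t)-\bx_j(t)|\big)\geq 0 ,
\]
and, using $\Phi'(r)=r\phi(r)$, symmetrization in $(i,j)$, and $\sigma_i\dot{\bx}_i=\alpha\sum_j\phi_{ij}(\bx_j-\bx_i)$, one obtains the dissipation identity
\[
\frac{d}{dt}\mathcal F(t)=\sum_{i<j}\phi_{ij}\langle\bx_i-\bx_j,\dot{\bx}_i-\dot{\bx}_j\rangle=\sum_i\Big\langle\sum_j\phi_{ij}(\bx_i-\bx_j),\dot{\bx}_i\Big\rangle=-\frac1\alpha\sum_i\sigma_i|\dot{\bx}_i(t)|^2\leq0 .
\]
Since $\mathcal F$ is bounded below and non-increasing it converges, hence $\int^\infty\sum_i\sigma_i|\dot{\bx}_i(s)|^2\,ds<\infty$; and since every agent has, by uniform connectivity, a neighbor with $\phi_{ij}\geq\mu$, we have $\sigma_i\geq\mu>0$, so along a suitable sequence $t_n\to\infty$ one gets $\dot{\bx}_i(t_n)\to0$ for all $i$.

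Next I would extract the limit structure. The trajectory remains in the bounded convex set $\Omega(0)$ by Proposition~\ref{ppo:Omega_decreasing_ct}, so along a subsequence $\bx(t_n)\to\bx^\ast$; passing to the limit in $\sigma_i\dot{\bx}_i(t_n)=\alpha\sum_j\phi_{ij}(t_n)(\bx_j(t_n)-\bx_i(t_n))$ shows $\sum_j\phi(|\bx^\ast_i-\bx^\ast_j|)(\bx^\ast_j-\bx^\ast_i)=0$, i.e. $\bx^\ast$ is an equilibrium. Taking the scalar product against $\bx^\ast_i$ and summing, exactly as in the proof of Proposition~\ref{ppo:limit_set_ct}, gives $\phi(|\bx^\ast_i-\bx^\ast_j|)|\bx^\ast_i-\bx^\ast_j|^2=0$ for all $i,j$, so $\bx^\ast$ splits into clusters pairwise at distance $\geq\D$. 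Now use connectivity: for each $t_n$ any two agents are joined by a path along which $\phi_{k_\ell,k_{\ell+1}}(t_n)\geq\mu$, and since there are only finitely many admissible paths a further subsequence makes one fixed path serve every pair; by continuity $\phi(|\bx^\ast_{k_\ell}-\bx^\ast_{k_{\ell+1}}|)\geq\mu>0$, so consecutive agents along that path lie within $\D$ and hence, by the dichotomy above, coincide. Therefore $\bx^\ast_i=\bx^\ast_j$ for all $i,j$, so $\dm{\bx(t_n)}\to0$; since $\dm{\bx(t)}$ is non-increasing (Proposition~\ref{ppo:Omega_decreasing_ct} again), $\dm{\bx(t)}\to0$.

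Finally the rate: choose $t_0$ with $\dm{\bx(t_0)}<\D/2$, so that for $t\geq t_0$ every pair interacts with $\phi_{ij}\geq m:=\min_{r\leq\D/2}\phi(r)>0$ while $\sigma_i=\sum_k\phi_{ik}\leq N$ (using $\phi\leq1$), hence $a_{ij}\geq m/N$ for all $i,j$. Corollary~\ref{cor:act} with $\theta=m/N$ and $\lambda(\theta,t)=N$ then gives $\frac{d}{dt}\dm{\bx(t)}\leq-\alpha m\,\dm{\bx(t)}$, so $\dm{\bx(t)}\leq e^{-\alpha m(t-t_0)}\dm{\bx(0)}$. In particular $|\dot{\bx}_i(t)|\leq\alpha\,\dm{\bx(t)}$ is integrable — the asserted bounded time-variation — each $\bx_i(t)$ converges to a common limit $\con{\bx}$, and integrating $|\dot{\bx}_i|$ from $t$ to $\infty$ yields $|\bx_i(t)-\con{\bx}|\lesssim e^{-\alpha m(t-t_0)}\dm{\bx(0)}$, which is \eqref{eq:erate_non-symmetric}.

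The main obstacle is the middle step. In contrast with the globally interacting models of Section~\ref{sec:global}, when the crowd is connected but spread over a length exceeding $\D$ the concentration factor $\ceta{A}$ and all the active sets vanish, so the diameter need not decrease at a controllable rate and one cannot close a Gr\"onwall argument directly; instead one must pass to a subsequential limit, use that the velocities die there, and only then read off the equilibrium (well-separated cluster) structure before deducing that the diameter vanishes. A smaller but essential point is that the connecting paths are time-dependent, which forces the passage to a further subsequence via the finiteness of the set of paths; and verifying the monotonicity of $\mathcal F$ itself is a short computation that relies crucially on the symmetry of $\phi_{ij}$.
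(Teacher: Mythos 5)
Your overall architecture is the same as the paper's: the same Lyapunov functional $\sum\Phi(|\bx_i-\bx_j|)$, $\Phi'(r)=r\phi(r)$, with the same dissipation identity $-\tfrac1\alpha\sum_i\sigma_i|\dot\bx_i|^2$ (the symmetry of $\phi_{ij}$ doing the work despite $a_{ij}$ being non-symmetric), the same path/connectivity inequality, and the same endgame: once the diameter drops below $\D/2$ the dynamics is globally interacting with $\phi_{ij}\ge m$ and Corollary~\ref{cor:act} gives the exponential rate. Where you differ is the middle step. The paper closes it quantitatively: combining the dissipation with the identity $\sum_i\sigma_i\langle\bx_i,\dot\bx_i\rangle=-\tfrac{\alpha}{2}\sum_{ij}\phi_{ij}|\bx_i-\bx_j|^2$ it gets $\frac{d}{dt}{\mathcal E}\lesssim-\bigl(\sum_{ij}\phi_{ij}|\bx_i-\bx_j|^2\bigr)^2$, hence a finite time $t_0$ at which $\sum_{ij}\phi_{ij}|\bx_i-\bx_j|^2\le\frac{\mu}{4N}\D^2$, and the path estimate then gives $\dm{\bx(t_0)}\le\D/2$ directly — in fact with an explicit bound on the arrival time $t_0={\mathcal O}(N^7/\mu^3)$. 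You instead argue softly: subsequential limit, equilibrium, cluster dichotomy, then connectivity collapses the clusters. That is a legitimate alternative, but it buys less (no control on $t_0$) and it costs an extra hypothesis, see below.

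The concrete weak point is that your two limit passages — deducing $\sum_j\phi(|\bx^\ast_i-\bx^\ast_j|)(\bx^\ast_j-\bx^\ast_i)=0$ from the equation at $t_n$, and the step ``by continuity $\phi(|\bx^\ast_{k_\ell}-\bx^\ast_{k_{\ell+1}}|)\ge\mu$'' — require $\phi$ to be continuous. The theorem assumes only a compactly supported $\phi$ with $\min_{r\le\D/2}\phi(r)>0$, and the paper's central examples are discontinuous step functions such as $\phi=.1\,\chi_{[0,1/\sqrt2]}+\chi_{[1/\sqrt2,1]}$; for such $\phi$ neither passage is valid as written (e.g.\ if $|\bx_{k_\ell}(t_n)-\bx_{k_{\ell+1}}(t_n)|\to\D^-$ you only get $|\bx^\ast_{k_\ell}-\bx^\ast_{k_{\ell+1}}|\le\D$, and the dichotomy ``equal or at distance $\ge\D$'' then does not force coincidence, so the limit could a priori consist of clusters exactly $\D$ apart). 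The repair is easy and stays inside your framework, and brings you back to the paper's mechanism: from $\dot\bx_i(t_n)\to0$ and the identity $\sum_i\sigma_i\langle\bx_i,\dot\bx_i\rangle=-\tfrac{\alpha}{2}\sum_{ij}\phi_{ij}|\bx_i-\bx_j|^2$ conclude $\sum_{ij}\phi_{ij}(t_n)|\bx_i(t_n)-\bx_j(t_n)|^2\to0$, then apply the path inequality at time $t_n$ to get $\dm{\bx(t_n)}^2\le\frac{N}{\mu}\sum_{ij}\phi_{ij}(t_n)|\bx_i(t_n)-\bx_j(t_n)|^2\to0$; no limit configuration and no continuity of $\phi$ are needed, and the rest of your argument (monotone diameter, Corollary~\ref{cor:act}, bounded time-variation, rate) goes through unchanged.
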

\begin{proof}
We introduce the energy functional,
\begin{subequations}\label{eqs:energy} 
\begin{equation}\label{eq:energya}
  {\mathcal E}(t) :=  \alpha\sum_{i,j} \Phi(|{\bx}_j(t)-{\bx}_i(t)|), \qquad \Phi(r):=\int_{s=0}^r s\phi(s)ds,
\end{equation}
which is decreasing in time,  
\begin{eqnarray}
  \frac{d}{dt}{\mathcal E}(t) &=& \alpha\sum_{i,j} \phi_{ij} \langle\dot{\bx}_j-\dot{\bx}_i\,,\,{\bx}_j-{\bx}_i\rangle 
  = -2\alpha\sum_{i,j} \phi_{ij} \langle\dot{\bx}_i\,,\,{\bx}_j-{\bx}_i\rangle \label{eq:energyb}\\
  &=& -2 \sum_i \langle \dot{\bx}_i\,, \alpha\sum_{j\neq i} \phi_{ij}\,({\bx}_j-{\bx}_i)\rangle =   -2\sum_{i} \deg_i |\dot{\bx}_i|^2 \leq 0.
\nonumber
 \end{eqnarray}
\end{subequations}
To upperbound the expression on the right of (\ref{eq:energya}), sum (\ref{eq:opinion_formationb}) against $\bx_i$ to find,
\begin{eqnarray}\label{eq:veebound}
 \frac{\alpha}{2}\sum_{i,j}\phi_{ij}|\bx_i-\bx_j|^2 & = & -\alpha \sum_{i,j} \phi_{ij} \langle \bx_i-\bx_j,\bx_i\rangle = \sum \deg_i \langle \bx_i,\dot{\bx}_i\rangle \\
& \leq &  \sqrt{ \sum_i\deg_i |\bx_i|^2}\sqrt{\sum_i \deg_i |\dot{\bx}_i|^2}
\leq N\max_i|\bx_i(0)|\sqrt{\sum_i \deg_i |\dot{\bx}_i|^2}. \nonumber
\end{eqnarray}
We end up with the energy decay
\begin{equation}
  \label{eq:decay_energy_E}
  \frac{d}{dt}{\mathcal E}(t) \leq -\frac{1}{2}\alpha^2C_0^2 \left(\sum_{i,j}\phi_{ij}|\bx_i-\bx_j|^2\right)^2, \qquad C_0= \frac{1}{N \max_i |\bx_i(0)|}.  
\end{equation}
Hence, since 
\[
\int^\infty \left(\sum_{i,j} \phi_{ij}(t)|\bx_i(t)-\bx_j(t)|^2\right)^2 dt < \frac{2}{\alpha^2C_0^2} {\mathcal E}(0) <\infty,
\]
the sum $\sum_{i,j} \phi_{ij}(t)|\bx_i(t)-\bx_j(t)|^2$ must become arbitrarily small at some point of time, namely, there exists $t_0>0$ such that
\begin{equation}\label{eq:arrival_decay}
\sum_{i,j} \phi_{ij}(t_0)|\bx_i(t_0)-\bx_j(t_0)|^2 \leq \frac{\mu}{4N} \D^2,
\end{equation}
and by uniform connectivity, consult (\ref{eq:uni}),  
\begin{equation}\label{eq:unii}
\frac{\mu}{N} |\bx_i(t_0)-\bx_j(t_0)|^2 \leq   \sum_{k_\ell\in \Gamma_{ij}} \phi_{k_\ell,k_{\ell+1}}(t_0) |\bx_{k_\ell}(t_0)-\bx_{k_{\ell+1}}(t)|^2 \leq \frac{\mu}{4N} \D^2.
\end{equation}
Thus, the  dynamics at time $t_0$ concentrate so that its diameter, $\dm{\bx(t_0)} = \max_{i,j}|\bx_i(t_0)-\bx_j(t_0)| \leq \D/2$, and since $\dm{\bx(\cdot)}$ is non-increasing in time, $\dm{\bx(t)}\leq \D/2$ thereafter. Arguing along the lines of proposition \ref{prop:CV_opinions}, we conclude  that there is an exponential time decay,
\[
Na_{ij} \geq \phi(|\bx_i(t)-\bx_j(t)|) \geq \min_{r\leq \dm{\bx(t)}}\phi(r) \geq \min_{r\leq \D/2}\phi(r) = m, \qquad t>t_0,
\]
and consensus follows from corollary (\ref{cor:act}).
\end{proof}

The decreasing energy functional ${\mathcal E}(t)$ can be used to estimate the first ``arrival'' time of concentration $t_0$. To this end, observe that:
\begin{displaymath}
  \Phi(|\bx_j-\bx_i|) = \int^{|\bx_j-\bx_i|}_{s=0} s\phi(s)ds \leq M\int^{|\bx_j-\bx_i|}_{s=0} sds = M \frac{|\bx_j-\bx_i|^2}{2},
\quad M:=\max_r \phi(r).
\end{displaymath}
Using the assumption of uniform connectivity, there exists $\mu>0$ and a path $\Gamma_{ij}$ such that:
\begin{displaymath}
  |\bx_j-\bx_i|^2 \leq \frac{N}{\mu} \sum_{k_\ell \in \Gamma_{ij}} \phi_{k_\ell,k_{\ell+1}} |\bx_{k_{\ell+1}}-\bx_{k_\ell}|^2 \leq \frac{N}{\mu} \sum_{ij}\phi_{i,j}|\bx_j-\bx_i|^2.
\end{displaymath}
Combining the last two inequalities, we can upperbound the energy $\mathcal{E}$:
\begin{displaymath}
  \mathcal{E} = \sum_{ij} \Phi_{ij} \leq \frac{M N^3}{4\mu} \sum_{ij}\phi_{i,j}|\bx_j-\bx_i|^2.
\end{displaymath}
Hence, (\ref{eq:decay_energy_E}) implies the Riccati equation
\begin{displaymath}
  \frac{d}{dt}{\mathcal E}(t) \leq -\frac{1}{2}\alpha^2C_0^2 \left(\frac{4\mu}{MN^3}\mathcal{E}  \right)^2 = -\frac{C\mu^2}{N^6} \mathcal{E}^2, 
\end{displaymath}
which shows the energy decay
\begin{displaymath}
  \mathcal{E}(t) \lesssim \frac{1}{1+\displaystyle \frac{C\mu^2t}{N^6} }.
\end{displaymath}
Thus, the arrival time of concentration $t_0$ (\ref{eq:arrival_decay}) is at most of the order of ${\mathcal O}(N^7/\mu^3)$.
This bound on the first arrival time can be improved.\footnote{In fact, the  energy $\mathcal{E}(t)$ decays exponentially in time.}

We close this section by noting the lack of a consensus proof for our  non-symmetric model of flocking dynamics \eqref{eq:MT_model} is due to the lack of a proper decreasing energy functional.
  
%%%%%%%%%%%%%%%%%%%%%%%%%%%%%%%%%%%%%%%%%%%%%%%%%%%%%%%%%%%%%%%%%%%%%%%%%%%%%
\section{Heterophilious dynamics enhances consensus --- simulations}\label{sec:heterophily}
\setcounter{equation}{0}
\setcounter{figure}{0}
%%%%%%%%%%%%%%%%%%%%%%%%%%%%%%%%%%%%%%%%%%%%%%%%%%%%%%%%%%%%%%%%%%%%%%%%%%%%%
As we noted earlier, the large-time behavior of local models for self-organized dynamics depend on the details of the interactions, $\{a_{ij}\}$, and in the particular case of local models (\ref{eqs:frame}), on the profile of the compactly supported influence function $\phi$. 
Here  we explore how  the profile of  $\phi$ dictates  cluster formation in the opinion dynamics model (\ref{eq:opinion_formationb}). The numerical simulations presented in this section leads to the  main conclusion  that an \emph{increasing} profile of $\phi$ reduces the number of clusters $\{\C_k\}_{k=1}^K$.  In particular, if the profile of $\phi$ is increasing fast enough, then $K=1$; thus, heterophilious dynamics enhances the emergence of consensus.   

In the following, we employ a compactly supported influence function $\phi$ which is a simple step  function,
\begin{equation}
  \label{eq:phi_step}
  \phi(r) = \left\{
    \begin{array}{cl}
      a & \quad \text{for } r\leq \frac{1}{\sqrt{2}} \\
      b & \quad \text{for } \frac{1}{\sqrt{2}} < r\leq 1 \\
      0 & \quad \text{for } r>1.
    \end{array}
  \right.
\end{equation}
The essential quantity here is the ration $b/a$ which measures the balance between the influence of ``far'' and ``close'' neighbors (see figure \ref{fig:phi_ab}). We initiate the opinion dynamics (\ref{eq:opinion_formationb}) with  random initial configuration $\{{\bf x}_i(0)\}_i$.

\begin{figure}[ht]
  \centering
  \includegraphics[scale=.7]{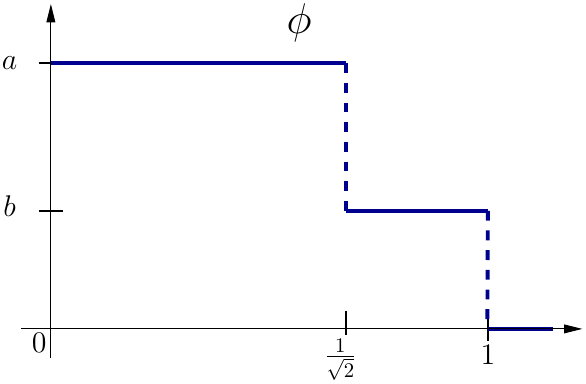}
  \caption{{\small Influence functions $\phi$ used in the simulations. The larger $b/a$ is, the more heterophilious is the dynamics.}}
  \label{fig:phi_ab}
\end{figure}

%%%%%%%%%%%%%%%%%%%%%%%%%%
\subsection{1D simulations}
%%%%%%%%%%%%%%%%%%%%%%%%%%
We begin with four simulations of the 1D opinion dynamics (\ref{eq:opinion_formationb})  subject to  $100$  opinions distributed uniformly on $[0,10]$, the same initial configuration as in  figure \ref{fig:evolutionXratio1_bis}. To explore the impact of the influence step function  (\ref{eq:phi_step}) on the dynamics, we used four different ratios of $b/a=.1, \, 1, \,2$ and $10$.  As $b/a$ increases, we reduce the influence of the closer neighbors and increase the influence of  neighbors further away; thus, increasing $b/a$ reflects  the tendency to ``bond with the other''.
As observed in figure \ref{fig:ratio011210}, the increase in the ratio $b/a=.1, \, 1, \,2$ and $10$, reduces the corresponding number of limit clusters to $K=6, \ 4, \ 2$, and for $b/a=10$, the dynamics  converged to a consensus, $K=1$. 
The simulations of figure \ref{fig:ratio011210} indicate that  reducing the influence of closer neighbors and hence increasing the weight for the  influence of neighbors further away,  will favor increased \emph{connectivity} and the emergence of consensus.

To make a systematic analysis of the cluster formation dependence on the ratio $b/a$, we made several simulations with random initial conditions for a given ratio $b/a$. Then we make an average of the number of clusters, denoted by $\langle S\rangle$, at the end of each simulation ($t=100$). To compute the number of clusters, we estimate the number of connected components of the matrix $A$ (\ref{eq:matrix_A}) using a depth-first search algorithm. As observed in figure \ref{fig:S_depending_bDiva}, the number of clusters $\langle S\rangle$ decreases as $b/a$ increases. Moreover, $\langle S\rangle$ approaches $1$ when $b/a$ approaches $10$, implying that a consensus is likely to occur when $b/a$ is large enough.

\begin{figure}[ht]
  \centering
  \includegraphics[scale=.35]{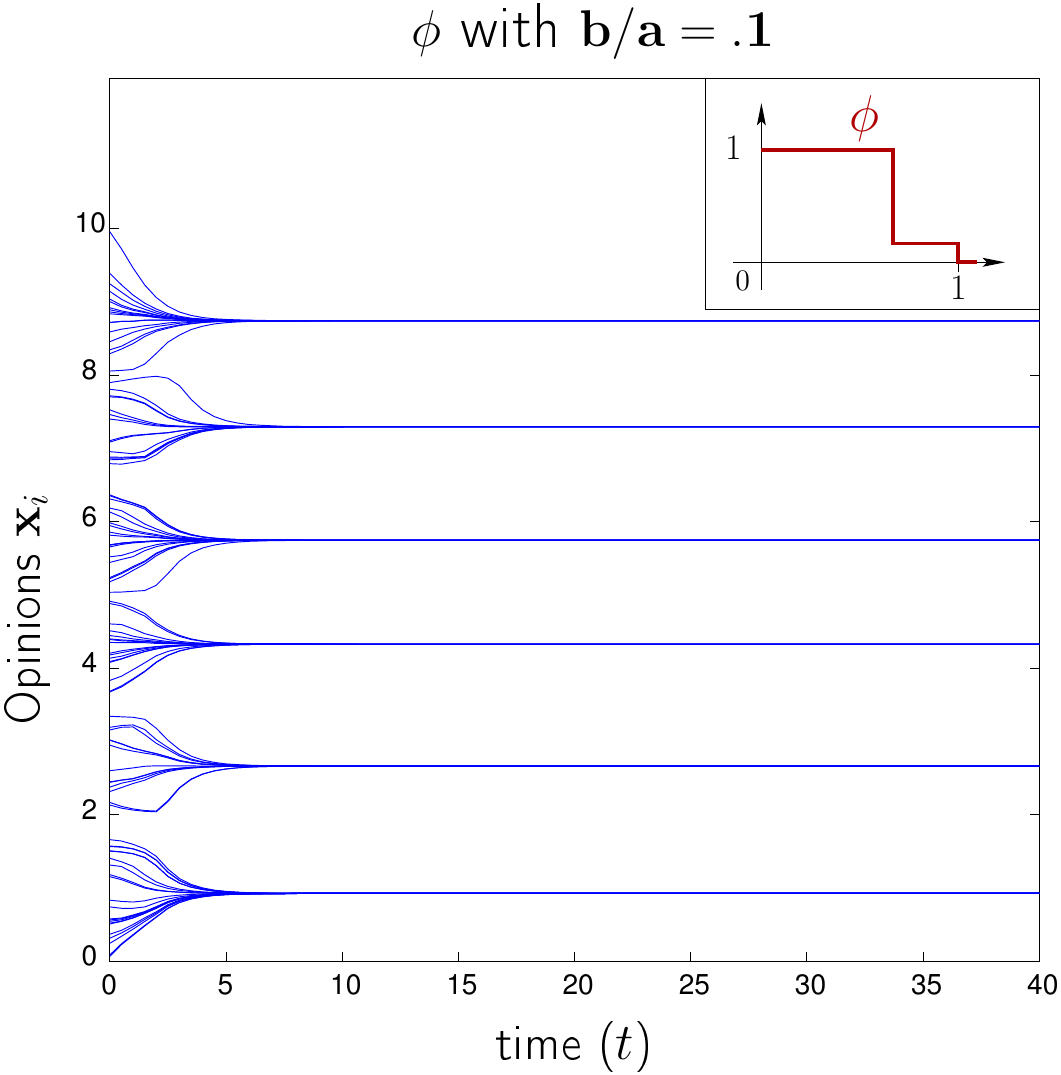} \quad
  \includegraphics[scale=.35]{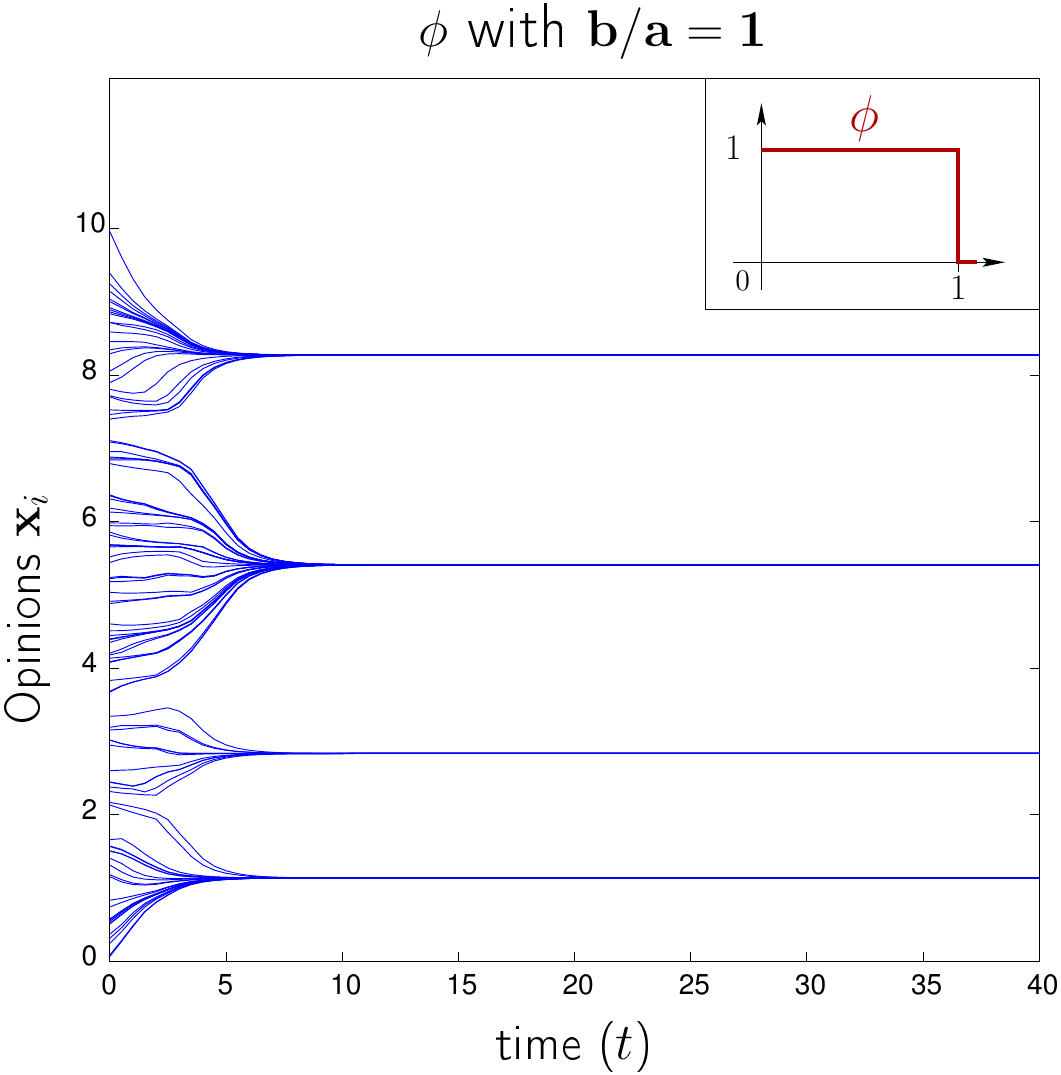}

  \bigskip
  \bigskip
  
  \includegraphics[scale=.35]{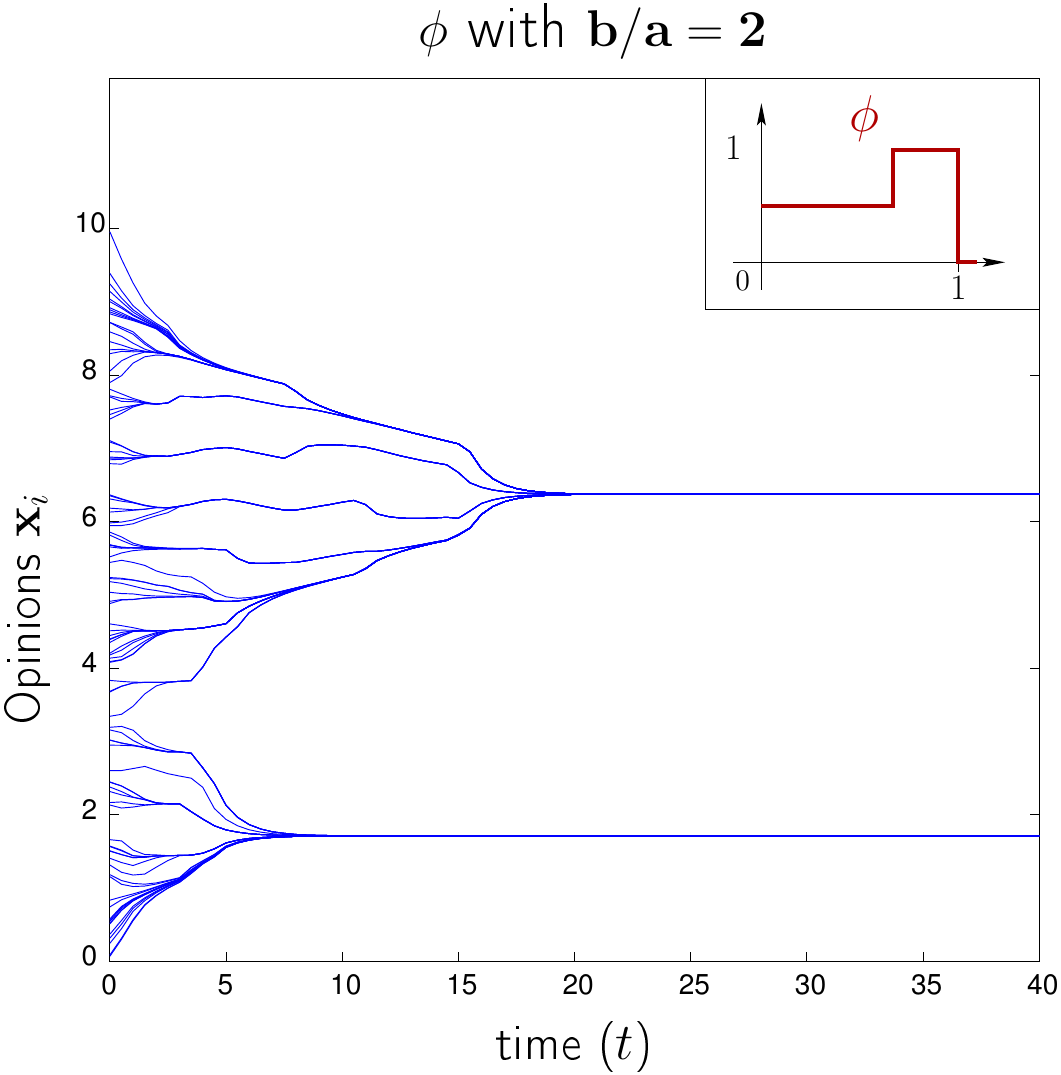} \quad
  \includegraphics[scale=.35]{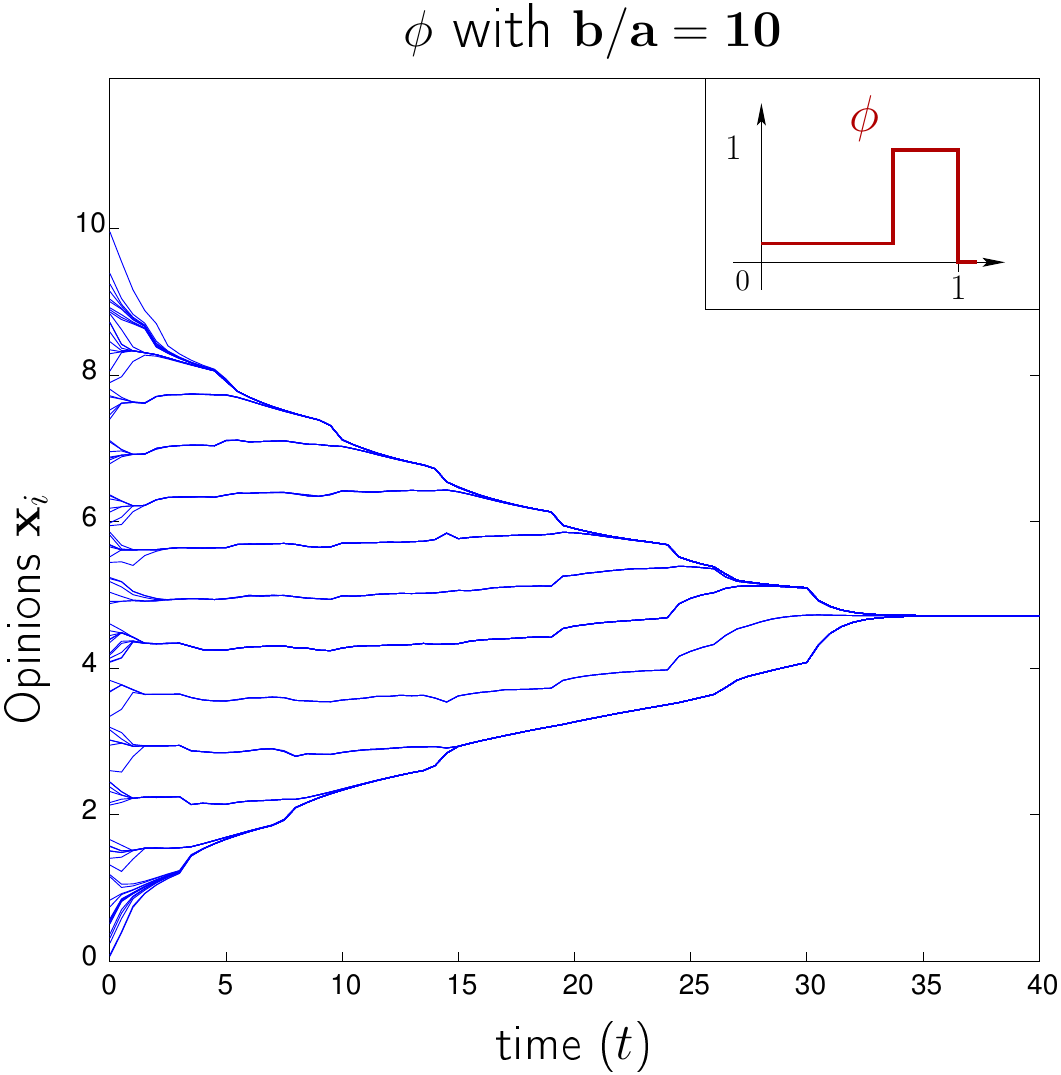}
  \caption{{\small Simulation of the opinion dynamics model with different interacting function $\phi$. When the influence of close neighbors is reduced (i.e. $b/a$ large), the number of cluster decreases. For $b/a=10$, the dynamics converges to a consensus.}}
  \label{fig:ratio011210}
\end{figure}

\begin{figure}[ht]
  \centering
  \includegraphics[scale=.35]{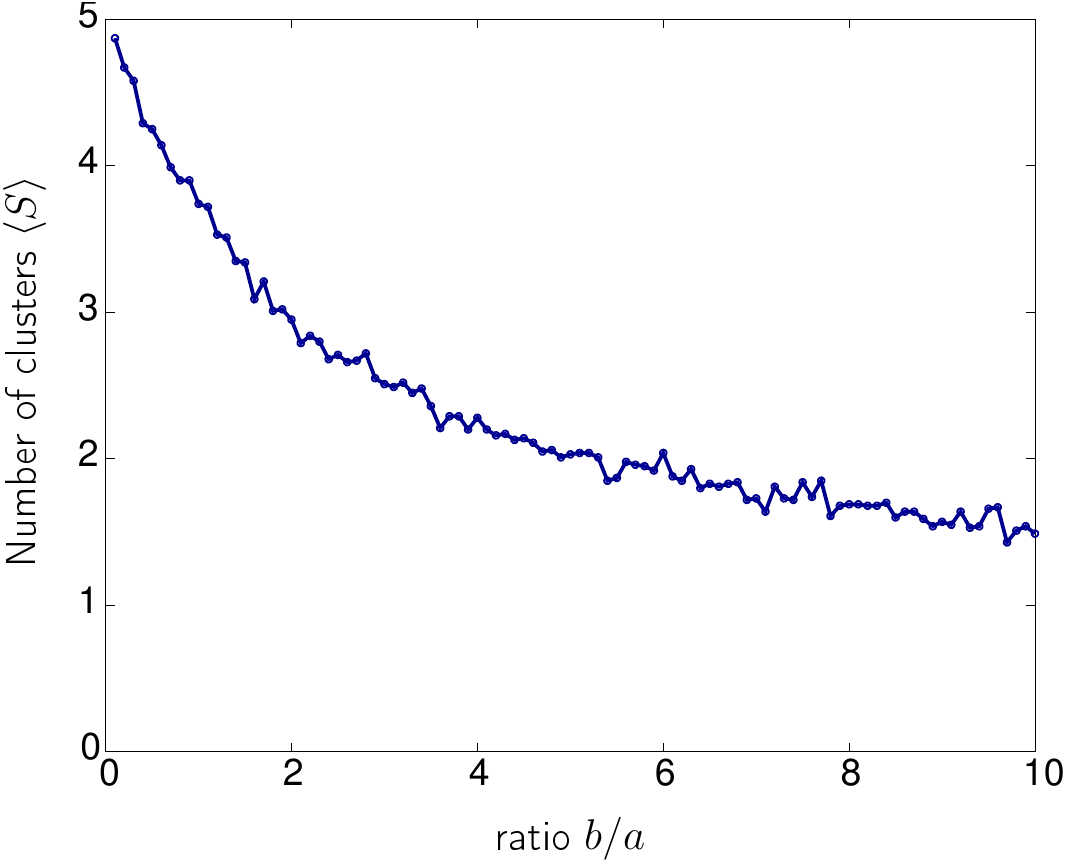} \quad
  \includegraphics[scale=.35]{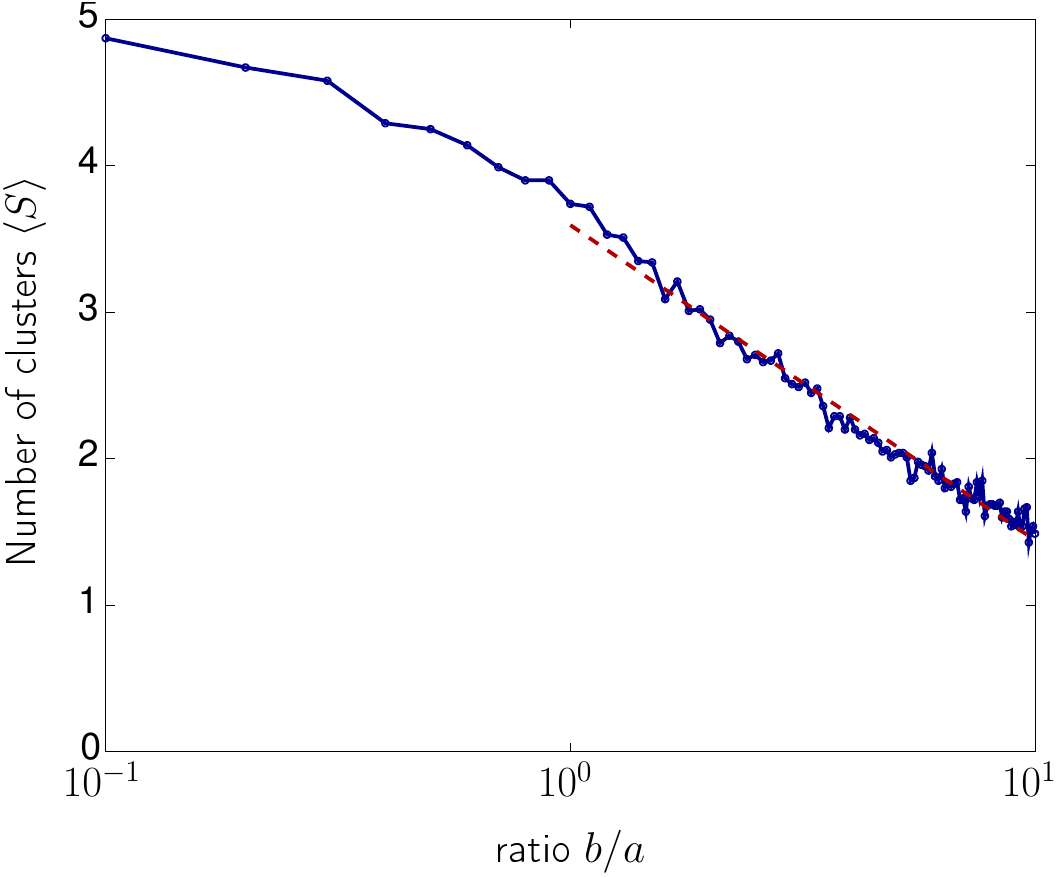}
  \caption{{\small Average number of clusters $\langle S\rangle$ depending on the ratio $b/a$ ({\bf Left} figure). The larger $b/a$ is, the fewer the number of clusters. The decay is logarithmic on $[1,10]$ ({\bf Right} figure). For each value of $b/a$, we run $100$ simulations to estimate the mean number of clusters $\langle S\rangle$. Simulations are run with $\Delta t=.05$ and a final time equals to $t=100$ unit time.}}
  \label{fig:S_depending_bDiva}
\end{figure}

%%%%%%%%%%%%%%%%%%%%%%%%%%%%%%%%%%%%%%%%%%%%
\subsection{Clusters and branches}\label{sec:branches}
%%%%%%%%%%%%%%%%%%%%%%%%%%%%%%%%%%%%%%%%%%%%%%%

We revisit the opinion model (\ref{eq:opinion_formationb}) with an influence step function (\ref{eq:phi_step}).
As noted before, the increasing value of  $b/a$  increases  the probability to reach a consensus.
The simulations in figure \ref{fig:ratio011210} with  $b/a=2$ and with $b/a=10$, show  the apparition of \emph{branches}, where subgroups of  agents have converged to the same opinion yet, in contrast to clustering, these branches of opinions are still interacting with  outsiders, which are in distance which is strictly less than $\D=1$.  
 In particular, when  $b/a=10$, the distribution of opinions $\{{\bf x}_i(t)\}_i$, aggregate to form distinct branches seen in figure \ref{fig:ratio011210}: at $t \sim 5$, one can identify in figure  \ref{fig:histXratio10_t5}, the formation of $10$ branches which are separated by a  distance of approximately $.7$  spatial units. Since the distance between two such branches is always less than the diameter $\D=1$ of $\phi$, these branches are not qualified as isolated clusters, as they continue to be influenced by ``outsiders'' from the nearby branches.   Over time, these branches merge into each other  before they emerge into one final cluster, the consensus,  at $t\sim 33$.
Thus, the decisive factor in the consensus dynamics is not the number of branches but their large time connected components.   Indeed,  figure \ref{fig:ratio011210} with $b/a=10$, shows that the agents in the different branches remain in the same connected component at distance $\sim .7$,  corresponding  to the discontinuity of   $\phi(\cdot)$, which experience a  jump from $.1$ to $1$ at $1/\sqrt{2}\approx.7$. 

To illustrate the apparition of the distance $1/\sqrt{2}$ between two nearest branches, we repeat the simulations, this time with  special initial configurations where all the opinions are uniformly spaced with $|{\bx}_{i+1}-{\bx}_i|=d_*$ with $0<d_*<1$. As we observe in figure \ref{fig:periodic_formation}, the agents $\{{\bx}_i\}_i$ readjust their ``opinion'' such that the distance between nearest neighbors $|{\bx}_{i+1}-{\bx}_i|$ approaches  $1/\sqrt{2}$ as $t\gg 1$.

\begin{figure}[ht]
  \centering
  \includegraphics[scale=.30]{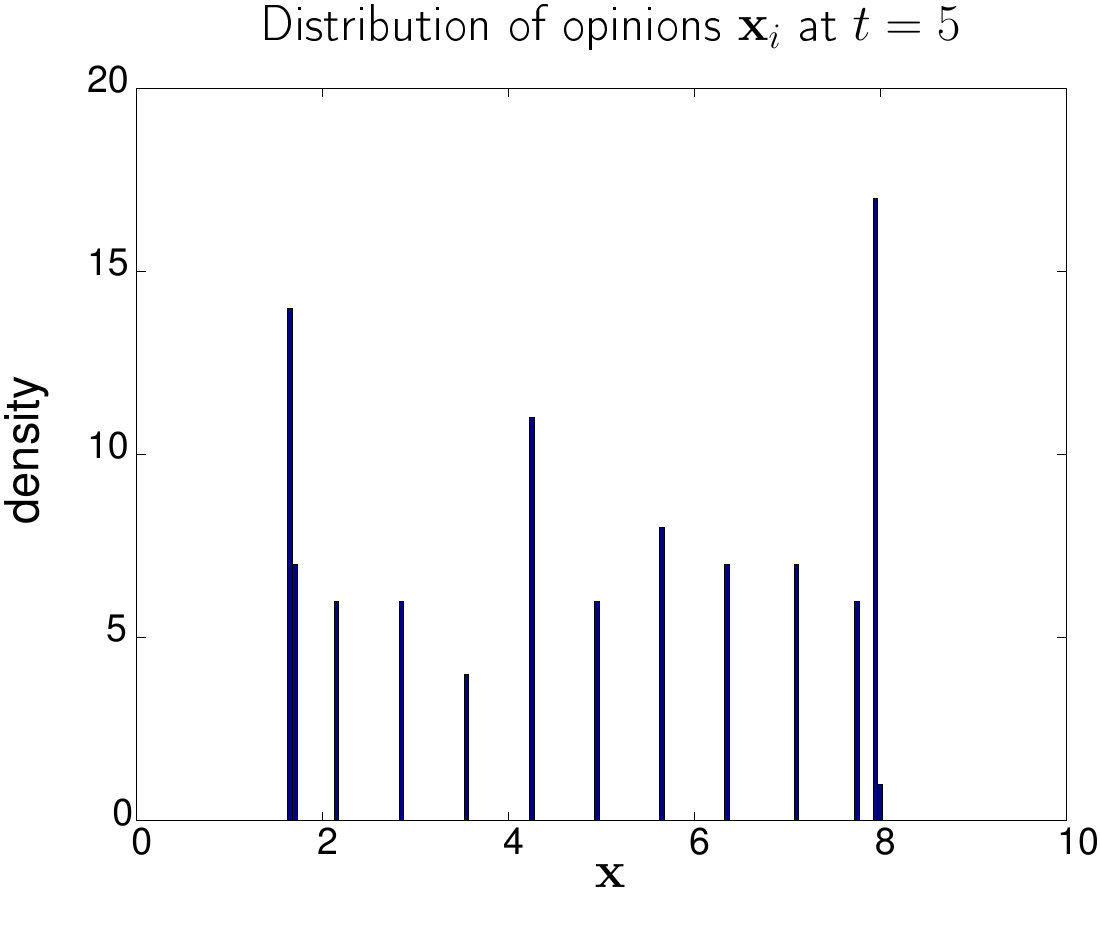}
  \caption{{\small The distribution of $\{{\bx}_i\}_i$ in the simulation of figure \ref{fig:ratio011210} with $b/a=10$ at time $t=5$. The distance between two picks of density is around $1/\sqrt{2}\approx.7$ space unit. This distance corresponds to the discontinuity of the function $\phi(r)$.}}
  \label{fig:histXratio10_t5}
\end{figure}

\begin{figure}[ht]
  \centering
  \includegraphics[scale=.30]{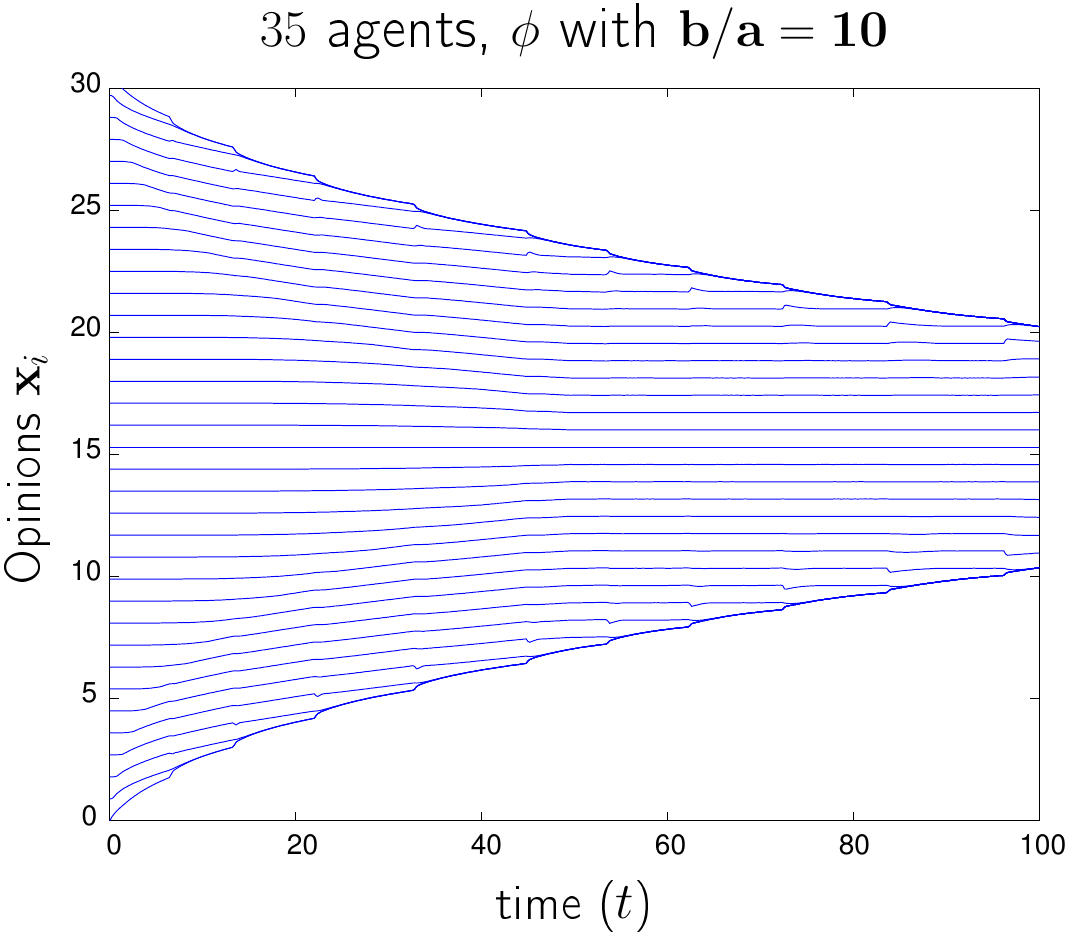}
  \includegraphics[scale=.30]{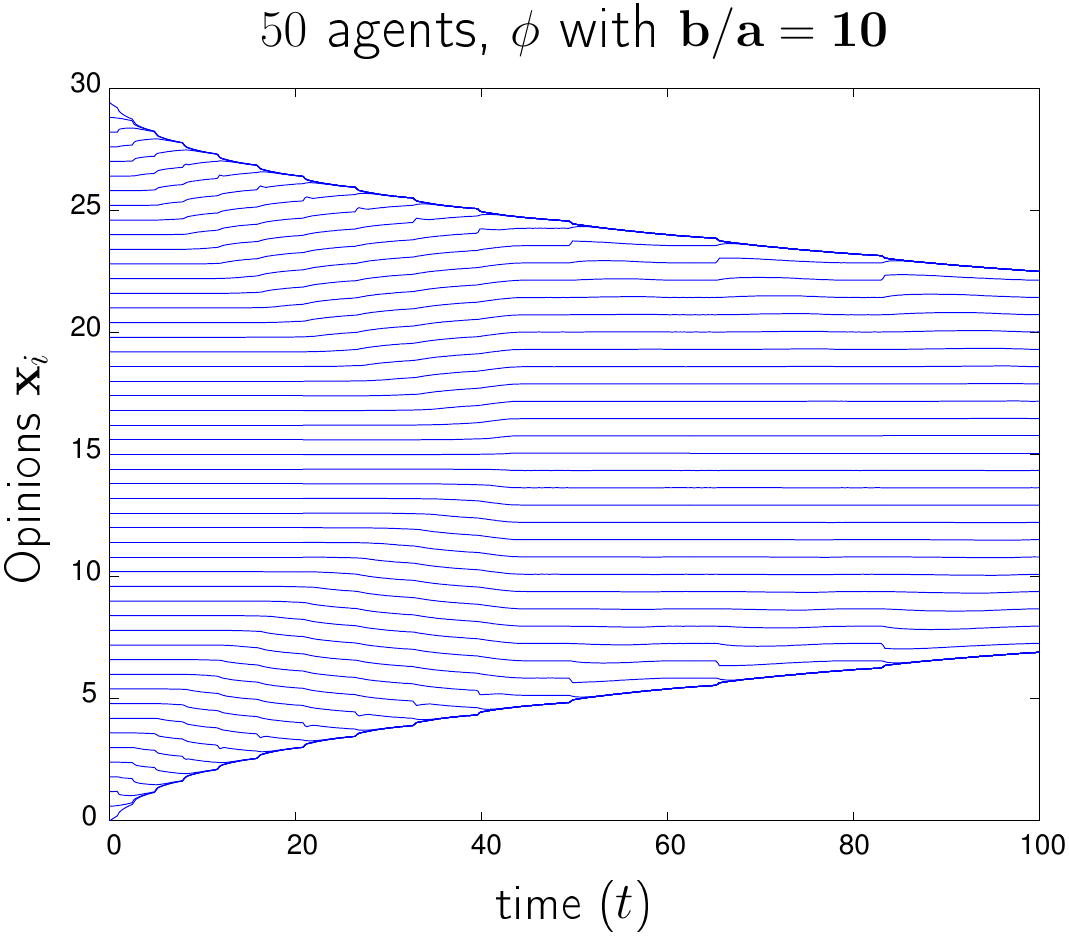}
  \caption{{\small Initial condition with an equi-repartition of $\{{\bf x}_i\}_i$: $|{\bf x}_{i+1}-{\bf x}_i|=.9$ ({\bf Left} figure) and $|{\bf x}_{i+1}-{\bf x}_i|=.6$ ({\bf Right} figure). Nearest neighbor readjust their distance to $1/\sqrt{2}\approx.7$ unit space, we observe a concentration of the trajectories in the left figure and a spread of the trajectories in the right figure.}}
  \label{fig:periodic_formation}
\end{figure}

%%%%%%%%%%%%%%%%%%%%%%%%%%%%%
\subsection{2D simulations}
%%%%%%%%%%%%%%%%%%%%%%%%%%%%%%

We made several 2D simulations with different influence functions $\phi$. As a first illustration, we made a 2D simulation with the same initial configuration used in figure \ref{fig:simu_2D}, but this time we used the influence step function  $\phi$ in (\ref{eq:phi_step}) with $b/a=10$. In figure \ref{fig:simu_2D_ba10}, one can observe a concentration phenomenon (from $t=0$ to $t=2.5$) --- the opinions aggregate into $5$ final clusters, compared with the $17$ clusters  observed in figure \ref{fig:simu_2D} with the influence function $\phi=\chi_{[0,1]}$. Thus, as in the 1D case, a more heterophilious influence function increases the clustering effect.

We also estimate the average number of clusters $\langle S\rangle$ depending on the ratio $b/a$. As observed in figure \ref{fig:S_depending_bDiva_2D}, $\langle S\rangle$ is a decreasing function of $b/a$ and once again the decay of $\langle S\rangle$ as a function of $b/a \in [0,10]$ is \emph{logarithmic}.

\begin{figure}[p]
  \centering
  \includegraphics[scale=.45]{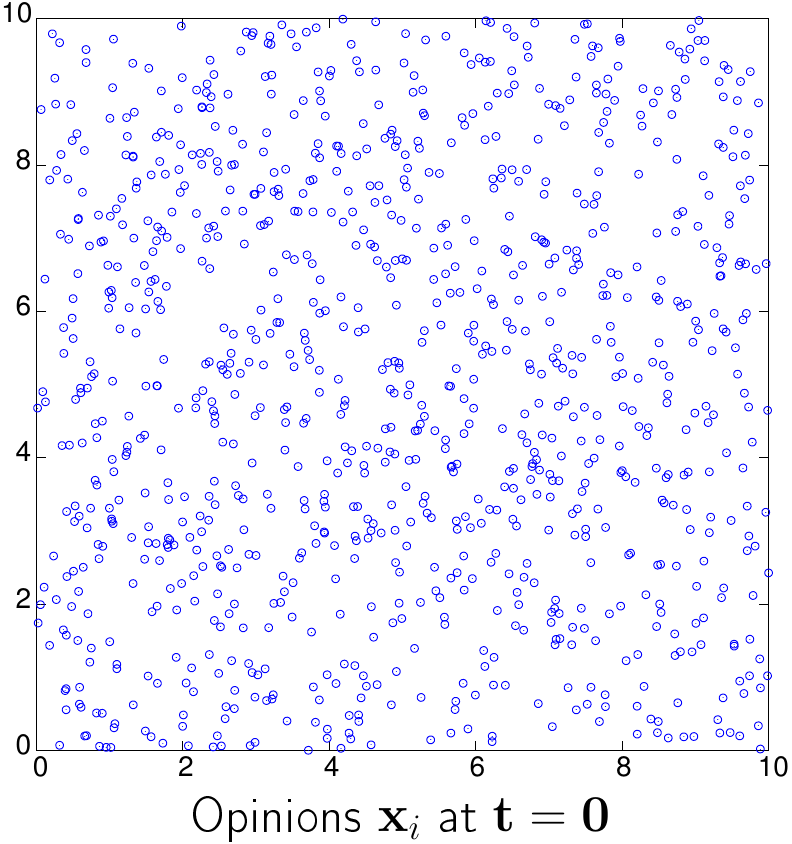} \quad
  \includegraphics[scale=.45]{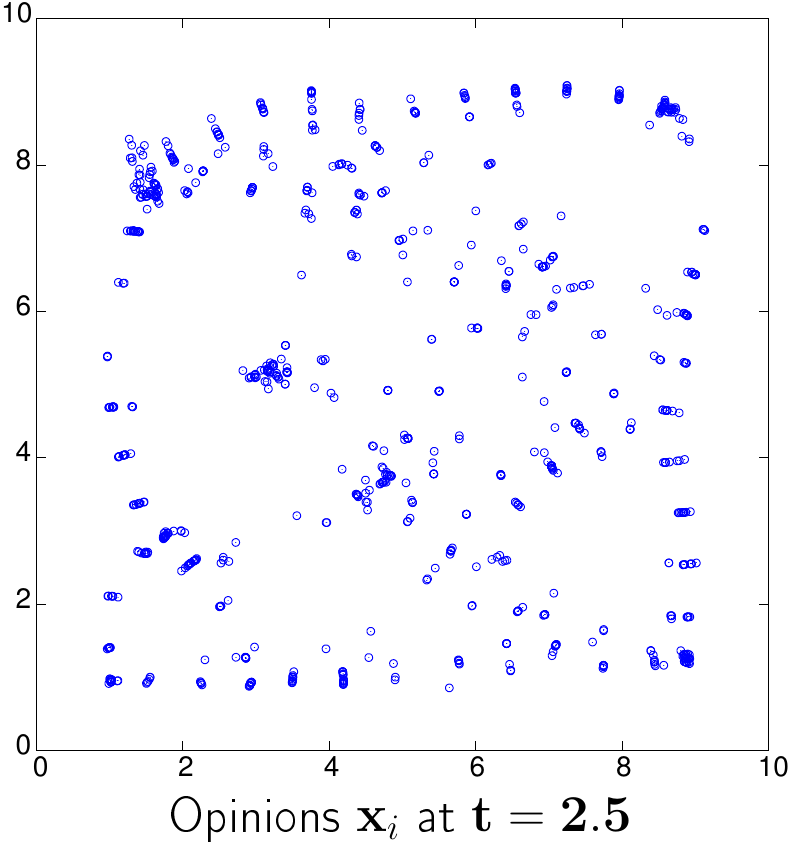}

  \bigskip

  \includegraphics[scale=.45]{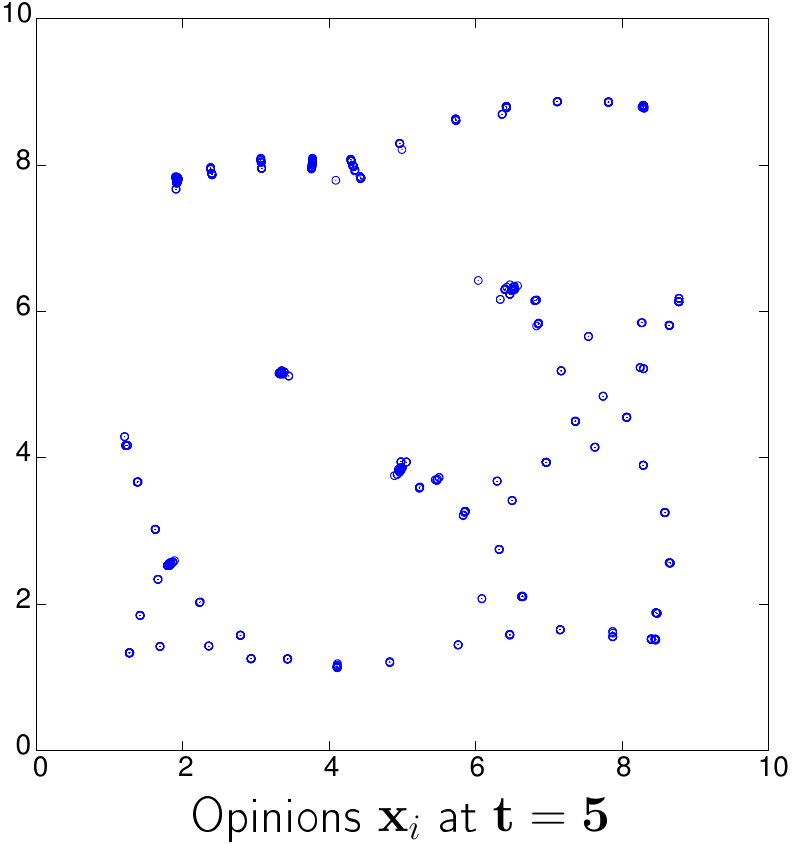} \quad
  \includegraphics[scale=.45]{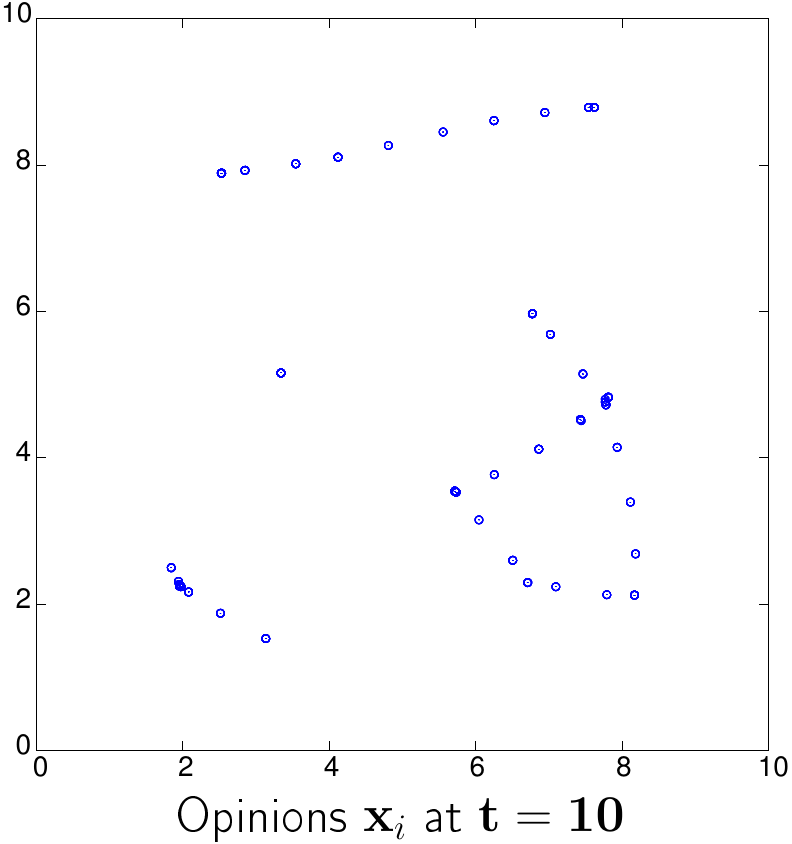}

  \bigskip
  
  \includegraphics[scale=.45]{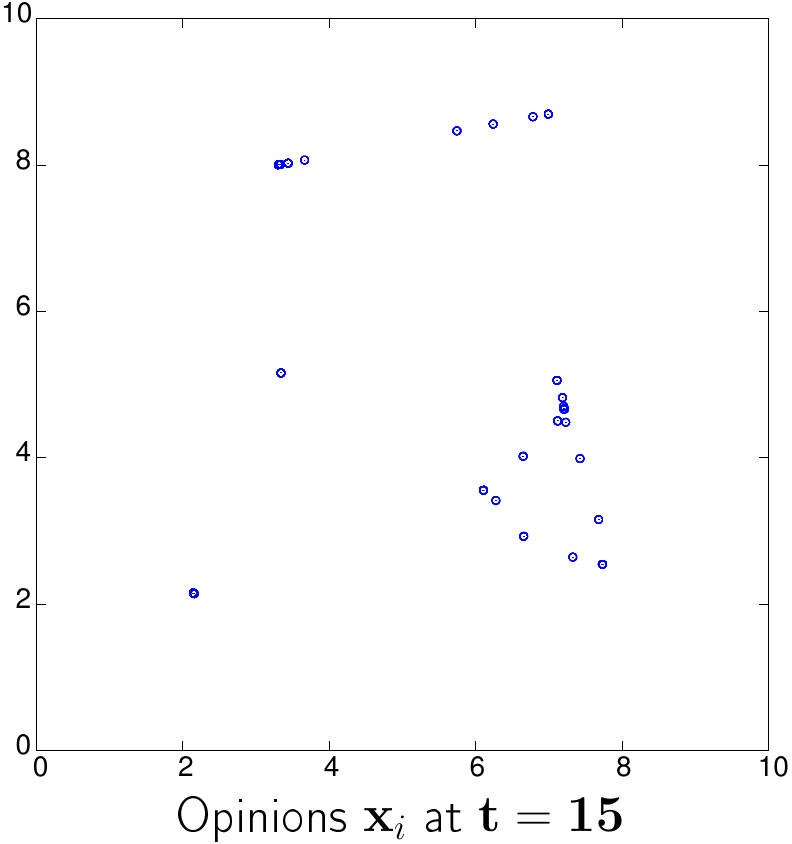} \quad
  \includegraphics[scale=.45]{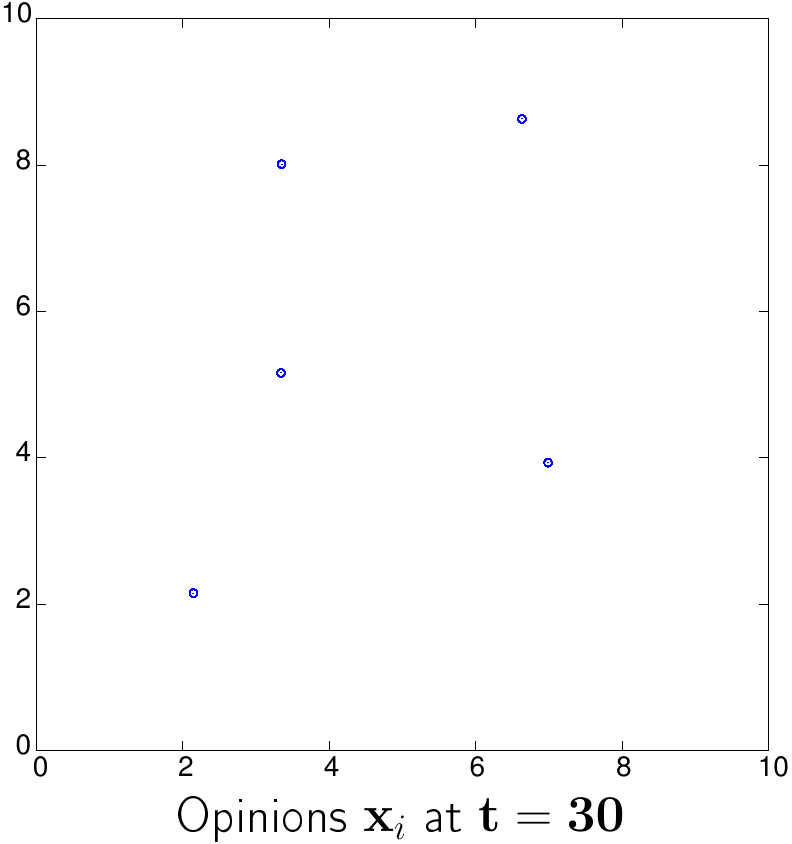}
  \caption{{\small The heterophilious effect: diminishing the influence of close neighbors relative to those further away, increases the clustering effect. 2D simulation of the opinion model (\ref{eq:opinion_formationb}) with $M=1000$ agents using a step influence function  $\phi=.1\,\chi_{[0,1/\sqrt{2}]} \,+\,\chi_{[1/\sqrt{2},1]}$ leads to $5$ clusters which remains at the end of the simulation. This should be  compared with  $17$ clusters with $\phi=\,\chi_{[0,1]}$ (see figure \ref{fig:simu_2D}).}}
  \label{fig:simu_2D_ba10}
\end{figure}

\begin{figure}[ht]
  \centering
  \includegraphics[scale=.30]{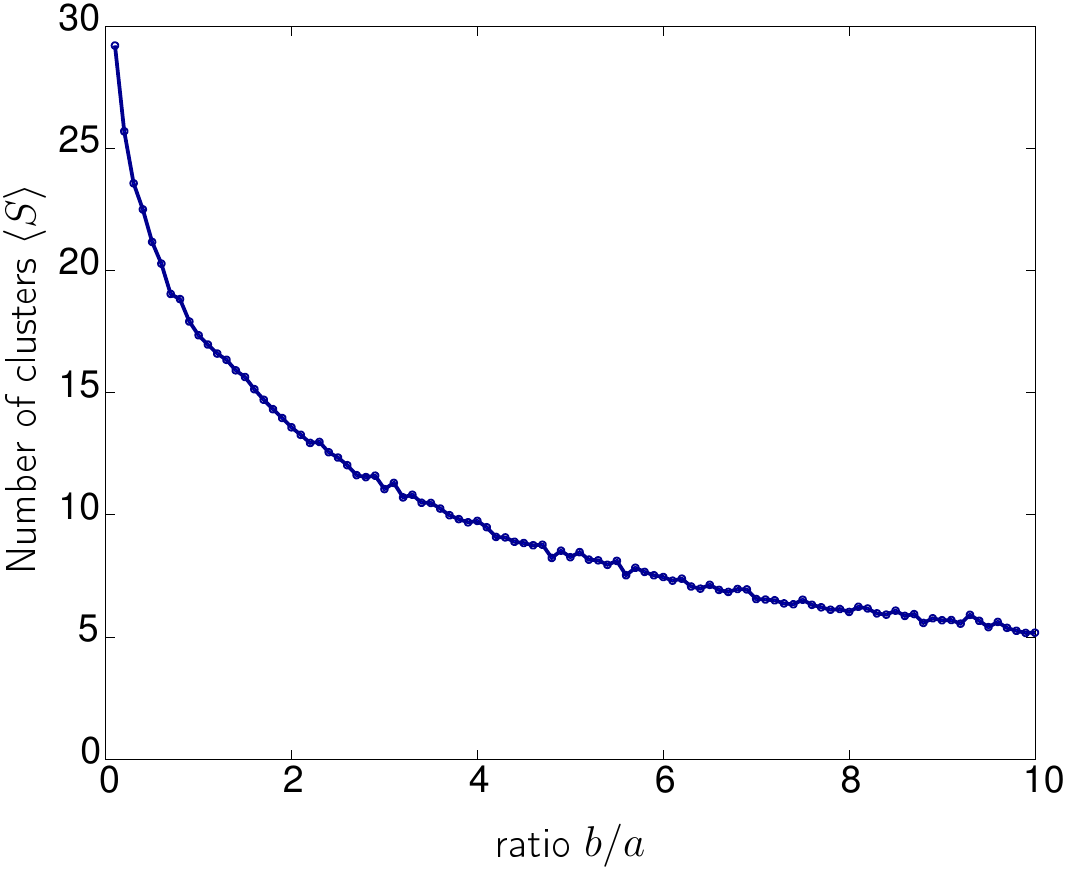} \quad
  \includegraphics[scale=.30]{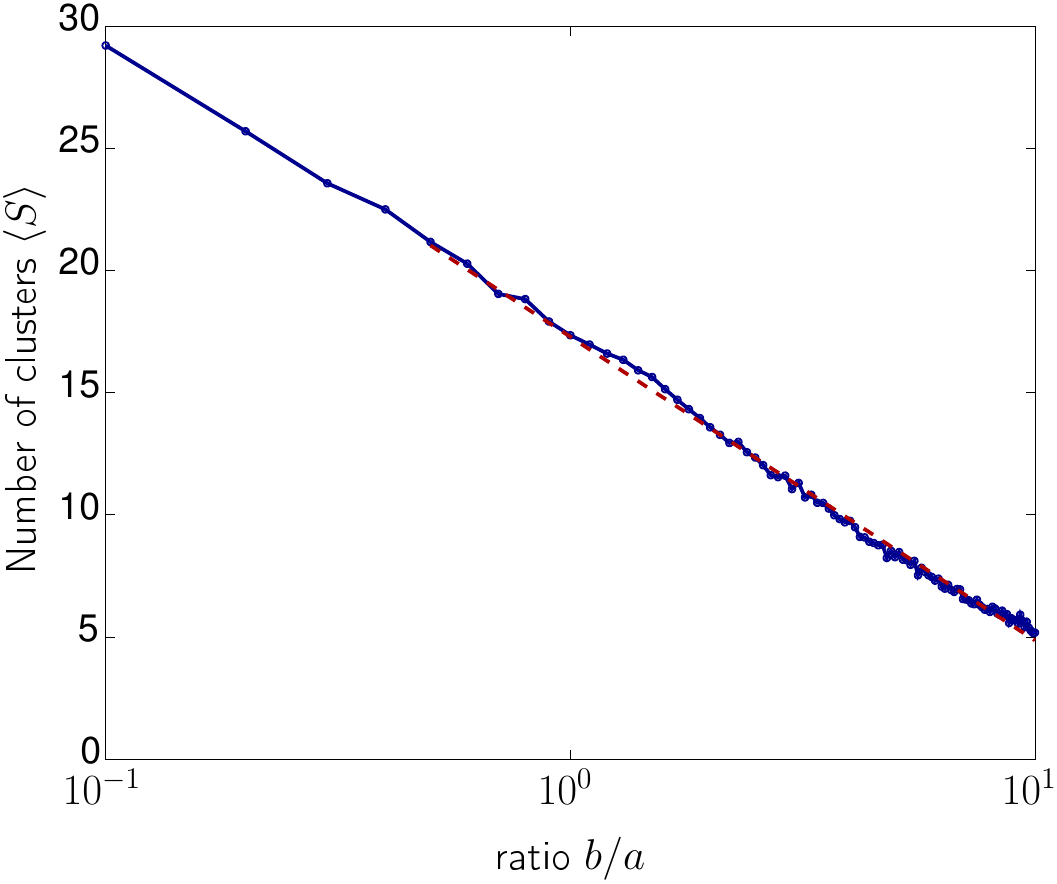}
  \caption{{\small Average number of clusters $\langle S\rangle$ depending on the ratio $b/a$ in 2D ({\bf Left} figure). As in 1D case, the larger $b/a$ is, the fewer the number of clusters, and the decay is logarithmic on $[1,10]$ ({\bf Right} figure). For each value of $b/a$, we made $100$ simulations to estimate the mean number of clusters $\langle S\rangle$. Simulations were made  with $\Delta t=.05$ and were recorded  at the final time  $t=100$.}}
  \label{fig:S_depending_bDiva_2D}
\end{figure}

%%%%%%%%%%%%%%%%%%%%%%%%%%%%%%%%%%%%%%%%%%%%%%%%%%%%%%%%%%%%%%%%%%%%%%%%%%
\section{Heterophilious dynamics with  a fixed-number of neighbors}\label{sec:nearest_neighbor}
\setcounter{equation}{0}
\setcounter{figure}{0}
%%%%%%%%%%%%%%%%%%%%%%%%%%%%%%%%%%%%%%%%%%%%%%%%%%%%%%%%%%%%%%%%%%%%%%

Careful observations of startling flocks led the Rome group \cite{cavagna_et_al_2008i, cavagna_et_al_2008ii, cavagna_et_al_2010} to the  fundamental conclusion that their dynamics is driven by local interaction with a \emph{fixed} number  of nearest neighbors.       
This motivates our study of nearest neighbor models for opinion dynamics which take the form 
\begin{subequations}\label{eqs:near}
\begin{equation}
  \label{eq:model_connectivity}
  \frac{d}{dt}{\bx}_i = \alpha\!\!\!\!\! \sum_{\{j:\, |j-i|\leq \nei\}}\frac{\phi_{ij}}{\deg_i}  ({\bx}_j-{\bx}_i), \qquad \bx_i \in {\mathbb R}^d,
\end{equation}
where the degree $\deg_i$ is given by one of two forms, depending on the symmetric and non-symmetric version of the opinion dynamics in (\ref{eqs:opinion})
\begin{equation}
\left\{\begin{array}{ll}\label{eq:neardeg}
\text{the symmetric case}: \ \ & \displaystyle \deg_i=\frac{1}{2\nei} \\ \\
\text{the nonsymmetric case}: \ \ & \displaystyle \deg_i=\sum_{\{j:|j-i|\leq \nei\}} \phi_{ij}.
\end{array}\right.
\end{equation}
\end{subequations} 
 
Thus,  each agent $i$  is assumed to interact only with its $2q$ agents $i-\nei,\ldots,i+\nei$. Typically, $\nei$ is small (the observation in \cite{cavagna_et_al_2008i, cavagna_et_al_2008ii, cavagna_et_al_2010} report on six to seven active nearest neighbors).
We analyze the connectivity of the particular case of  \emph{two nearest neighbors}, $q=1$. Here we   prove that such local models  preserve connectivity and hence converge to a consensus provided the influence function $\phi$ is increasing.  This result supports our findings in section \ref{sec:heterophily} that heterophilious  dynamics is an efficient strategy to reach a consensus. 

%%%%%%%%%%%%%%%%%%%%%%%%%%%%%%%%%%%%%%%%%%%%%%%%%%%%%%%%%%%%
\subsection{A fixed-number of neighbors with global influence function}
%%%%%%%%%%%%%%%%%%%%%%%%%%%%%%%%%%%%%%%%%%%%%%%%%%%%%%%%%%%%
We begin by noting that the different approaches for consensus of global models apply in the present framework  of local nearest neighbor models (\ref{eqs:near}). For example,  consider the non-symmetric nearest neighbor model
\begin{equation}\label{eq:nearnonsym}
\deg_i\frac{d}{dt}{\bx}_i = \alpha \sum_{\{j:|j-i|\leq \nei\}}{\phi_{ij}}  ({\bx}_j-{\bx}_i), \qquad \deg_i=\sum_{\{j:|j-i|\leq \nei\}} \phi_{ij}.
\end{equation}
It admits an energy functional,   
\begin{displaymath}
  {\mathcal E}(t) := \alpha\!\!\!\!\! \sum_{\{i,j: |i-j|\leq\nei \}} \Phi(|{\bx}_j(t)-{\bx}_i(t)|), \qquad \Phi(r):=\int_{s=0}^r s\phi(s)ds,
\end{displaymath}
which is decreasing in time, ${\mathcal E}(t) \leq {\mathcal E}(0)$ and we conclude 

\begin{theorem}\label{thm:nearg}$($Global connectivity$)$
 Consider the nearest neighbor model \eqref{eq:nearnonsym} with an influence function $\phi, \Supp=[0,\D)$ and assume $\alpha\Phi(\D)>{\mathcal E}(0)$.  Then  
$\min_{|i-j|\leq \nei}\phi_{ij}(t)>m_\infty$ where $m_\infty:=\min_{r< \D} \phi(r)$.
Hence, the nearest neighbor dynamics \eqref{eq:nearnonsym} remains connected and consensus follows.
\end{theorem}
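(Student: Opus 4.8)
The plan is to run the entire argument off the energy functional ${\mathcal E}(t)$ defined just above the statement. The first step is to verify that ${\mathcal E}$ is non-increasing: differentiating $\Phi(|\bx_j-\bx_i|)$, using $\Phi'(r)=r\phi(r)$ together with the symmetry $\phi_{ij}=\phi_{ji}$ of the band index set $\{(i,j):|i-j|\le\nei\}$ and the equation \eqref{eq:nearnonsym}, gives the same cancellation as in \eqref{eq:energy}, namely
\[
\frac{d}{dt}{\mathcal E}(t)\;=\;-2\sum_i \deg_i\,|\dot\bx_i(t)|^2\;\le\;0 ,
\]
so ${\mathcal E}(t)\le{\mathcal E}(0)$ for all $t\ge 0$; this computation never uses symmetry of the adjacency matrix, hence applies to the non-symmetric degree in \eqref{eq:neardeg} as well. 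Since every summand $\alpha\Phi(|\bx_j-\bx_i|)$ is nonnegative ($\Phi$ being the primitive of $s\phi(s)\ge 0$), this already yields, for each band pair $|i-j|\le\nei$ and all $t$,
\[
\alpha\,\Phi(|\bx_i(t)-\bx_j(t)|)\;\le\;{\mathcal E}(t)\;\le\;{\mathcal E}(0)\;<\;\alpha\,\Phi(\D).
\]

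The second step turns this into a uniform-in-time gap. Because $\Supp=[0,\D)$, one has $\int_r^\D s\phi(s)\,ds>0$ for every $r<\D$, so $\Phi$ is \emph{strictly} increasing on $[0,\D]$; hence the inequality above forces $|\bx_i(t)-\bx_j(t)|\le D_*<\D$ for all $t$ and all $|i-j|\le\nei$, where $D_*$ is the unique point of $[0,\D)$ with $\alpha\Phi(D_*)={\mathcal E}(0)$. Consequently $\phi_{ij}(t)=\phi(|\bx_i(t)-\bx_j(t)|)\ge\min_{0\le r\le D_*}\phi(r)>0$ uniformly in time, so the nearest-neighbor graph retains all of its band edges, with weights bounded below by this positive constant over $\deg_i$; in particular the spanning chain $1\!-\!2\!-\!\cdots\!-\!N$ persists and \eqref{eq:nearnonsym} stays uniformly connected.

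For the last step --- the emergence of consensus --- I would appeal to the connectivity-implies-consensus machinery of section \ref{sec:concon}. The proof of theorem \ref{thm:con_non-symmetric} used only three ingredients: a decreasing energy, a uniform connectivity estimate along paths, and the monotone decay of the diameter $\dm{\bx(t)}$ (proposition \ref{ppo:Omega_decreasing_ct}); all three carry over verbatim when the sums are restricted to $\{|i-j|\le\nei\}$ and the connecting path is taken to be the natural chain $\Gamma_{ij}=\{i,i+1,\dots,j\}$, of length $\le N$, along which $\min_{k_\ell\in\Gamma_{ij}}\phi_{k_\ell,k_{\ell+1}}$ is bounded below by the constant just produced. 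Thus ${\mathcal E}$ drives $\sum_{|i-j|\le\nei}\phi_{ij}|\bx_i-\bx_j|^2$ below any prescribed level at some finite time $t_0$, uniform connectivity then yields $\dm{\bx(t_0)}\le\D/2$, the diameter does not increase thereafter, and corollary \ref{cor:act} gives exponential concentration onto an emerging consensus.

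I expect the one delicate point to be the strict monotonicity of $\Phi$ near $\D$: this is precisely where the hypotheses $\Supp=[0,\D)$ and $\alpha\Phi(\D)>{\mathcal E}(0)$ are used, and it is what upgrades the soft bound $|\bx_i-\bx_j|<\D$ into the quantitative gap $|\bx_i-\bx_j|\le D_*<\D$ without which the edge weights --- and therefore the connectivity/spectral constants --- could degenerate. A secondary point worth flagging is that in the nearest-neighbor band graph two far-apart agents share no common neighbor, so the one-step contraction factor $\ceta{A}$ of theorem \ref{thm:contract} and corollary \ref{cor:act} vanishes; one genuinely needs the path-based (graph-diameter) form of the estimate supplied in section \ref{sec:concon}, not a single-step bound.
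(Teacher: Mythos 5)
Your proposal follows essentially the same route as the paper: the paper's entire proof is the observation that the decreasing energy gives $\alpha\Phi(|\bx_i(t)-\bx_j(t)|)\le{\mathcal E}(t)\le{\mathcal E}(0)<\alpha\Phi(\D)$ for every band pair, so monotonicity of $\Phi$ keeps $|\bx_i(t)-\bx_j(t)|<\D$, hence $\phi_{ij}>m_\infty$, connectivity persists, and ``consensus follows''; your first two steps reproduce exactly this. Your quantitative refinement --- locating $D_*<\D$ with $\alpha\Phi(D_*)={\mathcal E}(0)$ and bounding $\phi_{ij}(t)\ge\min_{r\le D_*}\phi(r)>0$ --- is actually a small improvement, since the paper's constant $m_\infty=\min_{r<\D}\phi(r)$ degenerates to zero whenever $\phi$ vanishes continuously at $\D^-$, while your bound remains strictly positive under the stated hypotheses. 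Where you go beyond the paper is the consensus tail, and there your sketch has an internal tension: you correctly observe that $\ceta{A}=0$ for the band graph (agents with $|i-j|>2\nei$ share no common neighbor), yet your closing sentence still invokes corollary \ref{cor:act}, which for the nearest-neighbor model yields a zero contraction rate no matter how small $\dm{\bx(t)}$ becomes --- unlike in theorem \ref{thm:con_non-symmetric}, where shrinking the diameter below $\D$ activates \emph{all} pairwise interactions. The conclusion should instead be routed through the connectivity machinery of section \ref{sec:concon}: for the symmetric normalization in \eqref{eq:neardeg} the path/spectral bound \eqref{eq:fidlow} together with theorem \ref{thm:con_symmetric} finishes the argument (exactly as in theorem \ref{thm:two_connect}), whereas for the non-symmetric normalization \eqref{eq:nearnonsym} the paper itself leaves this last step implicit (its proof likewise ends with ``consensus follows''), so on that point your proposal is no less complete than the paper's own proof.
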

\begin{proof}
  Since ${\mathcal E}$ is decreasing in time,
\[
    \alpha\Phi(|{\bx}_{i}(t)-{\bx}_j(t)|) < {\mathcal E}(0) \leq \alpha\Phi(\D) \qquad \text{ for any } |i-j|\leq \nei, 
\]
 and since $\Phi(r)=\int^r s\phi(s)ds$ is an increasing function, $|{\bx}_{i}(t)-{\bx}_j(t)| < \D$, hence $\phi_{ij}>m_\infty$ and consensus follows.
\end{proof}

We note, however, that since $m_\infty \leq \phi \leq 1$, then $\Phi(r)$ has a quadratic bounds, 
$m_\infty r^2 \leq 2\Phi(r) \leq r^2$, and hence the assumption made in theorem \ref{thm:nearg} implies
\[
\alpha\Phi(\D)>{\mathcal E}(0) \ \leadsto \ \D^2 > m_\infty \!\!\!\!\sum_{|i-j|\leq \nei} |\bx_i-\bx_j|^2.
\] 
Namely, the support of $\phi$ should be sufficiently large  to cover a globally connected path in phase space.

%%%%%%%%%%%%%%%%%%%%%%%%%%%%%%%%%%%%%%%%%%%%%%%%%%%%%%%%%%%%%%%%%%%%%%%%%%%%%%%
\subsection{Two-neighbor dynamics}
%%%%%%%%%%%%%%%%%%%%%%%%%%%%%%%%%%%%%%%%%%%%%%%%%%%%%%%%%%%%%%%%%%%%%%%%%%%
In this section we prove uniform connectivity and hence convergence to a consensus  of  a symmetric \emph{two} nearest neighbor model, (\ref{eqs:near}),
\begin{equation}
  \label{eq:model_connectivity_symmetric}
  \frac{d}{dt}{\bf x}_i = \frac{\alpha}{2}\Big(\kappa_{i+\hf} ({\bx}_{i+1}-{\bx}_i)  +  \kappa_{i-\hf} ({\bx}_{i-1}-{\bx}_i)\Big), \quad \kappa_{i+\hf}:= \left\{\begin{array}{ll} 0, & i=0,N\\ \phi(|\bx_{i+1}-\bx_i)|), & 1\leq i\leq N.\end{array}\right.
\end{equation}
We assume that the initial configuration of agents can be enumerated such that $\{\bx_i(0)\}_i$ is connected
\begin{equation}\label{eq:twoc}
\max_i |\bx_{i+1}(0)-\bx_i(0)| < \D, \qquad \Supp=[0,\D).
\end{equation}
The configuration of  such ``purely'' local interactions applies to the one-dimensional setup where each agent is initially connected to its left and right neighbors; we emphasize that these configurations are not necessarily restricted to the one dimensional setup. 

Forward differencing of (\ref{eq:model_connectivity_symmetric}) implies that $\Delta_{i+\hf}:= \Delta_{i+\hf}(t) := {\bx}_{i+1}(t)-{\bx}_i(t)$ satisfy 
\begin{eqnarray*}
  \frac{d}{dt}{\Delta}_{i+\hf} &=&  \frac{\alpha}{2}\Big(\kappa_{i+\frac{3}{2}} ({\bx}_{i+2}-{\bx}_{i+1})  +  \kappa_{i+\hf} ({\bx}_{i}-{\bx}_{i+1}) 
   - \kappa_{i+\hf} ({\bx}_{i+1}-{\bx}_i)  -  \kappa_{i-\hf} ({\bx}_{i-1}-{\bx}_i)\Big)\\
   &=& \frac{\alpha}{2}\Big(\kappa_{i+\frac{3}{2}} \Delta_{i+\frac{3}{2}} \;-\; 2 \kappa_{i+\hf} \Delta_{i+\hf} \;+\; \kappa_{i-\hf}\Delta_{i-\hf}\Big), \qquad i=1,2,\ldots, N-1.
\end{eqnarray*}
The missing $\Delta$'s for $i=\hf$ and $i=N+\hf$ are defined as $\Delta_{\hf}=\Delta_{N+\hf}=0$.
Let $\Delta_{p+\hf}$ denote the maximal difference, $|\Delta_{p+\hf}|=\max_i |\Delta_{i+\hf}|$ measured in the $\ell_2$-norm.  Then
\begin{eqnarray*}
\frac{1}{2}\frac{d}{dt}|{\Delta}_{p+\hf}|^2  &=&
 \frac{\alpha}{2}\Big(\kappa_{p+\frac{3}{2}} \langle \Delta_{p+\frac{3}{2}},\Delta_{p+\hf}\rangle \;-\; 2 \kappa_{p+\hf} |\Delta_{p+\hf}|^2 \;+\; \kappa_{p-\hf}\langle \Delta_{p-\hf},\Delta_{p+\hf}\rangle\Big) \\
 & \leq & \frac{\alpha}{2}\left(\kappa_{p+\frac{3}{2}}-2 \kappa_{p+\hf} +\kappa_{p-\hf}\right) |\Delta_{p+\hf}|^2.
\end{eqnarray*}
Now, if $\phi$ is   \emph{non-decreasing} influence function, then 
\[
|\Delta_{p+\hf}|\geq |\Delta_{i+\hf}| \ \ \leadsto \ \ 2\kappa_{p+\hf} = 2\phi(|\Delta_{p+\hf}|) \geq \phi(|\Delta_{p-\hf}|)+ \phi(|\Delta_{p+\frac{3}{2}}|),
\]
and hence $|{\Delta}_{p+\hf}(t)|=\max_i \phi(|\bx_{i+1}(t)-\bx_i(t)|) \leq \max_i \phi(|\bx_{i+1}(0)-\bx_i(0)|)$.
We deduce  the following theorem.
\begin{theorem}\label{thm:two_connect}
Consider the nearest neighbor dynamics \eqref{eq:model_connectivity_symmetric} subject to initial configuration, $\bx(0)$ which is connected,
\eqref{eq:twoc},
\[
\max_i |\bx_{i+1}(0)-\bx_i(0)| < \D, \quad \Supp=[0,\D).
\]
 Assume that the  influence function, $\phi$, is non-decreasing.  Then the dynamics \eqref{eq:model_connectivity_symmetric} remains connected and converges to a consensus, $\con{\bx}=\ave{\bx}(0)$,
  \begin{equation}    \label{eq:rate_cv}
    \sum_i|\bx_i(t) - \ave{\bx}(0)|^2 \lesssim \exp\left(-\frac{2\phi(0) t}{N}\right) \sum_i|\bx_i(0) - \ave{\bx}(0)|^2, \qquad 
\ave{\bx}:=\frac{1}{N} \sum_i \bx_i.
  \end{equation}
\end{theorem}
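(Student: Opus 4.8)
The plan is to leverage the computation carried out just above the statement. That computation shows that at the index $p$ realizing the maximal gap $|\Delta_{p+\hf}|=\max_i|\bx_{i+1}-\bx_i|$ one has $\frac12\frac{d}{dt}|\Delta_{p+\hf}|^2\le \frac{\alpha}{2}\big(\kappa_{p+\frac32}-2\kappa_{p+\hf}+\kappa_{p-\hf}\big)|\Delta_{p+\hf}|^2$, and that monotonicity of $\phi$ forces the bracketed discrete second difference of the weights to be $\le 0$ at the extremal gap. Hence $M(t):=\max_i|\bx_{i+1}(t)-\bx_i(t)|$ is non-increasing, so $M(t)\le M(0)<\D$ for all $t$: no edge of the chain ever leaves $\Supp=[0,\D)$. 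Since $\phi$ is non-decreasing on $[0,\D)$, this yields the uniform lower bound $\kappa_{i+\hf}(t)=\phi(|\bx_{i+1}(t)-\bx_i(t)|)\ge \phi(0)$ for every $i$ and every $t$; that is, the path graph underlying \eqref{eq:model_connectivity_symmetric} stays uniformly connected. It then only remains to convert uniform connectivity into an explicit consensus rate, which is exactly the content of Theorem~\ref{thm:con_symmetric}; I will instead run the short self-contained energy estimate below.

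The model is symmetric (the weight $\kappa_{i+\hf}$ drives both the $i\to i{+}1$ and the $i{+}1\to i$ interactions), so $\sum_i\dot{\bx}_i=0$ and the mean $\ave{\bx}(t)$ is conserved. Set $\con{\bx}:=\ave{\bx}(0)$ and $E(t):=\sum_i|\bx_i(t)-\con{\bx}|^2$. Taking the inner product of \eqref{eq:model_connectivity_symmetric} with $\bx_i-\con{\bx}$, summing in $i$, and using summation by parts along the chain (with the convention $\Delta_\hf=\Delta_{N+\hf}=0$) gives
\[
\frac{d}{dt}E(t)=-\alpha\sum_{i=1}^{N-1}\kappa_{i+\hf}(t)\,|\bx_{i+1}(t)-\bx_i(t)|^2\ \le\ -\alpha\,\phi(0)\sum_{i=1}^{N-1}|\bx_{i+1}(t)-\bx_i(t)|^2,
\]
the last inequality being the uniform lower bound on $\kappa_{i+\hf}$ just established.

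It then suffices to invoke a discrete Poincaré inequality on the path: since $\bx_i-\bx_j$ telescopes over consecutive differences, Cauchy--Schwarz gives $|\bx_i-\bx_j|^2\le (N-1)\sum_k|\bx_{k+1}-\bx_k|^2$, and summing over all pairs together with $\sum_{i,j}|\bx_i-\bx_j|^2=2NE$ yields $\sum_k|\bx_{k+1}-\bx_k|^2\ge \frac{2}{N(N-1)}E$ (equivalently, the Fiedler bound \eqref{eq:fidlow} for the chain, $\lambda_2(L_A)\gtrsim \phi(0)/N^2$). Hence $\frac{d}{dt}E\le -\frac{2\alpha\phi(0)}{N(N-1)}E$, and Gr\"onwall produces exponential decay of $E$ at a rate proportional to $\phi(0)/N^2$, which is the convergence to consensus asserted in \eqref{eq:rate_cv}; in particular $\bx_i(t)\to\con{\bx}=\ave{\bx}(0)$ for all $i$. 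The same conclusion, with essentially the same constant, also drops out of Theorem~\ref{thm:con_symmetric} applied with connectivity strength $\mu=\tfrac12\min_i\kappa_{i+\hf}\ge\tfrac12\phi(0)$ and $\mathrm{diam}(\mathcal G_A)=N-1$.

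The substantive step is the first one --- guaranteeing that connectivity is never destroyed as the nonlinear weights $\kappa_{i+\hf}(t)$ co-evolve with the positions --- and this is precisely where the hypothesis that $\phi$ be non-decreasing is used essentially, via the maximum-principle argument for the extremal gap $M(t)$; once connectivity is in hand everything reduces to the standard symmetric energy/spectral estimate of Section~\ref{sec:concon}. The only remaining bookkeeping is to record that $\phi(0)>0$ --- which follows from $\Supp=[0,\D)$ together with monotonicity of $\phi$ --- so that the stated rate is genuinely positive.
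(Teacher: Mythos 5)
Your proposal is correct and follows essentially the same route as the paper: the maximum-principle computation on the extremal gap (which the paper carries out just before the theorem) preserves connectivity with $\kappa_{i+\frac{1}{2}}\geq\phi(0)$, and the consensus rate then comes from the symmetric energy estimate combined with the path-graph Poincar\'e/Fiedler bound $\lambda_2(L_A)\gtrsim\phi(0)/N^2$ --- you merely inline the estimate that the paper obtains by citing Theorem \ref{thm:symm} and \eqref{eq:fidlow}. Note also that your exponent $\sim\alpha\phi(0)t/N^2$ agrees with what the paper's own proof produces (its displayed conclusion has rate $\alpha\phi(0)t/N^2$, despite the $2\phi(0)t/N$ written in the theorem statement), so no discrepancy is attributable to your argument.
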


It is important to notice that theorem \ref{thm:two_connect} requires an non-decreasing influence function. Indeed, the steeper the increase of $\phi$ is, the better the connectivity is. This is concrete  ramification of our main statement that heterophilious dynamics enhances consensus. Note that a two-nearest neighbor dynamics driven by a decreasing $\phi$ will not guarantee consensus as illustrated by the following.

\noindent
{\bf Counterexample}. We revisit the counterexample in section \ref{sec:loc_symm_models}, of five agents symmetrically distributed around $\bx_3(t)\equiv 0$ with $\frac{1}{2} < \bx_4(t)<1 <\bx_5(t) <\frac{3}{2}$, governed by
\begin{eqnarray*}
   \dot{\bx_4} &= & -\phi(|\bx_4|) \bx_4 + \phi(|\bx_5-\bx_4|) (\bx_5-\bx_4), \\
    \dot{\bx}_5 &=& \phi(|\bx_5-\bx_5|) (\bx_4-\bx_5),
\end{eqnarray*}
with a compactly supported  influence function $\phi(r)= (1-r)^2(1+r)^2\chi_{[0,1]}$.
Observe that this configuration amounts to a two-nearest neighbor dynamics. Its concentration into three separate clusters $\{-1,0,1\}$ shown in figure \ref{fig:phase_portrait_cex2}, requires a rapidly decreasing  influence function (to be precise ---  $\phi(r)r \downarrow$ for $r\sim 1$), which is  not covered by the two-nearest neighbors' heterophilious dynamics sought in theorem \ref{thm:two_connect}.

\begin{proof}
 The  adjacency matrix  associated with (\ref{eq:model_connectivity_symmetric}), $\dot{\bx}=\alpha(A\bx-\bx)$ is   given by the tridiagonal matrix $A=\{a_{ij}\}$, given by
\[
a_{ij}=\left\{\begin{array}{ll} 
\displaystyle \hf\kappa_{\frac{i+j}{2}}, \ \  \kappa_{\frac{i+j}{2}}=\phi(|\bx_i-\bx_j|) & |i-j|=1, \\ \\
\displaystyle 1-\hf\kappa_i, \ \  \kappa_i:=\phi_{i,i+1}+\phi_{i,i-1} & i=j.
\end{array}\right.
\]
The corresponding Laplacian associated with $A$ is given by 
  \begin{equation}
    \label{eq:matrix_L}
    L_A = \frac{1}{2}\left[
      \begin{array}{ccccccccc}
        -\kappa_1 & \kappa_{\frac{3}{2}}& &&&&&& \\
        \kappa_{\frac{3}{2}}& -\kappa_2 & \kappa_{\frac{5}{2}} &&&&&& \\
        & \kappa_{\frac{5}{2}} & -\kappa_3 & \kappa_{\frac{7}{2}} &&&&& \\
        & & \ddots & \ddots & \ddots & & & & \\
        & & & \kappa_{N-\frac{3}{2}} & -\kappa_{N-1} & \kappa_{N-\hf} \\
        & & & & \kappa_{N-\hf} & -\kappa_N
      \end{array}
    \right].
  \end{equation}
  Since $|\bx_{i+1}(t)-\bx_i(t)|<\D$, the  off-diagonal entries  $\kappa_{i+\hf} = \phi(|\bx_{i+1}-\bx_i|)>0$ and hence, the graph 
${\mathcal G}_A=(\{\bx(t)\}, A(\bx(t)))$  remains connected with $\mu=\min_i\kappa_{i+\hf}$ and $diam({\mathcal G}_A)=N$. By (\ref{eq:fidlow}) we find 
\[
    \lambda_2(L_A) \geq \frac{\min_i\kappa_{i+\hf}}{N^2} \geq \frac{\phi(0)}{N^2}.
\]  
  Using theorem \ref{thm:symm} (see (\ref{eq:symop})) we end up with
\[
    \sum_i|\bx_i(t)-\ave{\bx}(0)|^2 \lesssim \expo^{\displaystyle -\frac{\alpha \phi(0) t}{N^2}} \sum_i|\bx_i(0)-\ave{\bx}(0)|^2,
\]
  which concludes the proof.
\end{proof}

\begin{myremark}
  The worst case scenario for the decaying of the $|\bx_i(t) - \ave{\bx}|$ is to have many opinions $\bx_i$ concentrate at two extreme values with just one path of opinion connecting the two extremes (see figure \ref{fig:worst_case_symmetricNN}).
\end{myremark}

\begin{figure}[ht]
  \centering
  \includegraphics[scale=0.7]{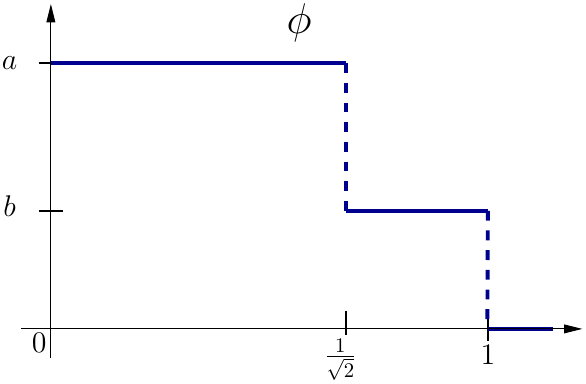}
  \caption{{\small The worst case scenario for the decaying of the norm of the vector ${\Delta}$: the formation is connected but there are two large groups with extreme values.}}
  \label{fig:worst_case_symmetricNN}
\end{figure}

%%%%%%%%%%%%%%%%%%%%%%%%%%%%%%%%%%%%%%%%%%%%%%%%%%%%%%%%%%%%%%%%%%%%%
\section{Self-alignment dynamics with discrete time steps}\label{sec:discrete_dynamics}
\setcounter{equation}{0}    
\setcounter{figure}{0}
%%%%%%%%%%%%%%%%%%%%%%%%%%%%%%%%%%%%%%%%%%%%%%%%%%%%%%%%%

Models for opinion  dynamics were originally introduced  as a discrete algorithms. In this section  we therefore extend our results on the semi-discrete continuous opinion dynamics (\ref{eq:opinion_formationb}) to the fully discrete case,
\begin{equation}
  \label{eq:consensus_Euler}
  \frac{{\bf x}_i(t+\Delta t) - {\bf x}_i(t)}{\Delta t} = \alpha \frac{\sum_j \phi_{ij}({\bf x}_j(t)-{\bf x}_i(t))}{\sum_j\phi_{ij}}.
\end{equation}
In particular, for $\alpha=1/\Delta t$ we find that  ${\bf x}_i^n= {\bf x}_i(n\Delta t)$ satisfies the Krause model \cite{blondel_convergence_2005,blondel_krauses_2009,krause_discrete_2000}
\begin{equation}
  \label{eq:consensus_discrete}
  {\bf x}_i^{n+1} = \frac{\sum_j \phi_{ij}{\bf x}_j^n}{\sum_j\phi_{ij}},\qquad \phi_{ij}
  = \phi(|{\bf x}_j^n-{\bf x}_i^n|).
\end{equation}
In the following, we study the properties of the discrete dynamics (\ref{eq:consensus_discrete}).

%%%%%%%%%%%%%%%%%%%%%%%
\subsection{Consensus with global interactions}
%%%%%%%%%%%%%%%%%%%%%%%%

Many results of the continuous dynamics (\ref{eq:opinion_formationb}) can be translated to the discrete dynamics (\ref{eq:consensus_discrete}). For example, the convex hull of the opinions $\Omega$ (\ref{eq:Omega}) is still decreasing in time:
\begin{displaymath}
  \Omega(n+1) \subset \Omega(n).
\end{displaymath}

The discrete dynamics (\ref{eq:consensus_discrete}) will also converge to a consensus if initially all agents interact with each other. More precisely, arguing along the lines  of proposition  \ref{prop:CV_opinions} gives the following result.
\begin{theorem}
  \label{thm:expo_cv_discreet}
  Assume that $m=\min_{r\in[0,\dm{\bx(0)}]} \phi(r)>0$. Then, the diameter of the discrete dynamics (\ref{eq:consensus_discrete}) satisfies
    \begin{equation}
    \label{eq:d_cv_expo}
    \dm{\bx^n} \,\leq\, (1-m)^n\dm{\bx^0} \;\; \stackrel{n \rightarrow \infty}{\longrightarrow} \;\; 0.
  \end{equation}
and convergence to a consensus,  ${\bf x}_i^n \stackrel{n \rightarrow \infty}{\longrightarrow} \con{\bx} \in \Omega(0)$ follows.
\end{theorem}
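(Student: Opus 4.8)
The plan is to mirror the continuous-time argument of Theorem~\ref{thm:contract} and Proposition~\ref{prop:CV_opinions}, replacing the differential inequality for the diameter by the corresponding discrete contraction estimate. First I would observe that one step of \eqref{eq:consensus_discrete} is exactly $\bx^{n+1}=A^n\bx^n$ with the row-stochastic adjacency matrix $A^n=\{a^n_{ij}\}$, $a^n_{ij}=\phi_{ij}/\sum_k\phi_{ik}$, $\phi_{ij}=\phi(|\bx^n_i-\bx^n_j|)$. Consequently the key contraction bound \eqref{eq:dpcontract} established inside the proof of Theorem~\ref{thm:contract} applies verbatim:
\[
\dm{\bx^{n+1}}=\dm{A^n\bx^n}\leq (1-\ceta{A^n})\,\dm{\bx^n},\qquad \ceta{A^n}=\min_{ij}\sum_k\min\{a^n_{ik},a^n_{jk}\}.
\]

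Next I would lower-bound the concentration factor $\ceta{A^n}$ uniformly in $n$. Since the convex hull $\Omega(n)$ is decreasing (as recalled just above the theorem), the diameter is non-increasing, $\dm{\bx^n}\leq\dm{\bx^0}$, so every pairwise distance satisfies $|\bx^n_i-\bx^n_j|\leq\dm{\bx^0}$ and hence $\phi_{ij}\geq m>0$. Combining this with the trivial bound $\sum_k\phi_{ik}\leq N$ (at most $N$ terms, each $\leq 1$) gives $a^n_{ij}\geq m/N$ for all $i,j$; summing $\min\{a^n_{ik},a^n_{jk}\}\geq m/N$ over the $N$ values of $k$ yields $\ceta{A^n}\geq m$. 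Since $\phi\leq 1$ forces $0\leq 1-m<1$, the recursion $\dm{\bx^{n+1}}\leq(1-m)\dm{\bx^n}$ iterates to the claimed geometric decay $\dm{\bx^n}\leq(1-m)^n\dm{\bx^0}\to 0$.

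Finally, to upgrade diameter decay to convergence of the individual opinions I would use the nested decreasing convex hulls: $\Omega(n+1)\subset\Omega(n)$ with $\mathrm{diam}(\Omega(n))=\dm{\bx^n}\to 0$, so $\bigcap_n\Omega(n)$ is a single point $\con{\bx}\in\Omega(0)$, and $|\bx^n_i-\con{\bx}|\leq\dm{\bx^n}\to 0$ for every $i$. Equivalently, one may note that $|\bx^{n+1}_i-\bx^n_i|\leq\dm{\bx^n}\leq(1-m)^n\dm{\bx^0}$ is summable, so each sequence $\{\bx^n_i\}_n$ is Cauchy with a limit that is common to all $i$ because $\dm{\bx^n}\to 0$.

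I do not anticipate a genuine obstacle: the whole difficulty of the semi-discrete theory --- tracking $\ceta{A}$ through the nonlinear dependence $A=A(\P(t))$ --- is trivialized by the global-interaction hypothesis $m>0$, which pins $\ceta{A^n}$ below $m$ for all $n$ simultaneously. The only points that demand a little care are bookkeeping ones: confirming that the sum $\sum_k\phi_{ik}$ in \eqref{eq:consensus_discrete} runs over all $k$ (including $k=i$, with $\phi_{ii}=\phi(0)$) so that $A^n$ is genuinely row-stochastic and the estimate \eqref{eq:dpcontract} is legitimately available, and checking $0\le 1-m<1$ so the geometric factor makes sense (with $m=1$ corresponding to one-step consensus, e.g.\ $\phi=\chi_{[0,\D]}$ with $\dm{\bx^0}\le\D$).
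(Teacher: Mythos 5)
Your proposal is correct and follows essentially the same route as the paper: one discrete step is $\bx^{n+1}=A^n\bx^n$ with a row-stochastic matrix, the contraction estimate \eqref{eq:dpcontract} gives $\dm{\bx^{n+1}}\leq(1-\ceta{A^n})\dm{\bx^n}$, and the global bound $a^n_{ij}\geq m/N$ yields $\ceta{A^n}\geq m$ (the paper packages this last step via the active-set bound with $\theta=m/N$ and $\lambda(\theta)=N$, which is the identical computation). Your closing argument via the nested shrinking convex hulls is a harmless elaboration of the consensus claim that the paper leaves implicit.
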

\begin{proof}
  Using the contraction estimate  (\ref{eq:dpcontract}) followed by the bound $\ceta{_A} \geq \max_\theta \theta\cdot\lambda(\theta)$   yield
\[
    \dm{{\bx}^{n+1}}\leq (1-\ceta{_A})\dm{\bx^n} \leq (1- \theta\cdot \lambda(\theta,t^n))\dm{\bx^n}, \qquad  a_{ij} = \frac{\phi_{ij}}{\sum_\ell\phi_{\ell j}}.
\]
  Fix  $\theta={m}/{N}$ then $\Lambda(\theta)$ includes all agents, $\lambda(\theta,t^n)=N$, and we conclude
  \begin{displaymath}
    \dm{\bx^{n+1}} \leq \big(1-m\big)\, \dm{\bx^n},
  \end{displaymath}
  which proves (\ref{eq:d_cv_expo}).
\end{proof}

%%%%%%%%%%%%%%%%%%%%%%%
\subsection{Clustering with local interactions}
%%%%%%%%%%%%%%%%%%%%%%%%%

As in the continuous dynamics, we would like to investigate the behavior of the discrete dynamics (\ref{eq:consensus_discrete}) with  \emph{local} interactions; in particular, we are interested in the formation of clusters. Our aim is to reproduce the discrete analog of proposition \ref{ppo:limit_set_ct}.

\begin{proposition}\label{ppo:limit_set_discreet}
Let $\P^n=\{\bx^n_k\}_k$ be the solution of the discrete opinion dynamics  \eqref{eq:consensus_discrete} with compactly supported  influence function $\Supp=[0,\D)$.
Assume that it approaches  a steady state fast enough so that
\begin{equation}\label{eq:fast_end}
\sum_{n=m}^\infty \sum_i|\bx^{n+1}_i-\bx^n_i| \stackrel{m \rightarrow \infty}{\longrightarrow} 0.
\end{equation}
Then $\{\bx^n\}$ approaches a stationary state, $\con{\bx}$,
which  is partitioned into clusters, $\{{\mathcal C}_k\}_k$,  such that $\{1,2,\ldots,N\}=\cup_{k=1}^K \C_k$ and
  \begin{equation} 
    \label{eq:pq_cluster_d}
   \bx^n_i \longrightarrow \con{\bx}_{\C_k}, \qquad \text{for all} \ \ i\in \C_k.
  \end{equation}
\end{proposition}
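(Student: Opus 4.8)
The plan is to mirror the argument of Proposition \ref{ppo:limit_set_ct} almost verbatim, replacing time-derivatives by forward differences and integrals by sums. First I would use the summability hypothesis \eqref{eq:fast_end} to show that each agent converges: for $m \leq n_1 < n_2$ one has $|\bx_i^{n_2}-\bx_i^{n_1}| \leq \sum_{n=n_1}^{n_2-1}|\bx_i^{n+1}-\bx_i^n|$, which by \eqref{eq:fast_end} is arbitrarily small for $n_1$ large; hence $\bx_i^n \to \con{\bx}_i$ exists for every $i$, and the limiting configuration $\con{\bx}=\{\con{\bx}_i\}$ is well defined. By continuity of $\phi$ off its jump set (and here $\phi$ may be taken piecewise continuous, so one should note that the sequence $|\bx_i^n-\bx_j^n|$ eventually avoids the discontinuity unless $|\con{\bx}_i-\con{\bx}_j|$ lands exactly on $\D$, a degenerate case to be handled or excluded), the coefficients $\phi_{ij}^n = \phi(|\bx_i^n-\bx_j^n|)$ and the degrees $\deg_i^n=\sum_j \phi_{ij}^n$ converge as well.

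Next I would identify the clusters through the limiting value of the increments. The key identity is the discrete analogue of \eqref{eq:abound}: from \eqref{eq:consensus_discrete}, $\bx_i^{n+1}-\bx_i^n = \deg_i^{-1}\sum_j \phi_{ij}^n(\bx_j^n-\bx_i^n)$, and taking the scalar product against $\bx_i^n$ (or better, against $\bx_i^{n+1}$ to get a clean sign) and summing over $i$ yields a relation of the form $\sum_i \deg_i^n \langle \bx_i^{n+1}-\bx_i^n, \bx_i^n\rangle = -\tfrac12 \sum_{ij}\phi_{ij}^n |\bx_j^n-\bx_i^n|^2 + (\text{a controlled error})$. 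Since the $\bx_i^n$ stay in the bounded set $\Omega(0)$, the $\deg_i^n \leq N$ are bounded, and $\bx_i^{n+1}-\bx_i^n \to 0$ (a consequence of \eqref{eq:fast_end}), the left-hand side tends to $0$; hence $\sum_{ij}\phi_{ij}^\infty |\con{\bx}_j-\con{\bx}_i|^2 = 0$, and therefore $\phi(|\con{\bx}_i-\con{\bx}_j|)\,|\con{\bx}_i-\con{\bx}_j|^2 = 0$ for every pair $i,j$. Exactly as in Proposition \ref{ppo:limit_set_ct}, this forces: whenever $|\con{\bx}_i-\con{\bx}_j|<\D$ we have $\phi(|\con{\bx}_i-\con{\bx}_j|)>0$, hence $\con{\bx}_i=\con{\bx}_j$; defining the clusters $\C_k$ as the equivalence classes of ``$\con{\bx}_i=\con{\bx}_j$'' (equivalently, connected components of the graph with edges $\{i,j : |\con{\bx}_i-\con{\bx}_j|<\D\}$) gives the partition $\{1,\dots,N\}=\cup_{k=1}^K\C_k$ with $\bx_i^n \to \con{\bx}_{\C_k}$ for all $i\in\C_k$, which is \eqref{eq:pq_cluster_d}.

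The main obstacle I anticipate is not the convergence bookkeeping but the handling of the error term in the discrete energy identity and, more delicately, the discontinuity of the compactly supported $\phi$. In the continuous case one passes to the limit in $\dot{\bp}_i(t)\to 0$ using that the right-hand side of the ODE is continuous in $\P$; in the discrete case the increment $\bx_i^{n+1}-\bx_i^n$ equals the right-hand side exactly, so one must argue that $\bx_i^{n+1}-\bx_i^n \to 0$ first (immediate from \eqref{eq:fast_end}, since the general term of a convergent series tends to zero) and then that $\sum_j \phi_{ij}^n(\bx_j^n-\bx_i^n) \to \sum_j \phi(|\con{\bx}_i-\con{\bx}_j|)(\con{\bx}_j-\con{\bx}_i)$, which needs the convergence of $\phi_{ij}^n$. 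If $|\con{\bx}_i-\con{\bx}_j|=\D$ exactly, $\phi$ need not be continuous there; the cleanest fix is to observe that such a coincidence can be ruled out generically, or simply to assume $\Supp=[0,\D)$ with $\phi$ continuous on $[0,\D)$ and note that any limiting distance equal to $\D$ contributes $\phi=0$ in the limit regardless of the approach, so \eqref{eq:convergence_sum_ct}'s analogue still holds and the argument goes through with $|\con{\bx}_i-\con{\bx}_j|\geq \D$ treated as ``separated''. With that caveat dispatched, the remainder is a routine transcription of the semi-discrete proof.
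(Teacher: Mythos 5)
Your proposal is correct and takes essentially the same route as the paper's proof: convergence of each $\bx_i^n$ from the summability hypothesis \eqref{eq:fast_end}, the scalar product of the update identity against $\bx_i^n$ combined with the symmetry $\phi_{ij}=\phi_{ji}$ to produce $-\tfrac{1}{2}\sum_{ij}\phi_{ij}|\bx_j^n-\bx_i^n|^2$, and the vanishing of this sum in the limit to split the limiting configuration into clusters. The only remarks: the ``controlled error'' you anticipate is in fact absent, since $\deg_i\,(\bx_i^{n+1}-\bx_i^n)=\sum_j\phi_{ij}(\bx_j^n-\bx_i^n)$ holds exactly so the discrete dissipation identity is exact, and your extra care about the possible discontinuity of $\phi$ at the edge of its support is a legitimate point that the paper's own proof glosses over.
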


\begin{proof}
By assumption (\ref{eq:fast_end})
\[
|\bx_i^{n_2}-\bx_i^{n_1}| \leq \sum_{n=n_1}^{n_2-1}|\bx_i^{n+1}-\bx_i^n| \ll 1, \ \ \text{for} \ \ n_2>n_1 \gg 1,
\] 
and hence $\bx^n$ approach a limit, $\bx_i^n \stackrel{n \rightarrow \infty}{\longrightarrow}  \con{\bx}_i$.  
The discrete dynamics (\ref{eq:consensus_discrete}) can be written in the following form:
\[
    \sum_j \phi_{ij} \left(\bx_i^{n+1}-{\bx}_i^n\right) = \sum_{j} \phi_{ij} ({\bx}_j^n-{\bx}_i^n).
\]
  Taking the scalar product  against ${\bx}_i^n$,  summing in $i$ and using the symmetry of $\phi_{ij}$ yields
\[
    \sum_{ij} \phi_{ij} \left\langle\left(\bx_i^{n+1}-{\bx}_i^n\right)\!, {\bx}_i^n\right\rangle = \sum_{ij} \phi_{ij} \langle{\bx}_j^n-{\bx}_i^n\,,\,{\bx}_i^n\rangle = -\frac{1}{2}\sum_{ij} \phi_{ij} |{\bx}_j^n-{\bx}_i^n|^2.
\]
    Since $\phi_{ij}$,  ${\bx}_i^n$ and by assumption, the tail $\sum_m^\infty |\bx_i^{n+1}-{\bx}_i^n|$ are bounded, we conclude that the sum on the right converges to zero
\[
\phi_{ij} |{\bx}_j^n-{\bx}_i^n|^2 \stackrel{n \rightarrow \infty}{\longrightarrow} \phi(\con{\bx}_j-\con{\bx}_i|) |\con{\bx}_j-\con{\bx}_i|^2 =0.
\]
Hence, either $\con{\bx}_j$ and $\con{\bx}_i$ are in separate clusters,  $|\con{\bx}_j -\con{\bx}_i|> \D$ or else, they are in the limiting point of the same cluster, say $i,j \in C_\ell$ so that $\con{\bx}_j =\con{\bx}_i$. 
\end{proof}

We now turn our attention to the convergence toward consensus for the discrete dynamics (\ref{eq:consensus_discrete}). As for the continuous dynamics (\ref{eqs:opinion}), there exists a Lyapunov functional energy for the dynamics under the additional assumption that the influence function $\phi$ is non-increasing. Consequently, we deduce the analog of theorem \ref{thm:con_non-symmetric} for the discrete dynamics.
\begin{theorem} \label{thm:E_decay_discrete}
  Let $\P^n=\{\bx^n_k\}_k$ be the solution of the discrete opinion dynamics  \eqref{eq:consensus_discrete} with non-increasing, compactly supported influence function $\Supp=[0,\D)$. If $\P^n$ remains uniformly connected for any $n$, then $\P^n$ converges to a consensus.
\end{theorem}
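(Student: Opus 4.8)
The plan is to carry the proof of Theorem~\ref{thm:con_non-symmetric} over to the fully discrete setting: build a Lyapunov functional which is non-increasing along \eqref{eq:consensus_discrete}, read off a summable dissipation, upgrade it via uniform connectivity to decay of the diameter, and conclude consensus from the nested convex hulls.

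The candidate functional is the discrete analogue of the one in section~\ref{sec:local_non_symm_od},
\[
  {\mathcal E}^n := \sum_{i,j}\Phi\!\left(|\bx_i^n-\bx_j^n|\right), \qquad \Phi(r):=\int_{s=0}^{r} s\,\phi(s)\,ds,
\]
and the first --- and, I expect, the only genuinely delicate --- step is to prove that ${\mathcal E}^{n+1}\le {\mathcal E}^n$ with a quantitative drop
\[
  {\mathcal E}^{n}-{\mathcal E}^{n+1}\;\gtrsim\;\sum_i \deg_i^n\,|\bx_i^{n+1}-\bx_i^n|^2 .
\]
Here is where the hypothesis that $\phi$ is non-increasing enters. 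In continuous time the dissipation identity $\tfrac{d}{dt}{\mathcal E}=-2\sum_i\deg_i|\dot\bx_i|^2$ holds for an arbitrary $\phi$, but when one finite-differences $\Phi(|\bx_i^{n+1}-\bx_j^{n+1}|)-\Phi(|\bx_i^{n}-\bx_j^{n}|)=\int s\phi(s)\,ds$ over the interval of endpoints, pairs this with the update $\deg_i^n(\bx_i^{n+1}-\bx_i^n)=\sum_j\phi_{ij}^n(\bx_j^n-\bx_i^n)$, and symmetrizes using $\phi_{ij}=\phi_{ji}$, a remainder survives whose sign is controlled precisely when the weights $\phi_{ij}=\phi(|\bx_i-\bx_j|)$ do not grow as distances contract, i.e. when $\phi$ is non-increasing. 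Making this estimate clean --- via a mean-value bound on $\Phi$ together with $\Phi'(r)=r\phi(r)$ --- is the main obstacle; it is bookkeeping rather than anything conceptual, and it is also why a decreasing $\phi$ is genuinely needed in the discrete model (cf. the two-nearest-neighbor counterexample).

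Granting the decrement, the rest is routine and parallels \eqref{eq:veebound}--\eqref{eq:unii}. Since $0\le{\mathcal E}^n$ is non-increasing, $\sum_{n}\sum_i\deg_i^n|\bx_i^{n+1}-\bx_i^n|^2\le{\mathcal E}^0<\infty$, hence $\sum_i\deg_i^n|\bx_i^{n+1}-\bx_i^n|^2\to 0$. Testing \eqref{eq:consensus_discrete} against $\bx_i^n$ and summing gives, using $\deg_i^n\le N$ and $|\bx_i^n|\le\max_i|\bx_i^0|$ (which follows from $\Omega(n+1)\subset\Omega(n)$),
\[
  \tfrac12\sum_{i,j}\phi_{ij}^n|\bx_i^n-\bx_j^n|^2=\sum_i\deg_i^n\langle \bx_i^n-\bx_i^{n+1},\bx_i^n\rangle\le N\max_i|\bx_i^0|\,\Big(\sum_i\deg_i^n|\bx_i^{n+1}-\bx_i^n|^2\Big)^{1/2}\stackrel{n\rightarrow\infty}{\longrightarrow}0 .
\]
Uniform connectivity then upgrades this: choosing for each $(i,j)$ a path $\Gamma_{ij}$ with $\phi_{k_\ell,k_{\ell+1}}^n\ge\mu$ along it, the discrete version of \eqref{eq:unii} gives $\tfrac{\mu}{N}\,\dm{\bx^n}^2\le\sum_{i,j}\phi_{ij}^n|\bx_i^n-\bx_j^n|^2\to 0$, so $\dm{\bx^n}\to 0$.

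Finally, since the convex hulls $\Omega(n)$ are nested with diameters $\dm{\bx^n}\to 0$, their intersection is a single point $\con{\bx}$ and $|\bx_i^n-\con{\bx}|\le\dm{\bx^n}\to 0$ for every $i$: this is the claimed consensus. One also recovers an explicit rate: once $\dm{\bx^{n_0}}\le\D/2$ --- which must occur for some $n_0$ --- monotonicity of the diameter keeps $\dm{\bx^n}\le\D/2$ thereafter, so all interactions become active, $\phi_{ij}^n\ge\phi(\D/2)=:m>0$, and Theorem~\ref{thm:expo_cv_discreet} applied from step $n_0$ on yields $\dm{\bx^n}\le(1-m)^{\,n-n_0}\dm{\bx^{n_0}}$.
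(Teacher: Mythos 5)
Your overall route is the paper's: the same functional ${\mathcal E}^n=\sum_{ij}\Phi(|\bx_j^n-\bx_i^n|)$, a summable dissipation, the connectivity lower bound $\frac{\mu}{N}\dm{\bx^n}^2\le\sum_{ij}\phi_{ij}^n|\bx_i^n-\bx_j^n|^2$, and the endgame via theorem \ref{thm:expo_cv_discreet} once $\dm{\bx^{n_0}}\le\D/2$ --- all of which matches the paper, which at that stage simply says ``proceed as in the proof of theorem \ref{thm:con_non-symmetric}''. The genuine gap is that you never prove the one statement that is actually new in the fully discrete setting, namely the Lyapunov decrement ${\mathcal E}^{n+1}-{\mathcal E}^n\lesssim -\sum_i \deg_i^n|\bx_i^{n+1}-\bx_i^n|^2$: you declare it ``bookkeeping'' and defer it, and the hint you offer (a mean-value bound on $\Phi$ via $\Phi'(r)=r\phi(r)$) does not obviously close it, since a mean value in the variable $r$ produces a weight $\xi\phi(\xi)$ at an uncontrolled intermediate distance $\xi$, which cannot be paired with the update identity $\deg_i^n(\bx_i^{n+1}-\bx_i^n)=\sum_j\phi_{ij}^n(\bx_j^n-\bx_i^n)$, whose coefficients are $\phi$ evaluated at the \emph{old} configuration.

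The mechanism the paper uses --- and the precise point where ``$\phi$ non-increasing'' enters --- is concavity in the \emph{squared} distance: write $\Phi(r)=\varphi(r^2)$ with $\varphi(y)=\tfrac12\int_0^y\phi(\sqrt{s})\,ds$, which is concave exactly because $\phi$ is non-increasing, so the tangent-line inequality gives
\[
\Phi\left(|\bx_j^{n+1}-\bx_i^{n+1}|\right)-\Phi\left(|\bx_j^{n}-\bx_i^{n}|\right)\;\le\;\tfrac12\,\phi_{ij}^n\left(|\bx_j^{n+1}-\bx_i^{n+1}|^2-|\bx_j^{n}-\bx_i^{n}|^2\right),
\]
with the weight $\phi_{ij}^n$ taken at time $n$; then the identity $|{\bf a}|^2-|{\bf b}|^2=\langle{\bf a}-{\bf b},{\bf a}+{\bf b}\rangle$, the symmetry $\phi_{ij}=\phi_{ji}$, and the Krause identity \eqref{eq:Krause_equality} produce the quadratic drop in $|\bx_i^{n+1}-\bx_i^n|$. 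Without this step your proposal reduces the theorem to precisely the estimate it is supposed to establish, so as written it is incomplete; once the concavity inequality is inserted, the remainder of your argument (summability of the dissipation, the test against $\bx_i^n$ with $\deg_i^n\le N$ and the nested hulls, the connectivity upgrade, and the rate $(1-m)^{\,n-n_0}$) is correct and coincides with the paper's.
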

\begin{proof} First, we prove that the energy functional ${\mathcal E}^n$ is also a Lyapunov function for the discrete dynamics:
  \begin{equation}
    \label{eq:V_potential_discreet}
    {\mathcal E}^n := \sum_{ij} \Phi(|{\bf x}_j^n-{\bf x}_i^n|), \quad \Phi(r)=\int_0^r s\phi(s)ds.
  \end{equation}
  Introducing $\varphi(r^2)=\Phi(r)$, we have $\varphi(r) = \int_0^{\sqrt{r}} s\phi(s)ds=\frac{1}{2}\int_0^r \phi(\sqrt{y})dy$. By assumption $\phi$ is non-increasing, thus $\varphi$ is concave-down. Therefore,
  \begin{eqnarray*}
    {\mathcal E}^{n+1} - {\mathcal E}^n &=& \sum_{ij} \varphi(|{\bx}_j^{n+1}-{\bx}_i^{n+1}|^2) \,-\, \varphi(|{\bf x}_j^n-{\bf x}_i^n|^2) \\
    &\leq& \frac{1}{2} \sum_{ij} \phi(|{\bx}_j^n-{\bx}_i^n|) \big( |{\bx}_j^{n+1}-{\bf x}_i^{n+1}|^2 \,-\,|{\bx}_j^n-{\bx}_i^n|^2  \big).
  \end{eqnarray*}
  Using $|{\bf a}|^2-|{\bf b}|^2=\langle{\bf a}-{\bf b}\,,\,{\bf a}+{\bf b}\rangle$, we deduce:
  \begin{eqnarray*}
    {\mathcal E}^{n+1} - {\mathcal E}^n &\leq& \frac{1}{2} \sum_{ij} \phi_{ij} \langle\Delta_t{\bf x}_j^n-\Delta_t{\bx}_i^n\,,\, {\bx}_j^{n+1}-{\bx}_i^{n+1} +
    {\bx}_j^n-{\bx}_i^n\rangle\\
    &=&  \sum_{ij} \phi_{ij} \langle\Delta_t{\bx}_j^n\,,\, {\bx}_j^{n+1}-{\bx}_i^{n+1} + {\bf x}_j^n-{\bx}_i^n\rangle,
  \end{eqnarray*}
  since $\phi_{ij}=\phi_{ji}$. Writing ${\bx}_j^{n+1} =  {\bx}_j^n + \Delta_t{\bx}_j^n$, we obtain:
  \begin{eqnarray*}
    {\mathcal E}^{n+1} - {\mathcal E}^n &\leq& \sum_{ij} \phi_{ij} \langle\Delta_t{\bx}_j^n\,,\, 2({\bx}_j^n-{\bx}_i^n) + \Delta_t{\bx}_i^n - \Delta_t{\bx}_j^n\rangle.
  \end{eqnarray*}
  Combining with the equality:
  \begin{equation}
    \label{eq:Krause_equality}
    \sum_j \phi_{ij} \Delta_t{\bf x}_i^n = \sum_j \phi_{ij} ({\bf x}_j^n-{\bf x}_i^n).
  \end{equation}
  we conclude
  \begin{displaymath}
    {\mathcal E}^{n+1} - {\mathcal E}^n \leq \sum_{ij} \phi_{ij} \langle\Delta_t{\bx}_i^n\,,\, -\Delta_t{\bx}_i^n - \Delta_t{\bx}_j^n\rangle
   = -\sum_{ij} \phi_{ij} |\Delta_t{\bx}_i^n|^2,
  \end{displaymath}
  where we use once again the symmetry of the coefficients $\phi_{ij}$. Thus, ${\mathcal E}^n$ is decaying.

  Now, we would like to combine the decay of ${\mathcal E}^n$ and the strong connectivity of $\P^n$. Noting $\sigma_i = \sum_j \phi_{ij}$, the equality (\ref{eq:Krause_equality}) yields:
  \begin{displaymath}
    \frac{1}{2} \sum_{i,j} \phi_{ij} |\bx_j^n-\bx_i^n|^2 = \sum_i \sigma_i \langle \bx_i^n,\Delta_t \bx_i^n \rangle \leq {N \max_i |\bx_i^0|} \sqrt{\sum_i \sigma_i |\Delta_t \bx_i^n|^2},
  \end{displaymath}
  Thus,
  \begin{displaymath}
    {\mathcal E}^{n+1} - {\mathcal E}^n \leq - C_0^2 \left(\sum_{i,j} \phi_{ij}|\bx_j^n-\bx_i^n|^2\right)^2,\qquad C_0 = \frac{1}{2N \max_i |\bx_i(0)|}. 
  \end{displaymath}
  Summing in $n$, we deduce that the sum $\sum_{i,j} \phi_{ij}|\bx_j^n-\bx_i^n|^2$ becomes arbitrarily small. To conclude, we proceed as in the proof of theorem \ref{thm:con_non-symmetric}.
\end{proof}

%%%%%%%%%%%%%%%%%%%%%%%%%%%%%%%%%%%%%%%%%%%%%%%%%%%%
\subsection{Numerical simulations of discrete dynamics}
%%%%%%%%%%%%%%%%%%%%%%%%%%%%%%%%%%%%%%%%%%%%%%%%%%

In this section we illustrate the difference between the continuous opinion model (\ref{eq:opinion_formationb}) and its discrete version (\ref{eq:consensus_discrete}). To this end, we run in parallel numerical simulations of the discrete and continuous model subject to the same initial conditions.

First, we run a simulation with an influence function $\phi=\chi_{[0,1]}$ (figure \ref{fig:discrete_ratio1}). Discrete and continuous dynamics are very similar, except that there are three  \emph{branches} in the continuous dynamics which are not present in the discrete dynamics. For this reason, at the end of the simulation, we count $4$ clusters in the discrete dynamics and only $3$ in the continuous version.

Next we use the influence function (\ref{eq:phi_step}) $\phi=a\chi_{[0,1/\sqrt{2}]} + b\chi_{[1/\sqrt{2},1]}$ with $b/a=10$.
Here, the discrete and continuous dynamics give very different results shown in figure \ref{fig:discrete_ratio10}. As we have seen previously, the continuous dynamics converges to a distribution with uniformly spaced clusters and then reach a consensus. In contrast, the discrete dynamics does not stabilize. Order between the opinions $\{{\bf x}_i\}_i$ is no longer preserved, trajectories do cross. Even though the total number of clusters has been diminished with $b/a=10$ (from $4$ to $3$ clusters), the effect of the ratio $b/a$ on the clustering formation is less pronounced in the discrete dynamics.

\begin{figure}[ht]
  \centering
  \includegraphics[scale=.38]{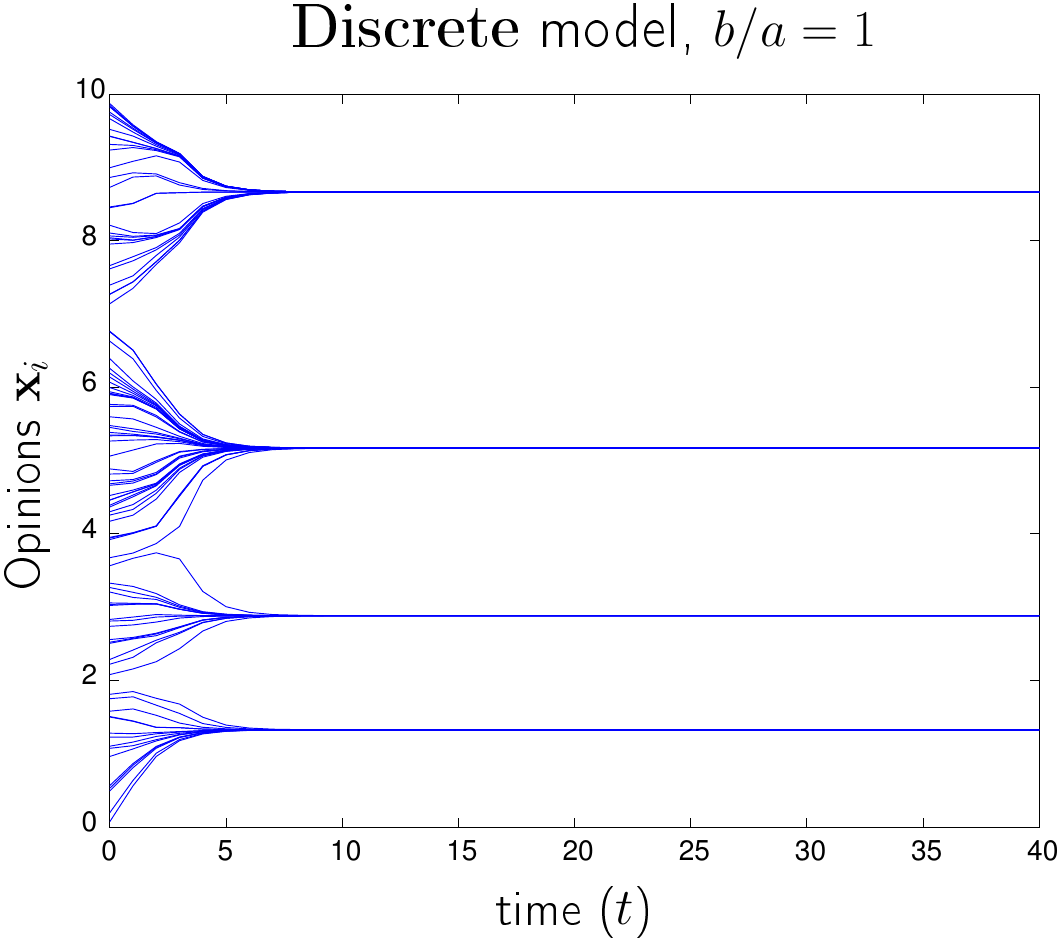}
  \includegraphics[scale=.38]{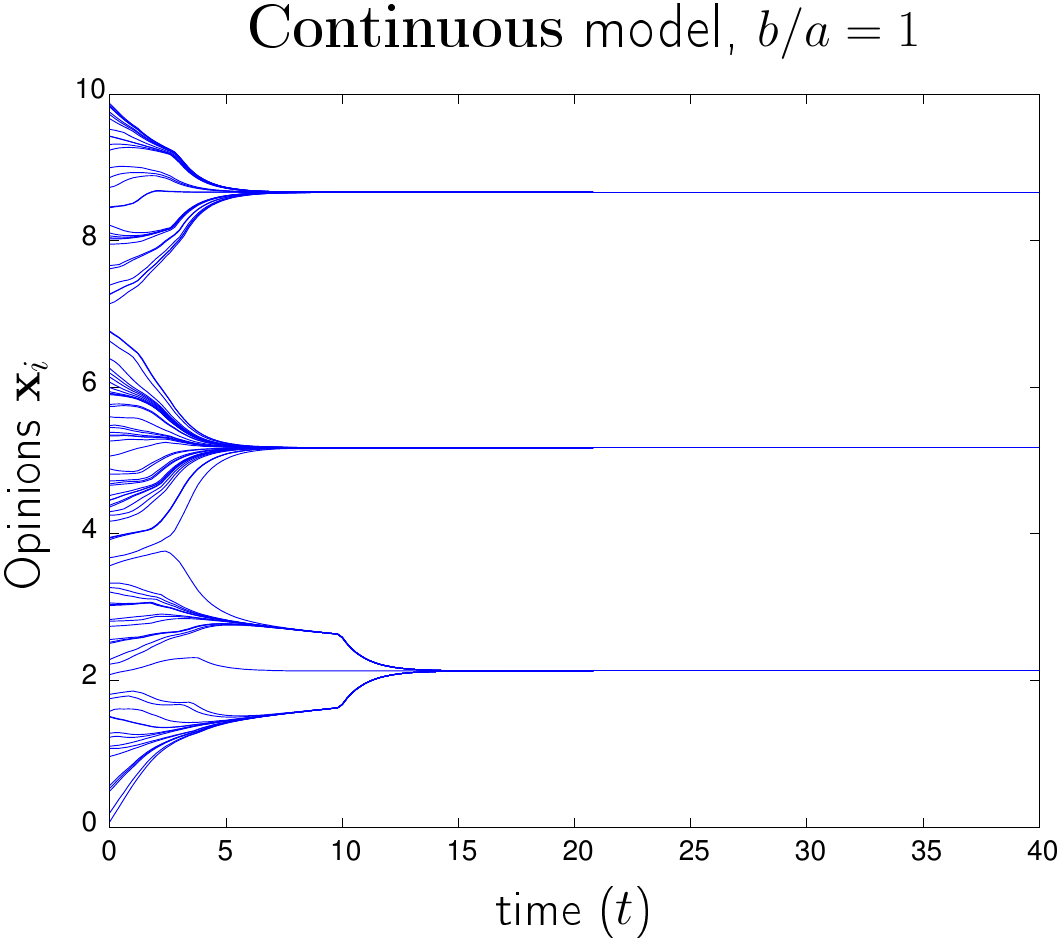}
  \caption{{\small Simulations of the discrete ({\bf Left} figure) and continuous dynamics ({\bf Right} figure) with $\phi=\chi_{[0,1]}$  starting with the same initial condition. Although the two simulations are very similar, the discrete dynamics yields $4$ clusters whereas the continuous dynamics gives $3$. We use a time discretization of $\Delta t=.05$ to simulate the continuous dynamics.}}
  \label{fig:discrete_ratio1}
\end{figure}

\begin{figure}[ht]
  \centering
  \includegraphics[scale=.38]{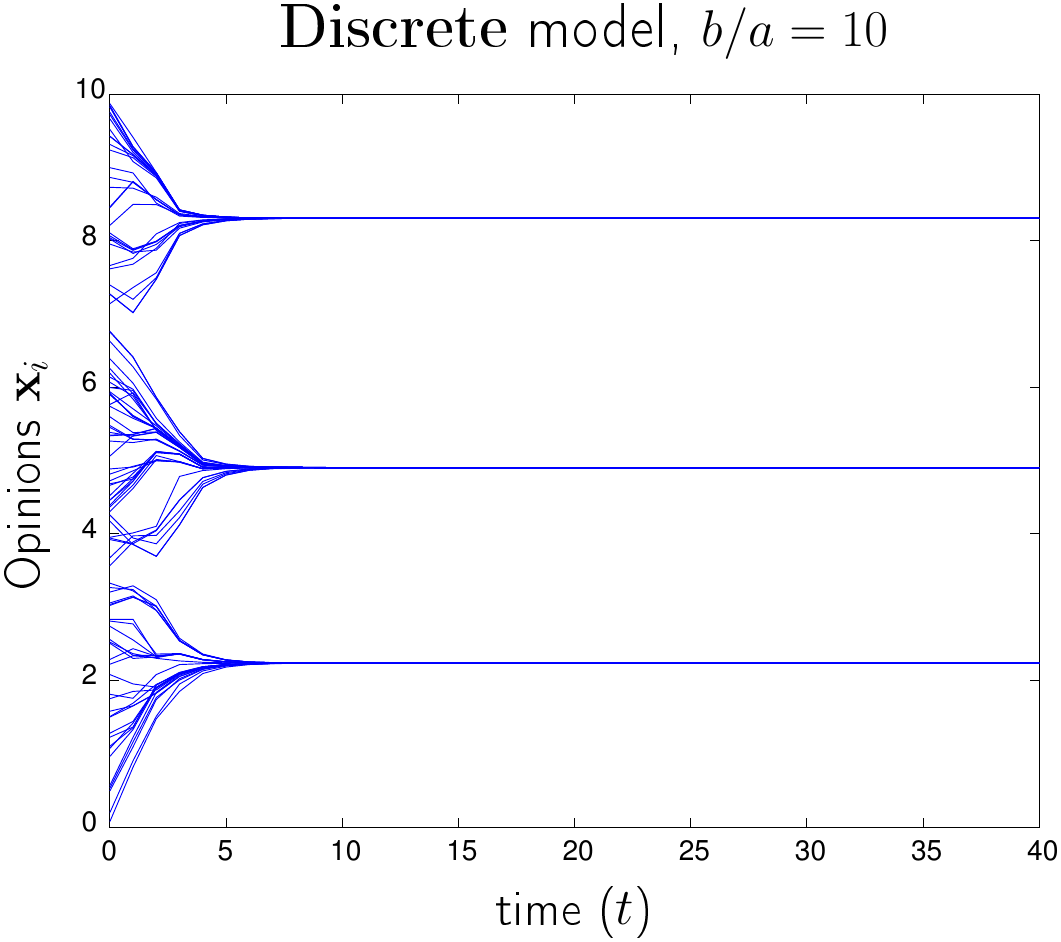}
  \includegraphics[scale=.38]{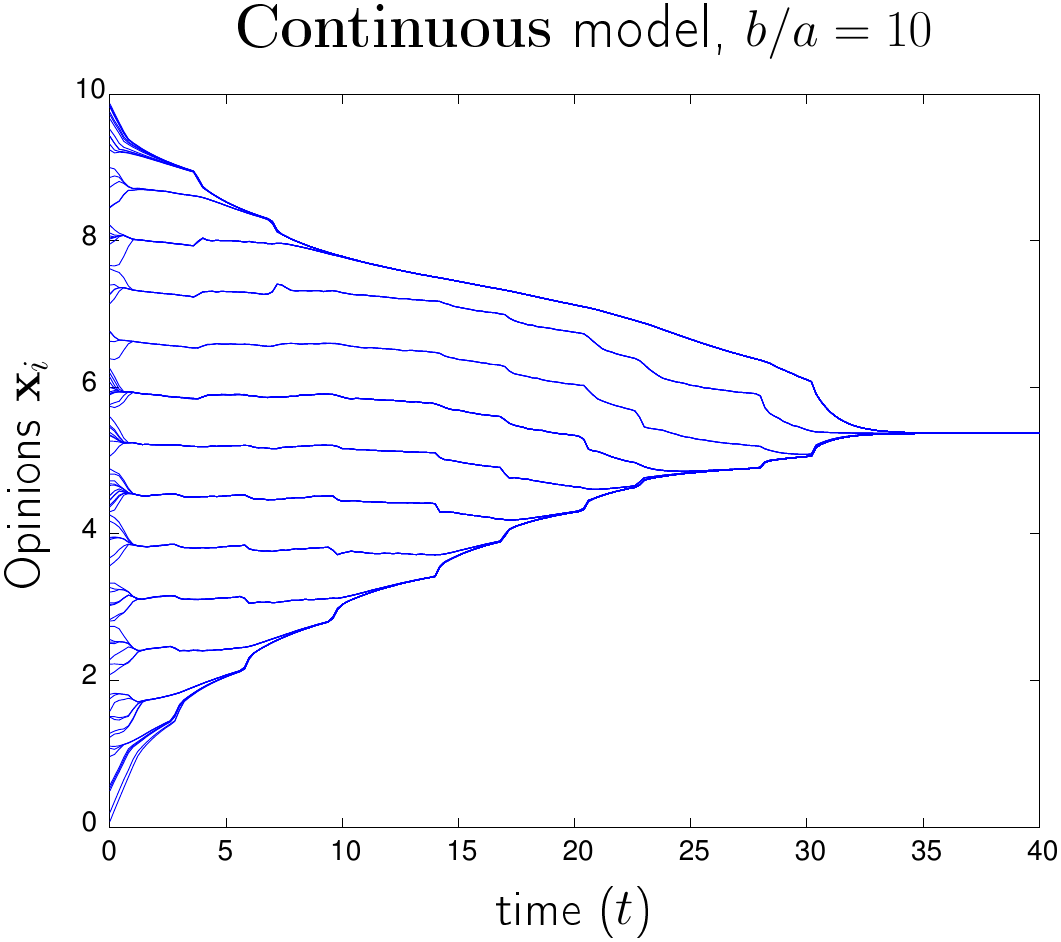}
  \caption{{\small Simulations of the discrete ({\bf Left} figure) and continuous dynamics ({\bf Right} figure) with $\phi=.1 \chi_{[0,1/\sqrt{2}]}+ \chi_{[1/\sqrt{2},1]}$  starting with the same initial condition. In contrast with figure \ref{fig:discrete_ratio1}, the two models produce very different output. There is no uniformly spaced formation in the discrete model, we only observe cluster formation.}}
  \label{fig:discrete_ratio10}
\end{figure}

%%%%%%%%%%%%%%%%%%%%%%%%%%%%%%%%%%%%%%%%%%%%%%%%%%%%%%%%%%%%
\section{Mean-field limits: self-organized hydrodynamics}\label{sec:hydro}
\setcounter{equation}{0}
\setcounter{figure}{0}
%%%%%%%%%%%%%%%%%%%%%%%%%%%%%%%%%%%%%%%%%%%%%%%%%%%%%%%%%%%

When the number of agents $N$ is large, it is convenient to describe the evolution of the resulting large dynamical systems as mean-field equation. We limit ourselves to a few classic general  references on this topic \cite{CIP_1994,golse_2003,spohn_1991}, and a few recent references in the  context of opinion hydrodynamics \cite{canuto_krause_2012,toscani_2006}, and in flocking hydrodynamics
\cite{carrillo_asymptotic_2010a,carrillo_models_2010b,degond_2008,ha_particle_2008,KMT_2012,MOA_2010,motsch_tadmor_new_2011}.

%%%%%%%%%%%%%%%%%%%%%%%%%%%%%%%%%%%
\subsection{Opinion hydrodynamics} 
%%%%%%%%%%%%%%%%%%%%%%%%%%%%%%%%%%
To derive the mean-field limit of the opinion dynamics model (\ref{eq:opinion_formationb}), we introduce the so-called empirical distribution $\rho(t,\bx)$:
\begin{displaymath}
  \rho(t,\bx): = \frac{1}{N} \sum_{j=1}^N \delta_{\bx_j(t)}(\bx),
\end{displaymath}
where $\delta$ is a Dirac mass and $\{\bx_j(t)\}_j$ is the solution of the consensus model (\ref{eq:opinion_formationb}). Expressed in terms of this empirical distribution,  the non-symmetric  model (\ref{eq:opinion_formationb}) (with $\alpha=1$) reads,
\begin{equation}
  \label{eq:consensus_empiricalRho}
  \dot{\bx}_i = \frac{\int_\by \phi(|\by-\bx_i|)(\by-\bx_i)\rho(t,\by)\,d\by}{\int_\by\phi(|\by-\bx_i|)\rho(t,\by)\,d\by} = \frac{(\phi(|\by|)\by * \rho) (\bx_i)}{(\phi(|\by|)* \rho)(\bx_i)}.
\end{equation}
This equation describes the characteristics of the density $\rho$. Indeed, integrating $\rho$ against a test function $\varphi$ yields\footnote{$(\cdot,\cdot)$ denotes the duality bracket between distributions and test functions}
\begin{displaymath}
  \frac{d}{dt} \big( \rho,\varphi\big) = \frac{d}{dt} \left( \frac{1}{N} \sum_{j=1}^N\varphi(\bx_j(t))\right)= \frac{1}{N} \sum_j^N \langle \dot{\bx}_j(t), \nabla_\bx\varphi(\bx_j(t))\rangle.
\end{displaymath}
Using the expression (\ref{eq:consensus_empiricalRho}), we deduce:
\begin{eqnarray*}
  \frac{d}{dt} (\rho,\varphi) &=& \frac{1}{N} \sum_{j=1}^N  \left\langle \frac{\phi(|\by|)\by * \rho (\bx_j)}{\phi(|\by|)* \rho (\bx_j)}, \nabla_\bx\varphi(\bx_j(t))\right\rangle = \left(\rho,\left\langle \frac{\phi(|\by|)\by * \rho }{\phi(|\by|)* \rho},\nabla_\bx\varphi\right\rangle\right) \\
  &=& \left(-\nabla_\bx \cdot\left(\frac{\phi(|\by|)\by * \rho }{\phi(|\by|)* \rho} \; \rho\right)\!,\varphi\right).
\end{eqnarray*}
Thus, $\rho=\rho(t,\bx)$ satisfies a continuum transport equation, 
\begin{subequations}\label{eqs:pdf_consensus}
\begin{equation}
  \label{eq:pdf_consensusa}
  \qquad \quad \partial_t \rho +  \nabla_\bx\cdot \big(\rho \bu\big) = 0 \quad \text{ with} \quad \bu(\bx) =  \frac{\int_\by \phi(|\by\!-\!\bx|)(\by\!-\!\bx)\,\rho(\by)\,d\by }{\int_\by \phi(|\by\!-\!\bx|)\,\rho(\by)\,d\by}.
\end{equation}
This is the hydrodynamic description of the agent-based opinion model (\ref{eq:opinion_formationb}). 
Similarly, the opinion hydrodynamics of the corresponding \emph{symmetric} model (\ref{eq:opinion_formationa}) (with $\alpha=1$) amounts to the aggregation model \cite{BCL2009,canuto_krause_2012}
\begin{equation}
  \label{eq:pdf_consensusb}
\qquad \quad \partial_t \rho +  \nabla_\bx\cdot \big(\rho \bu\big) = 0 \quad \text{ with} \quad \bu(\bx) =  \nabla \Phi*\rho
\end{equation}
\end{subequations}
We note that the transport equations (\ref{eqs:pdf_consensus}) are  non-linear due to the dependence of the velocity field $\bu=\bu(\rho)$. Main features of the  particle description for opinion dynamics (\ref{eqs:opinion}) carry over the  hydrodynamic model (\ref{eqs:pdf_consensus}). Thus, for example, the symmetric  model (\ref{eq:pdf_consensusb})  preserve the center of mass,  $\frac{d}{dt} \left(\int_\bx \bx\rho(t,\bx)\,d\bx \right) = 0 $ where the non-symmetric model (\ref{eq:pdf_consensusa}) does not.     
We distinguish between the two cases of global and local interactions.\newline
The existence of regular solutions of the \emph{symmetric}  aggregation model (\ref{eq:pdf_consensusb})
for bounded decreasing $\phi$'s such that $|\phi'(r)r|\lesssim \phi(r)$ was proved in \cite{BCL2009}.
This holds independently whether $\phi$ is global or not. 
Moreover, if the kernel $\phi$ is globally supported, then one can argue along the lines of the underlying  agent-based model   (\ref{eq:consensus_empiricalRho}), to  prove  convergence of the hydrodynamics toward a consensus, that is,  $\rho(t,\bx)$ converges to a single point asymptotically in time. 
If $\phi$ is compactly supported, however,  then the velocity field $\bu$ need not be continuous with respect to $\rho$  due to the singularity when $\int_\by \phi(|\by\!-\!\bx|)\,\rho(\by)\,d\by=0$. Then, existence and uniqueness of solution of the non-symmetric model (\ref{eq:pdf_consensusa}) cannot be obtained through a standard Picard's iteration argument. 
The large time  behavior of the dynamics in this local setup is completely open. As in the agent-based dynamics, the generic solution  $\rho(t,\bx)$ is expected to concentrate in a finitely many clusters, or ``islands''; in particular, under appropriate assumption on the persistence of connectivity among these islands, one may expect a consensus. Preliminary simulations show that cluster formation tends to persist for the hydrodynamic model, but analytical justification remains open. 

%%%%%%%%%%%%%%%%%%%%%%%%%%%%%%%%%%%
\subsection{Flocking hydrodynamics}  
%%%%%%%%%%%%%%%%%%%%%%%%%%%%%%%%%%%
We study the second-order flocking models (\ref{eqs:flocking}) in terms of  the  empirical distribution 
$f^N(t,{\bx},{\bv}) := \frac{1}{N} \sum_{j=1}^N \delta_{{\bx}_j(t)}(\bx) \otimes \delta_{{\bv}_j(t)}(\bv)$,
where $\delta_{\bx}\otimes\delta_{\bv}$ is the usual Dirac mass on the phase space $\mathbb{R}^d \times \mathbb{R}^d$.
Consider the non-symmetric particle model system for flocking  (\ref{eq:MT_model}): expressed in terms of $f^N$, it reads
\[
    \frac{d{\bx}_i}{dt} = {\bv}_i, \quad 
    \frac{d{\bv}_i}{dt} =\alpha F[f^N](\bx_i,\bv_i), \qquad 
F[f](\bx,\bv): = \alpha\frac{\int_{{\by},{\bw}} \phi(|{\by}\!-\!{\bx}|)\,({\bw}\!-\!{\bv}_i) \, f({\by},{\bw})\,d{\by} d{\bw}}{\int_{\by} \phi(|{\by}\!-\!{\bx}|)\,f({\by},{\bw})\,d{\by}d{\bw}},
\]
which leads to Liouville's equation,
  \begin{equation}
    \label{eq:evo_f}
    \partial_t f + {\bv}\cdot \nabla_{\bx} f + \nabla_{{\bv}}\cdot(F[f]\,f) = 0.
  \end{equation}
Integrating the empirical distribution $f^N$ in the velocity variable ${\bv}$ yields the hydrodynamic description of flocking, expressed in terms of the density and momentum distributions of particles, 
\begin{eqnarray*}
  \rho(t,{\bx}) = \int_{\bv} f(t,{\bx},{\bv})\,d{\bv} \qquad &\left(\text{corresponding to} \  \  \frac{1}{N} \sum_{j=1}^N \delta_{{\bx}_j(t)}({\bx})\right), \\
\rho(t,{\bx}){\bu}(t,{\bx})  =\int_{\bv} {\bv} f(t,{\bx},{\bv})\,d{\bv}  \qquad &\left(\text{corresponding to} \ \  \frac{1}{N} \sum_{j=1}^N \bv_j(t) \delta_{{\bx}_j(t)}({\bx})\right)
.
\end{eqnarray*}
Integrating the kinetic equation (\ref{eq:evo_f}) against the first moments $(1,{\bv})$ yields the system, cf.,  
 \cite{ha_particle_2008,carrillo_models_2010b,motsch_tadmor_new_2011},

\begin{subequations}\label{eqs:macro}
  \begin{eqnarray}
    \label{eq:macro_rho_tp}
    && \partial_t \rho + \nabla_{\bx} \cdot (\rho{\bu}) =0 \\
    \label{eq:macro_u_tp}
    && \partial_t (\rho {\bu}) + \nabla_{\bx}\cdot(\rho{\bu} \otimes {\bu} + {\bf P}) = \alpha \rho(\overline{{\bu}}-{\bu}).
  \end{eqnarray}
  The expression on the right reflects alignment: the tendency of agents with velocity ${\bu}$ to relax towards the local average velocity, $\overline{{\bu}}({\bx})$, dictated by the normalized influence function  $a({\bx},{\by})$,
  \begin{equation}
    \label{eq:over_u}
    \overline{{\bu}}({\bx}) := \int_{\by} a({\bx},{\by}) \rho({\by}) {\bu}({\by})\,d{\by}, \quad \int_{{\by}} a({\bx},{\by})\rho({\by})\,d{\by} = 1.
  \end{equation}
\end{subequations}
This  includes in particular, the hydrodynamic description of the symmetric and non-symmetric  flocking models, given respectively by 
\[
  a({\bx},{\by}) = \left\{\begin{array}{ll}
\phi(|{\by}-{\bx}|) & \text{C-S model} \ (\ref{eq:CS_model}),\\ \\
\displaystyle \frac{\phi(|{\by}-{\bx}|)}{\int_{{\by}}\phi(|{\by}-{\bx}|)\rho({\by})\,d{\by}} & \text{non-symmetric model} \ (\ref{eq:MT_model}).
\end{array}\right.
\]

The system (\ref{eqs:macro}) is not closed since the equation for $\rho{\bf u}$ (\ref{eq:macro_u_tp}) does depend on the third moment of $f$ which is encoded in the pressure term ${\bf P}:= \int_{\bv} ({\bv}-{\bu})\otimes({\bv}-{\bu})f(t,{\bx},{\bv})\,d{\bv}$. If we neglect the pressure (in other words,  assume a monophase distribution, $ f(t,{\bx},{\bv}) = \rho(t,{\bx})\, \delta_{{\bu}(t,{\bx})}({\bv})$ so that ${\bf P}\equiv 0$), then the flocking hydrodynamics  (\ref{eqs:macro}) is reduced to the closed system 
\begin{equation}\label{eq:macro_rho_noncons}
\left\{
\begin{array}{ll}
     \partial_t \rho + \nabla_{\bx} \cdot (\rho{\bu}) =0, \\ \\
        \partial_t {\bu} + ({\bu}\cdot\nabla_{\bx}){\bu} = \alpha (\overline{{\bu}}-{\bu}).
\end{array}
\right.
\end{equation}
The question of an emerging flock in (\ref{eq:macro_rho_noncons}) follows along the lines of our discussion on the underlying agent-based models (\ref{eqs:flocking}). The case of a \emph{global} influence function is rather well-understood:
in particular, regularity of the one-dimensional ``incompressible'' case, $\rho\equiv 1$, depends on initial critical threshold  \cite{LT_2001,ST_1992}. Flocking hydrodynamics governed by \emph{locally} supported influence function requires a more intricate analysis, due to the realistic presence of vacuum, \cite{tadmor_2014}. 
The hydrodynamic description of self-organized dynamics give rise to  systems like  (\ref{eq:macro_rho_noncons}) which involve nonlocal means. Questions of regularity and quantitative behavior of such systems provide a rich source for future studies.  

%%%%%%%%%%%%%%%%%%%%%%%%%%%%%%%%%
\section{Further reading on self-organized dynamics}\label{sec:further} 
%%%%%%%%%%%%%%%%%%%%%%%%%%%%%%%%%%%
In this paper we discussed  fundamental aspects which arise  in the context of 
flocking and opinion dynamics, as prototype models for self-organized dynamics. Specifically, we focused here on the emerging large-time behavior of self-alignment and we highlight  a few open questions aiming to attract further mathematical studies in this direction.  The  much broader subject of self-organized dynamics  lies at the crossroads of several fields. A comprehensive review of the subject is beyond the scope of this paper, in particular, as it continues to attract an increasing amount of attention reported in a rapidly growing literature. Instead, we  refer the interested reader to a selection of references outlined below.
As with all multidisciplinary fields, the work on  self-organized dynamics can be classified into several different categories. We shall mention five of them.

\noindent
{\bf Different disciplines}. A natural classification is offered by the underlying topic. Many models of self-organized dynamics are driven by examples from biology: these include aggregation of bacteria and amoeba \cite{Ben_2003,eftimie_2012,ha_levy_2009,KL_1993,Sh_2011}, dynamics of insects \cite{BCE_1991,couzin_franks_2003}, school of fish \cite{Aoki_1982, hemelrijk_self-organized_2008,youseff_2008}  flocking of birds \cite{Ballerini_2008,cavagna_et_al_2008i,cavagna_et_al_2008ii,cavagna_et_al_2010,cucker_emergent_2007,cucker_mathematics_2007,ha_particle_2008,reynolds_flocks_1987,T-T_1998,vicsek_novel_1995}, and related models in ecology \cite{grimm_individual-based_2005}.
Self-organized dynamics found its in many other areas,  from pedestrian and traffic dynamics \cite{He_2001,PT_2011}, social networks and economics \cite{DY_2000,He_2010,jackson_2010,merton_1954,McPherson_2001}, complex networks \cite{BHT_2013,DLBGL_2010,OSM_2004} and  opinion dynamics \cite{Ben_2005,CFL_2009,DeG_1974,duering_boltzmann_2009,EK_2001,hegselmann_opinion_2002,krause_discrete_2000,toscani_2006,WDA2005,Wei2006},  all the way to applications in marketing \cite{Ax_1984,Ax_1997}, production networks \cite{Ring_2012},  robotics \cite{CMB_2006,JE_2007,ZEP_2011} and materials \cite{Sill2000,MD2012}, and with somewhat more esoteric examples such as gossiping \cite{BGPS_2006}, collective motion at heavy metal concerts \cite{SBSC_2013} and self-organized phases in the Tour De France \cite{Tren_2013}.

\noindent
{\bf Different models}. Together with the different contexts, come different models of  self-organized dynamics.  We mention a few of the more notable ones: Krause model for opinion dynamics \cite{krause_discrete_2000} and the follow-up works in \cite{blondel_krauses_2009,canuto_krause_2012,hegselmann_opinion_2002,KurRam2011,lorentz_2007}, Axelrod models for marketing \cite{Ax_1984} and the  influential models for  ''flocking'' (at various ``levels'') of Aoki, Reynolds and Couzin \cite{Aoki_1982,couzin_franks_2003,couzin_2002,LX_2010,reynolds_flocks_1987,youseff_2008}, Vicsek et. al, \cite{vicsek_novel_1995} and the follow-up works in \cite{degond_2008,degond_2011,JLM_2003}, Cucker-Smale model \cite{cucker_emergent_2007,cucker_mathematics_2007} and  related works in \cite{Bialek_2013,carrillo_asymptotic_2010a,ha_simple_2009,ha_particle_2008,haskovec_2013,KMT_2012,motsch_tadmor_new_2011,Shen_2007}, and the StarFlag project \cite{Ballerini_2008,cavagna_et_al_2008i,cavagna_et_al_2008ii,cavagna_et_al_2010}.  

\noindent
{\bf Different scales}. Different models of self-organized dynamics are realized at different scales. As examples for agent-based models (also known as  Individual-Based Models (IBM)) we mention \cite{Ben_2005,couzin_franks_2003,grimm_individual-based_2005,krause_discrete_2000,Li_2008,OSFM_2007,reynolds_flocks_1987}.  Their mean-field limit  leads to a kinetic description
\cite{carrillo_milling_2009,carrillo_asymptotic_2010a,duering_boltzmann_2009,ha_particle_2008,toscani_2006} and macroscopic averaging then leads to hydrodynamic-scale description as in
\cite{BHW_2013,carrillo_models_2010b,degond_2008,degond_2011,eftimie_2012,KMT_2012,KL_1993,LRC_2000,MOA_2010,MD2012,tadmor_2014}.

\noindent
{\bf Different approaches}. In this paper, we focused our attention on mathematical aspects which explain the large time behavior of self-alignment models. The study of general models for self-organized dynamics includes several different approaches. Classified by the tools of the trade, we mention statistical mechanics \cite{Bialek_2013,CFL_2009,spohn_1991},  clustering and spectral theory of graphs \cite{BHW_2013,comaniciu_meir_2002,CMB_2006,JLM_2003,OSFM_2007}, optimization and control \cite{CS_2011,DLBG_2009,DLBGL_2010,JE_2007,JK_2010,OSM_2004,ZEP_2011}, game theory \cite{BHT_2013,Helbing_2009}, jump processes, \emph{nonlinear} Markov chains and  stochastic analysis  \cite{BHW_2013,FZ_2008,He_2010,vicsek_novel_1995}.

\noindent
{\bf Different patterns}. One of the most intriguing features of self-organized dynamics is the formation of different patterns.
In this paper, we limited ourselves to the simple pattern of ``consensus'' (or a ``flock'') but the format is much richer. We mention the example of swarming and mill-like vortices \cite{canizo_collective_2009,carrillo_milling_2009,carrillo_models_2010b,DCBC_2006,EK_2001,LRC_2000,Li_2008,LX_2010,Sh_2011,T-B_2004}, phase transition \cite{FL_2012,vicsek_novel_1995}, aggregation \cite{BHW_2013}, biotic colonies \cite{Ben_2003,KL_1993}, lattices \cite{OS_2006}, leaders \cite{couzin_effective_2005,Shen_2007}, shocks \cite{BRSW_2014,tadmor_2014} and related issues which arise in the context of control and stability {\cite{blondel_convergence_2005,DCBC_2006,JK_2010,Li_2008}.

\medskip\noindent 
Finally, we recommend on several reviews on self-organization \cite{camazine_self-organization_2001,CFL_2009,eftimie_2012,He_2010,Wei2006}, and  in particular,  the most recent comprehensive review of Vicsek and Zefeiris \cite{vicsek_2012}.

%%%%%%%%%%%%%%%%%%%%%%%%%%%%%%%%%%%%%%%%%%

\end{document}